\documentclass[reqno]{amsart}

\usepackage{amsmath}
\usepackage{amsfonts}
\usepackage{cases}
\usepackage{graphics}
\usepackage{epsfig}
\usepackage{amssymb}
\usepackage{amsthm}
\usepackage{amscd}
\usepackage[all]{xy}
\usepackage{latexsym}
\usepackage{xcolor}
\usepackage{graphicx}
\usepackage{multirow}
\usepackage{geometry}
\usepackage{color}
\usepackage{CJKutf8} 
\usepackage{bm}
\usepackage{mathrsfs}
\usepackage{mathtools}
\usepackage{hyperref}

\newtheorem{thm}{Theorem}[section]

\newtheorem{conj}[thm]{Conjecture}

\theoremstyle{definition}
\newtheorem{Def}[thm]{Definition}
\newtheorem{rem}[thm]{Remark}

\numberwithin{equation}{section}
\numberwithin{figure}{section}
\numberwithin{table}{section}

\def\rchi{{\hbox{\raise1.5pt\hbox{$\chi$}}}}

\def\I{\ifmmode\mathrm{I}\else I\fi}
\def\II{\ifmmode\mathrm{I\hspace{-1.2pt}I}\else I\hspace{-1.2pt}I\fi}
\def\III{\ifmmode\mathrm{I\hspace{-1.2pt}I\hspace{-1.2pt}I}\else I\hspace{-1.2pt}I\hspace{-1.2pt}I\fi}
\def\IV{\ifmmode\mathrm{I\hspace{-1.2pt}V}\else I\hspace{-1.2pt}V\fi}
\def\V{\ifmmode\mathrm{V}\else V\fi}
\def\VI{\ifmmode\mathrm{V\hspace{-1.2pt}I}\else V\hspace{-1.2pt}I\fi}

\newcommand{\bea}{\begin{eqnarray}}
\newcommand{\eea}{\end{eqnarray}}
\newcommand{\be}{\begin{equation}}
\newcommand{\ee}{\end{equation}}

\DeclareMathOperator*{\Res}{Res}

\begin{document}
\large
\setcounter{section}{0}

\title[Accessory Parameter, Voros Periods and classical conformal blocks]
{Accessory Parameter of Confluent Heun Equations, \\ 
Voros Periods and classical irregular conformal blocks}

\author[K.\ Iwaki]{Kohei Iwaki}
\address{Graduate School of Mathematical Sciences,  The University of Tokyo, Japan} 
\email{k-iwaki@g.ecc.u-tokyo.ac.jp}

\author[H.\ Nagoya]{Hajime Nagoya}
\address{School of Mathematics and Physics, Kanazawa University} 
\email{nagoya@se.kanazawa-u.ac.jp}

\author[A.\ Shukuta]{Ayato Shukuta}
\address{Graduate School of Mathematical Sciences,  The University of Tokyo, Japan} 
\email{shukuta-ayato306@g.ecc.u-tokyo.ac.jp}


\begin{abstract}
For the Heun differential equation and all of its confluent equations, we derive formal series expansions of the accessory parameters using the Voros periods. 
We then compare these expansions with the classical conformal blocks recently obtained by Bonelli–Shchechkin–Tanzini, and examine the Zamolodchikov-type conjecture expected to hold between them, allowing for irregular singularities. 
In particular, as an extension of the previous works of Mironov--Morozov, Piatek--Pietrykowski and Lisovyy--Naidiuk, we provide a detailed prescription for choosing cycles on the spectral curve that yield the Voros period which corresponds to the classical (regular or irregular) conformal blocks through the accessory parameter.
\end{abstract}




\maketitle

\allowdisplaybreaks

\tableofcontents

\setlength{\parskip}{0.1ex}


\section{Introduction}
\label{sec:intro}

{\footnotesize

\begin{table}[t]
  \begin{tabular}{c|c|c} \hline
     $J$ & Name of equation & $Q_{\rm J}$ \\ \hline \hline
\parbox[c][5.5em][c]{0em}{}
{VI}
&
Heun
& ~~~\quad
\begin{minipage}{.42\textwidth}
\begin{center}
${\displaystyle \frac{\theta_0^2 - \frac{\hbar^2}{4}}{x^2} 
+ \frac{\theta_1^2 - \frac{\hbar^2}{4}}{(x-1)^2} + \frac{\theta_t^2 - \frac{\hbar^2}{4}}{(x-t)^2}}$  \\[+.5em]
${\displaystyle 
+ \frac{\theta_\infty^2 - \theta_0^2 - \theta_1^2 - \theta_t^2 + \frac{\hbar^2}{2}}{x(x-1)} 
- \frac{{\mathscr E}}{x(x-1)(x-t)} }$ \\[-.5em]
~~
\end{center}
\end{minipage}
\\\hline
\parbox[c][3.3em][c]{0em}{}
{V}
&
confluent Heun
& ~~~\quad
\begin{minipage}{.42\textwidth}
\begin{center}
$\displaystyle \frac{\theta_0^2 - \frac{\hbar^2}{4}}{x^2}
+ \frac{\theta_t^2 - \frac{\hbar^2}{4}}{(x-t)^2} 
+ \frac{1}{4} + \frac{\theta_\infty}{x} - \frac{{\mathscr E}}{x(x-t)}$ 
\end{center}
\end{minipage}
\\\hline
\parbox[c][3.3em][c]{0em}{}
{IV}
&
biconfluent Heun
& ~~~\quad
\begin{minipage}{.42\textwidth}
\begin{center}
$\displaystyle \frac{\theta_0^2 - \frac{\hbar^2}{4}}{x^2} + \frac{{\mathscr E}}{x} + 2 \theta_\infty + (x+t)^2$
\end{center}
\end{minipage}
\\\hline
\parbox[c][3.3em][c]{0em}{}
${\rm III}_1$
&
doubly confluent Heun
& ~~~\quad
\begin{minipage}{.42\textwidth}
\begin{center}
$\displaystyle \frac{t^2}{4x^4}  + \frac{t \theta_0}{x^3} - \frac{{\mathscr E}}{x^2} + \frac{\theta_\infty}{x} + \frac{1}{4}$ 
\end{center}
\end{minipage}
\\\hline

\parbox[c][3.3em][c]{0em}{}
${\rm III}_2$
&
reduced doubly confluent Heun
& ~~~\quad
\begin{minipage}{.42\textwidth}
\begin{center}
$\displaystyle \frac{t}{x^3} - \frac{{\mathscr E}}{x^2} + \frac{\theta_\infty}{x} + \frac{1}{4}$ 
\end{center}
\end{minipage}
\\\hline
\parbox[c][3.3em][c]{0em}{}
${\rm III}_3$
&
doubly reduced doubly confluent Heun
& ~~~\quad
\begin{minipage}{.42\textwidth}
\begin{center}
$\displaystyle \frac{t}{x^3} - \frac{{\mathscr E}}{x^2} + \frac{1}{x}$ 
\end{center}
\end{minipage}
\\\hline

\parbox[c][3.3em][c]{0em}{}
{II}
&
triconfluent Heun
& ~~~\quad
\begin{minipage}{.42\textwidth}
\begin{center}
$\displaystyle (x^2+t)^2 + 2 \theta_\infty x + {\mathscr E}$ 
\end{center}
\end{minipage}
\\\hline

\parbox[c][3.3em][c]{0em}{}
${\rm II}'$
&
reduced biconfluent Heun
& ~~~\quad
\begin{minipage}{.42\textwidth}
\begin{center}
$\displaystyle \frac{\theta_0^2 - \frac{\hbar^2}{4}}{x^2}
+ \frac{{\mathscr E}}{x} + t + x$ 
\end{center}
\end{minipage}
\\\hline

\parbox[c][3.3em][c]{0em}{}
{I}
&
reduced triconfluent Heun
& ~~~\quad
\begin{minipage}{.42\textwidth}
\begin{center}
$\displaystyle 4x^3 + 2 t x + {\mathscr E}$ 
\end{center}
\end{minipage}
\\\hline

  \end{tabular}

\bigskip
  \caption{List of potentials of the Heun equation $H_{\rm J}$ from \cite{LN21}. 
  See Appendix \ref{section:introduce-hbar} for the details of introduction of $\hbar$. 
   }
  \label{table:heun}
\end{table}

}

The Heun equation is the canonical form of a Fuchsian ordinary differential equation 
with four regular singular points on the Riemann sphere. 
Its explicit expression (in the Schr\"odinger-form) is given as 
\begin{equation}
H_{\rm VI} ~:~ 
\left( \hbar^2 \frac{d^2}{dx^2} - Q_{\rm VI}(x) \right) \psi = 0
\end{equation}
\begin{equation}
Q_{\rm VI} = \frac{\theta_0^2 - \frac{\hbar^2}{4}}{x^2} 
+ \frac{\theta_1^2 - \frac{\hbar^2}{4}}{(x-1)^2} + \frac{\theta_t^2 - \frac{\hbar^2}{4}}{(x-t)^2}
+ \frac{\theta_\infty^2 - \theta_0^2 - \theta_1^2 - \theta_t^2 + \frac{\hbar^2}{2}}{x(x-1)} 
- \frac{{\mathscr E}}{x(x-1)(x-t)}.
\end{equation}
Here, $\theta_s$ represents the exponent of the local monodromy at each regular singular point 
$s = 0,1, t$ and $\infty$, whereas ${\mathscr E}$ is the so-called accessory parameter, 
which is not determined solely by the local monodromy data. 
Table \ref{table:heun} shows that the (Schr\"odinger-form of) confluent Heun equations $H_{\rm J}$
which are obtained by confluence of singular points from the equation $H_{\rm VI}$ 
and possess irregular singular points, also admit an accessory parameter ${\mathscr E}$ (cf.\, \cite{SL2000}).
The parameter $\hbar$ is a formal small parameter introduced 
for the purpose of analyzing the equation through exact WKB analysis \cite{KT98}.

Because of the presence of the accessory parameter ${\mathscr E}$, 
the monodromy/Stokes matrices of these confluent families of Heun equations 
are far more difficult to analyze than that of the confluent family 
obtained from Gauss hypergeometric differential equation. 
Nevertheless, the monodromy and connection problems associated with the Heun equation 
continue to be important topics of research, as seen, for example, 
in recent applications to black hole analysis; see \cite{NdC14, NMLC18, AGH20, Hatsuda20} for example.
In addition, \cite{Wak16} also explores intriguing connections 
between the Heun equation and number theory through 
the non-commutative harmonic oscillator and the Rabi model.

Through the Zamolodchikov conjecture proposed in \cite{Zam86}, 
which will be discussed below, the Heun equations are also deeply related to conformal field theory. 
As illustrated, for example, by the celebrated discovery of the Kyiv formula 
for the Painlev\'e VI $\tau$-function obtained in \cite{GIL12}, 
the exploration of such connections between 
conformal field theory and the theories of ordinary differential equations 
and special functions has become one of the important topics in recent studies. 
In this paper, as part of this line of research, 
we investigate the relationship among the Voros periods in exact WKB analysis, 
accessory parameters, and classical conformal blocks
for all confluent Heun equations listed in Table \ref{table:heun}. 
Our results are based on the recent results of Bonelli--Shchechkin--Tanzini \cite{BST25}, 
and regarded as an extension of the works of Mironov--Morozov \cite{MM09}, 
Piatek--Pietrykowski \cite{PP14, PP16}, and the recent results of Lisovyy--Naidiuk \cite{LN21}. 
In the course of reviewing these preceding works, 
we explain the above key concepts.

\subsection{Voros period and accessory parameter}

It is well known that the exact WKB method \cite{Vor83, KT98} is extremely effective 
in the analysis of monodromy and Stokes matrices for second-order Schr\"odinger-type equations, such as the confluent Heun equations 
\begin{equation}
H_{\rm J} ~:~ 
\left( \hbar^2 \frac{d^2}{dx^2} - Q_{\rm J}(x) \right) \psi = 0, 
\end{equation}
whose potential functions are listed in Table \ref{table:heun}. 
In the exact WKB analysis, the spectral curve 
\begin{equation}
y^2 = Q_0(x) = \lim_{\hbar \to 0} Q_{\rm J}(x).
\end{equation}
and the Voros periods 
\begin{equation}
V_\gamma = \sum_{m \ge -1} \hbar^{m} \oint_\gamma S_m(x) \,dx, 
\end{equation}
play important roles \cite{DDP93, IN14, Iwaki-Les-Houches}. 
The latter is the generating function of the period integrals of $\{ S_m(x) \}_{m\ge -1}$, which give the coefficients of WKB formal solution, along a certain cycle $\gamma$ on the spectral curve; see Definition \ref{def:Voros-period}.
It was shown in \cite{SAKT91} and \cite[\S 3]{KT98} that the Borel sum (with respect to $\hbar$) of Voros periods describes the nontrivial entries of the monodromy matrices. 
This provides a clear explanation for the marked difference between the Gauss hypergeometric and Heun equations: 
The spectral curve of the Gauss hypergeometric equation has genus 0; therefore, its Voros periods can be described solely in terms of the residues at the regular singular points (i.e., the local monodromy exponents).
On the other hand, the spectral curve of the confluent Heun equations has genus $1$ and therefore admits genuinely nontrivial Voros periods.

As suggested by the celebrated conjecture of Zamolodchikov \cite{Zam86}, 
the Heun equation is also deeply related to conformal field theory. 
More precisely, the conjecture asserts 
\begin{itemize}
\item 
Conjecture A :  Conformal block admits a large central charge limit;
the limit is called the classical conformal block. 

\smallskip 
\item 
Conjecture B :  The classical conformal blocks in Conjecture A 
give the accessory parameters of the Heun equation $H_{{\rm VI}}$.
\end{itemize}
We also note that, through the Alday--Gaiotto--Tachikawa (AGT) correspondence \cite{AGT}, this large central charge limit has been identified with the Nekrasov--Shatashvili (NS) limit, in which the classical conformal block is expected to agree with the NS twisted superpotential \cite{NS09}. 

Through the accessory parameter, it is therefore natural to ask how the Voros periods are related to the classical conformal block. 
Their relation has already been studied by Mironov--Morozov \cite{MM09} and Piatek--Pietrykowski \cite{PP14} for the Mathieu equation, which is equivalent to the confluent Heun equation  $H_{{\rm III}_3}$.
This result was further developed in the relatively recent work of Lisovyy--Naidiuk in \cite{LN21}, which we now briefly review.


The original conjecture of Zamolodchikov concerns the classical conformal block with regular singularities.
Building on the study of irregular conformal blocks introduced by Gaiotto and Teschner 
\cite{Gaiotto09, GT12}, Mironov--Morozov \cite{MM09} and Piatek--Pietrykowski \cite{PP14, PP16} explicitly computed the large central charge limit of a certain irregular conformal block and pointed out its connection to the eigenvalues of the Mathieu equation. 
Their method relies on the analysis of the Voros period (all-order Bohr--Sommerfeld period) for a particular cycle $\gamma$. 
See also \cite{GGM19, GHN21} for further studies.
We also note that an analysis of the large central charge limit was also carried out in \cite{RZ15a, RZ15b}.


The aforementioned results studied the conformal blocks ``with regular singular expansion", which means that their singularities are at most of complex-power type. 
On the other hand, in \cite{LN21}, Lisovyy and Naidiuk proposed that a similar Zamolodchikov-type conjecture may also hold for irregular conformal blocks ``with irregular singular expansion''; that is, those admitting essential singularity written as exponential factors. 
The notion was rigorously defined by the second-named author in \cite{Nagoya15, Nagoya18}.
More specifically, among the confluent Heun equations in Table \ref{table:heun},  \cite{LN21} provided strong supporting evidence for the validity of the Conjectures A and B in the cases of $H_{\rm{V}}$ and $H_{\rm{IV}}$, 
for which the corresponding irregular singular expansion of the conformal blocks had already been introduced at the time of their work (\cite[Conjecture 3.1 and 3.2]{LN21}). 
In their analysis, similar to \cite{MM09, PP14}, the computation of the accessory parameter relied crucially on the study of Voros periods.
In particular, \cite{LN21} succeeded in characterising which cycle should be chosen when considering the Voros period in realizing the given irregular conformal blocks as the accessory parameter in the above concrete examples.  

One of the main objectives of this paper is, building on this strategy of Lisovyy--Naidiuk, to extend the Zamolodchikov-type conjecture for all confluent Heun equations in the presence of irregular singularities, based on the results of \cite{BST25} by Bonelli--Shchechkin--Tanzini described next.

\subsection{Classical conformal blocks from bilinear equations}

As in the case of the Heun differential equation, the Painlev\'e equations, together with their associated isomonodromic linear differential equations, are also closely related to conformal field theory. 
The celebrated Kyiv formula asserts that the $\tau$-function of the Painlev\'e VI equation can be constructed 
as a Zak (discrete Fourier) transform of 
the regular-singular Virasoro four-point conformal block. 
Through the AGT correspondence, the aforementioned conformal block 
is identified with the so-called Nekrasov partition function, 
which in turn yields remarkably explicit combinatorial formulas for the expansion coefficients 
of the Painlev\'e VI $\tau$-function. 
Since its discovery, various extensions of the Kyiv formula for 
confluent Painlev\'e equations based on 
irregular conformal blocks and topological recursion have also been investigated; 
see \cite{GIL13, Nagoya15, BLMST16, Nagoya18, Iwa19, PP23, IILZ25} for example.

In the above relation with Painlev\'e equations, however, the central charge $c$ 
of the conformal block is restricted to $1$. 
On the other hand, when $c$ takes a generic value, it was found in \cite{BMS17, GMS20} that a certain non-commutative quantum $\tau$-function satisfies an equation that naturally extends the Hirota-type bilinear identity in the $ c = 1$ case.
This approach has recently been generalized in the work \cite{BST25} of Bonelli--Shchechkin--Tanzini, and they proposed Hirota-type bilinear equations for all the Painlev\'e equations beyond central charge $1$. Moreover, by analyzing this bilinear equation satisfied by quantum $\tau$-function, \cite{BST25} introduced an object ${\mathscr Z}(\varepsilon_1,\varepsilon_2)$, with the so-called $\Omega$-background parameters $\varepsilon_1, \varepsilon_2$, as Fourier components of the quantum $\tau$-function. 
${\mathscr Z}$ is an analogue of the Nekrasov partition function.
The authors further develop the theory of quantum Painlev\'e equations in their subsequent paper \cite{BST25-2}, where they also make comparisons with the quantum Painlev\'e equations studied by the second-named author \cite{Nag04}.

Assuming that the AGT correspondence extends to the irregular setting, although it has not yet been fully formulated, it is naturally suggested that ${\mathscr Z}(\varepsilon_1,\varepsilon_2)$, obtained in \cite{BST25}, corresponds to the conformal block with the central charge 
$c = 1 + {6(\varepsilon_1+\varepsilon_2)^2}/{(\varepsilon_1 \varepsilon_2)}$, 
while the classical conformal block (the NS twisted superpotential) is given by 
\begin{equation} \label{eq:CCB-intro}
{\mathscr W} = \lim_{(\varepsilon_1,\varepsilon_2) \to (\hbar,0)} 
\left( - \varepsilon_1 \varepsilon_2 \log {\mathscr Z}(\varepsilon_1,\varepsilon_2)  \right). 
\end{equation}
As observed in \cite{GMS20, BST25}, some of ${\mathscr Z}(\varepsilon_1,\varepsilon_2)$  
are expected to reproduce the known irregular conformal blocks 
associated with the Virasoro algebra. 
However, for several cases 
-- including the one treated in \S \ref{sec:H-III-3-expansion-example-2} -- 
it is not known how they should be constructed directly from the Virasoro algebra. 
For this reason, it should be noted that these objects are, 
strictly speaking, better regarded as ``candidates for classical conformal blocks". 
We note that, except for a limited number of cases \cite{DGP24}, 
the existence of the classical conformal block \eqref{eq:CCB-intro} 
has not been established mathematically. 
Nevertheless, the computations of \cite{GMS20, BST25} strongly suggest that 
this limit should continue to exist even in the irregular setting.

\subsection{Main conjecture}
Our main observation provides compelling evidence for a Zamolodchikov-type conjecture expected to relate the classical conformal blocks ${\mathscr W}$ suggested by Bonelli--Shchechkin--Tanzini \cite{BST25} to the accessory parameters ${\mathscr E}$ of all confluent Heun equations in Table \ref{table:heun}, including the examples treated in aforementioned earlier works.

For the confluent Heun equations $H_{\rm J}$, 
we impose the condition for what we call ``partial isomonodromic deformation";  
namely, a deformation of $H_{\rm J}$ for which the Voros period $V_\gamma$ associated with 
a certain cycle $\gamma$ remains independent of $t$: 
\begin{equation} \label{eq:PIMD-introduction}
V_\gamma = \frac{2 \pi i \nu}{\hbar} - d \pi i.
\end{equation}
Here, 
$\nu$ is a complex parameter, independent of $t$, 
that parametrizes the Voros period, and $d$ is an integer. 
Under this condition, we show that the accessory parameter ${\mathscr E}$ 
is determined as an $\hbar$-formal series valued implicit function of the form 
\begin{equation}
{\mathscr E} = t^{d_{{\mathscr E}}} 
\sum_{k \ge 0} \hbar^{2k} \, {\mathscr G}_k(\Lambda).
\end{equation}
Here, $d_{\mathscr E} \in {\mathbb Q}$ and ${\mathscr G}_k(\Lambda) = \sum_{\ell \ge 0} \Lambda^{\ell} {\mathscr G}_{k}^{[\ell]}$ are holomorphic functions at $\Lambda = 0$, where $\Lambda$ is given by a fractional power of $t$.
We also show that the expansion coefficients ${\mathscr G}_{k}^{[\ell]}$ can be computed explicitly to arbitrary order. 
Although these results are described as Theorems \ref{thm:unique-existence} and \ref{thm:unique-existence-2} for a particular example $H_{{\rm III}_3}$, the same argument applies to all confluent Heun equations $H_{\rm J}$. 

For each Heun equation $H_{\rm J}$, we will see that different choices of the cycle $\gamma$ on the spectral curve yield different formal expansions of the accessory parameter. 
For example, in the $H_{{\rm III}_3}$ case, the choice of cycle determines whether one obtains the expansion of the accessory parameter in the limit $t \to \infty$ (\S \ref{sec:H-III-3-expansion-example-2}) or in the limit $t \to 0$ (\S \ref{subsec:example-AP-III-regular}). 
Among these, the expansion as $t \to 0$ is parallel to the previous works \cite{MM09, PP14}, whereas a different expansion is obtained as $t \to \infty$. 
We expect that the expansion we obtained for $t \to \infty$ coincides with that derived by Gavrylenko--Marshakov--Stoyan  \cite{GMS20}.

Although such an analysis of the accessory parameter via the WKB analysis and Voros periods was also carried out in the works \cite{MM09, PP14, GMS20}, the computational method for ${\mathscr G}_{k}^{[\ell]}$ is technically different (see Remark \ref{rem:GMS20}). 
In particular, we choose $\gamma$ to be the vanishing cycle in each  limit, and the elliptic integral giving the Voros period can be evaluated by a perturbative residue computation, as parallel to \cite{LN21}. 
We present the strategy for such a concrete choice of cycles in \S \ref{subsec:AP-algorithm}, and carry out the detailed analysis for each Heun equation in \S \ref{section:examples}.
We note that the essential idea for this method was already proposed in \cite{LN21}, and that our approach should be regarded as a slight modification of theirs\footnote{
Our parameter $\hbar$ plays a different role from that in the work of Lisovyy--Naidiuk \cite{LN21}. 
The parameter $\hbar$ in \cite{LN21} is closely related to our expansion parameter $\Lambda$. 
}.  

Based on these computational results, 
as an extension of the aforementioned previous works, 
we provide a list of the expected relations 
between the accessory parameters ${\mathscr E}$ obtained in this way 
and the corresponding classical (regular or irregular) conformal blocks 
${\mathscr W}$ of \cite{BST25}. 
Our main conjecture can be roughly stated as follows:

\begin{conj}
For all quantum expansions of ${\mathscr Z}$ obtained in \cite{BST25}, the classical limit  given in \eqref{eq:CCB-intro} exists. 
Moreover, for each ${\mathscr W}$ thus obtained, there exists a confluent Heun equation $H_{\rm J}$ in Table \ref{table:heun} and a cycle $\gamma$ on its spectral curve such that the accessory parameter ${\mathscr E}$ determined implicitly by the condition \eqref{eq:PIMD-introduction} essentially agrees with  $\partial_t {\mathscr W}$.
\end{conj}

We describe the explicit correspondence for each Heun equation in \S \ref{section:examples}. 
For every example, we specify the precise choice of the cycle $\gamma$ and verify that the first several coefficients in the expansion of the corresponding classical conformal block ${\mathscr W}$ obtained in \cite{BST25} indeed agree with those of the accessory parameter ${\mathscr E}$ obtained by the above method.
We hope that the list in \S \ref{section:examples} will be useful for various future applications.

We note that the expansion coefficients of the accessory parameter constructed here form a formal power series in $\hbar$, and that a rigorous mathematical analysis of their convergence or Borel summability remains an open problem for future investigation.
As the works in this direction for the NS twisted superpotential, we refer to \cite{GGM19, GHN21, GM23} for example.
Also, a comparison between the classical conformal block and the free energy obtained from ``$Q$-top recursion'' by Osuga \cite{Osuga23} is an interesting problem.

\subsection{Outline of the paper}

This paper is organized as follows.
In \S \ref{section:WKB-and-VP}, we review the definition of the Voros period. 
In the first half of \S \ref{section:main-conjecture}, we describe the method, which is motivated by the previous works \cite{MM09, PP14, LN21}, for constructing the accessory parameter as a formal power series defined implicitly from the Voros period. 
We take $H_{{\rm III}_3}$ as an example and derive the series expansions of the accessory parameter in the limits $t \to \infty$ and $t \to 0$, and we show the well-posedness of our algorithm. 
In the second half of \S \ref{section:main-conjecture}, we briefly review the results of \cite{BST25} on the construction of the (candidate of) conformal blocks from the bilinear equation for the quantum Painlev\'e $\tau$-function, and formulate the conjecture that the classical conformal blocks obtained there provide the series expression of the accessory parameter derived above.
In \S \ref{section:examples}, we derive various series expansions of the accessory parameter ${\mathscr E}$ for all $H_{\rm J}$, and test Conjectures A and B case by case; 
in particular, we will check that the series expansions of the accessory parameter ${\mathscr E}$ and that of the classical conformal block ${\mathscr W}$ determined from \cite{BST25} exactly agree in the lower-order terms.

\bigskip
\subsection*{Acknowledgements}
\vspace{-.5em}

We would like to thank 
Shogo Aoki, 
Nikorai Iorgov, 
Oleg Lisovyy, 
Yasuyuki Hatsuda, 
Kento Osuga, 
Yoshitsugu Takei, 
Kouichi Takemura 
and 
Yurii Zhuravlov
for fruitful and insightful discussions. 
The work is supported by 
JSPS KAKENHI Grand Numbers 
21H04994, 22H00094, 22K03350, 23K17654, 24K00525.
This work is also supported by the Research Institute for Mathematical Sciences, an International Joint Usage/Research Center located in Kyoto University,
and by FoPM, WINGS Program, the University of Tokyo.



\section{Review of WKB solution and Voros periods}
\label{section:WKB-and-VP}

Here we briefly review the construction of the WKB (formal) solution of the Heun equations, 
following \cite[\S 2]{KT98}.
The construction is commonly applicable to all the equations $H_{\rm J}$ listed in Table \ref{table:heun}. 
For later purposes, we assume that the accessory parameter ${\mathscr E}$
has a series expansion of the form 
\begin{equation}
{\mathscr E} = \sum_{m \ge 0} \hbar^{m} {\mathscr E}_{m}
\end{equation}
whose coefficients ${\mathscr E}_{m} = {\mathscr E}_{m}(t)$ 
are certain functions of $t$ which will be specified later. 
Accordingly, the Schr\"odinger potential $Q_{\rm J}$ in Table \ref{table:heun} 
admits a series expansion 
\begin{equation}
Q_{\rm J}(x,\hbar)   = \sum_{m \ge 0} \hbar^{m} Q_m(x)
\end{equation}
where $Q_m$ are rational functions of $x$.  
Although $Q_m$ also depends on the parameters $t$ and $\theta_s$, 
we omit the dependency to simplify notation.
Our parameters $\theta_s$ are labeled by a singular point $s$ 
of the equation (i.e., a pole of $Q_{\rm J}$) and chosen so that 
\begin{equation} 
\Res_{x=s} \sqrt{Q_0(x)} dx = \pm \theta_s
\end{equation}
holds up to sign.

A WKB solution of $H_{\rm J}$ is defined as a formal solution 
of the form 
\begin{equation} \label{eq:wkbsol}
\psi(x,\hbar) = \exp\left( \int S(x,\hbar) dx \right)
\end{equation} 
where $S$ is a formal (Laurent) series of $\hbar$ of the form
\begin{equation}
S(x,\hbar) = \sum_{m \ge -1} \hbar^{m} S_m(x). 
\end{equation}
It is easy to see that $S$ must satisfy the Riccati equation 
\begin{equation} \label{eq:Riccati}
\hbar^2 \left( S^2 + \frac{dS}{dx} \right) = Q_{\rm J}.
\end{equation}
Therefore, the leading term $S_{-1}(x)$ satisfies the algebraic equation
\begin{equation} \label{eq:cl}
S_{-1}^2 = Q_0,
\end{equation}
and the higher order terms are recursively determined by the recursion relation
\begin{equation} \label{eq:wkb-rec}
2 S_{-1} S_{m+1} + \sum_{\substack{m_1, m_2 \ge 0 \\ m_1 + m_2 = m}} S_{m_1} S_{m_2}
+ \frac{dS_m}{dx} = 
Q_{m+2}
\end{equation}
for $m \ge -1$.
For each unramified singular point $s$ of $H_{\rm J}$, we have 
\begin{equation} \label{eq:vp-pole}
\Res_{x=s} S_{m}(x) dx = 
\begin{cases}
\pm \theta_s & \text{if $m = -1$} \\
{p_s}/{4} & \text{if $m = 0$} \\
0 & \text{if $m \ge 1$},
\end{cases}
\end{equation}
where $p_s$ is the pole order of $Q_0$ at $x=s$. 
In general, $S_m$ may have simple poles at the second-order poles of $Q_0$,  but in our case (due to a suitable choice of $Q_2$), we note that $S_m$ with $m \ge 1$ are holomorphic at these points. 

Using the series $S$ thus constructed, 
the WKB solution is determined by \eqref{eq:wkbsol}. 
However, in this paper, we will focus not on the WKB solution itself 
but rather on the Voros period. 
To explain its definition, we first describe the geometric properties 
of the Riemann surface 
\begin{equation} \label{eq:cl-xy}
{\mathcal C} = \{(x,y) \in {\mathbb C}^2 ~|~ y^2 = Q_0(x) \}
\end{equation}
defined by \eqref{eq:cl}.
The Riemann surface ${\mathcal C}$ is called the classical limit of $H_{\rm J}$. 
The natural projection 
\begin{equation}
\pi: {\mathcal C} \ni (x,y) \mapsto x \in {\mathbb C}
\end{equation}
gives a 
double covering of the complex $x$-plane.
It follows from the recursion relation \eqref{eq:wkb-rec}
that $S_m(x)$ are holomorphic functions defined on 
\begin{equation}
{\mathcal C}' = {\mathcal C} \setminus \pi^{-1}(T),\quad
T = \{ v \in {\mathbb C} ~|~ Q_0(v) = 0 \}
\end{equation}
(or meromorphic functions defined on its compactification $\overline{{\mathcal C}}$). 
A point in $T$ is called a turning point of $H_{\rm J}$ in the theory of (exact) WKB analysis. 

For a generic choice of the parameters $t$, $\theta_s$ and ${\mathscr E}_0$,  
the Riemann surface ${\mathcal C}$ has genus $1$. 
Although we refrain from explicitly stating the conditions for the genericity of the parameters, 
we will assume throughout the following that ${\mathcal C}$ has genus one. 

\begin{Def} \label{def:Voros-period}
For a closed cycle $\gamma$ on ${\mathcal C}'$, 
the Voros period of the Heun equation $H_{\rm J}$ 
associated with $\gamma$ is defined as 
a formal series obtained by integrating $S$ term by term as follows:
\begin{equation} \label{eq:vp}
V_\gamma = \sum_{m \ge -1} \hbar^{m} V_m, \quad
V_m =\oint_{\gamma} S_m(x) dx.
\end{equation}
\end{Def}

The Voros period $V_\gamma$ is a formal series of $\hbar$ 
whose coefficients are (possibly non-trivial) elliptic integrals. 
As is shown in \cite{SAKT91} and \cite[Section 3]{KT98}, 
the monodromy (and Stokes) matrices 
of Schr\"odinger-type equations such as $H_{\rm J}$
are described in terms of the Borel sum of the Voros periods. 
It follows from \eqref{eq:vp-pole} that the Voros periods 
associated with a residue cycle around a singular point $s$ of $H_{\rm J}$
are essentially given by the local monodromy exponent $\theta_s$, 
and the non-trivial part of the monodromy/Stokes data of $H_{\rm J}$
is given by the Voros periods associated with $A$-cycle and $B$-cycle on ${\mathcal C}$
(cf.\, \cite[Section 3]{KT98}).


\section{Main conjecture on accessory parameters and classical conformal blocks}
\label{section:main-conjecture}


In this section, as an extension of the analysis by 
Lisovyy--Naidiuk \cite{LN21}, 
we propose the conjecture that, 
by specifying the value of the Voros period 
associated with an appropriate cycle, 
the accessory parameter is determined 
as a formal power series-valued implicit function, 
and that it gives the classical limit of 
the (regular or irregular) conformal block. 
This may be regarded as an irregular-singular version of 
the Zamolodchikov conjecture \cite{Zam86}. 
We illustrate the claim here by taking $H_{{\rm III}_3}$ as an example, 
and in the next section we extend the analysis to all other Heun equations.

\subsection{Accessory parameter from Voros period}

\subsubsection{Computational algorithm of the accessory parameter}
\label{subsec:AP-algorithm}

As stated in Section \ref{sec:intro}, our goal is to derive the series expansion of 
the accessory parameter of $H_{\rm J}$ and compare it with the classical conformal blocks of various type. 
To this end, our approach is to study a 
``WKB-theoretic partial isomonodromic deformation" of $H_{\rm J}$, 
which we now explain. 

The condition we will impose on the associated Voros period is formulated as 
\begin{equation} \label{eq:partial-IMD}
\exp(V_{\gamma}) = \pm \exp(2 \pi i \nu/\hbar), 
\end{equation}
where $\gamma$ is a closed cycle  on ${\mathcal C}'$, and 
$\nu$ is a complex parameter which is independent of $t$. 
The sign $\pm$ in \eqref{eq:partial-IMD} differs depending on each example under consideration. 
The problem under consideration here can be regarded as the question of whether
it is possible to deform the accessory parameter in such a way that 
the Voros period associated with a certain cycle remains independent of $t$. 
Since the Voros period is a quantity that characterizes the monodromy data, 
we shall refer to the deformation of the accessory parameter considered here 
as a``partial isomonodromic deformation".

In comparison with the classical conformal block, it is crucial to determine the Voros period, associated with which cycle, should be considered. 
Motivated by the analysis carried out by Lisovyy--Naidiuk \cite{LN21} for $H_{\rm V}$ and $H_{{\rm IV}}$, we propose that, in general, one may proceed according to the following strategy:

\begin{itemize}
\item[(I)]
First we take a rescale of the variables of the form
\begin{equation} \label{eq:rescale}
(x, {\mathscr E}) \mapsto 
(t^{d_x} X, t^{d_{{\mathscr E}}}{\mathscr G})
\end{equation}
with some $d_x, d_{{\mathscr E}} \in {\mathbb Q}$
(if necessary). 
We also put the following ansatz on the rescaled accessory parameter 
${\mathscr G} = t^{- d_{\mathscr E}} {\mathscr E}$: 
\begin{equation} \label{eq:ansatz}
{\mathscr G} = \sum_{m \ge 0} \hbar^{2m} {\mathscr G}_m, 
\qquad 
{\mathscr G}_m = \sum_{\ell \ge 0} \Lambda^{\ell} {\mathscr G}_{m}^{[\ell]} 
\end{equation}
where the expansion variable $\Lambda$ is a certain fractional power of $t$, 
and the coefficients ${\mathscr G}_{m}^{[\ell]}$ are independent of $t$.
The above rescaling transforms the classical limit ${\mathcal C}$ given in \eqref{eq:cl-xy}
to ``rescaled classical limit'' defined by
\begin{equation}
{\mathcal C}^{\rm res} = \{(X,Y) \in {\mathbb C}^2 ~|~ Y^2 = Q_0^{\rm res}(X) \}, 
\end{equation}
where we have further put $y = t^{d_y} Y$ with a certain $d_y \in {\mathbb Q}$
so that   
\begin{equation}
Q_0^{\rm res} dX^2 = 
t^{-2 d_{y}} Q_0 dx^2 \Bigl|_{(x,{\mathscr E}) = (t^{d_x} X, t^{d_{{\mathscr E}}}{\mathscr G}_0)}
\end{equation}
has a finite and non-zero limit as $\Lambda \to 0$.

\smallskip
\item[(II)] 
If the ``limiting spectral curve'' 
\begin{equation}
{\mathcal C}^{\rm res}_{\rm deg} = \{(X,Y) \in {\mathbb C}^2 ~|~ Y^2 = Q_0^{[0]}(X) \}, 
\qquad
Q_0^{[0]}(X) = \lim_{\Lambda \to 0} Q_0^{\rm res}(X)
\end{equation}
has genus~0, we choose $\gamma$ to be a vanishing cycle that collapses to a point as $\Lambda \to 0$. 
Otherwise, we choose ${\mathscr G}_{0}^{[0]}$ to be a zero of the discriminant so that the limiting spectral curve becomes a singular curve with genus $0$, and again take $\gamma$ to be the corresponding vanishing cycle, namely, the one encircling the branch points that coalesce as $\Lambda \to 0$.
Then, 
we impose the partial isomonodromy condition \eqref{eq:partial-IMD} on the Voros period associated with the cycle $\gamma$.  
The choice of $\gamma$ allows us to reduce the elliptic integrals in \eqref{eq:vp} to a residue computation at the point to which $\gamma$ shrinks.
Thus, we can determine the coefficient  ${\mathscr G}_{m}^{[\ell]}$ as functions of  the monodromy parameter $\nu$ and the local monodromy exponents $\theta_s$ explicitly.

\smallskip
\item[(III)] 
From the expansion of ${\mathscr G}$ thus obtained, 
we translate it into the expansion of the accessory parameter based on \eqref{eq:rescale}. 
Furthermore, by formally interchanging the order of expansion in 
$\hbar$ and $\Lambda$, we obtain a power series in 
$\Lambda$ whose coefficients are power series in $\hbar$. 
We then compare this double series expansion in 
$\Lambda$ and $\hbar$ with the classical conformal block.
\end{itemize}

Our approach is a slight modification of the method used by Lisovyy--Naidiuk \cite{LN21}; 
in fact, our parameter $\hbar$ plays a different role from that in the work. 
We also note that there are several possible choices of 
the rescaling data $(d_x, d_y, d_{\mathscr E})$
for each Heun equation, 
and we will have several asymptotic expansion of the accessory parameter depending on the choice. 
Strictly speaking, in the definition of the Voros period, one needs to specify 
a branch of the formal solution of the Riccati equation \eqref{eq:Riccati}
(that is, a branch of $S_{-1} = \sqrt{Q_0}$, which is a solution to equation \eqref{eq:cl}). 
However, this choice is not essential for the computations below, so we omit the discussion.

In the next subsections, we take the 
doubly reduced doubly confluent Heun equation 
\begin{equation}
H_{{\rm III}_3}  ~:~ 
\hbar^2 \frac{d^2 \psi}{dx^2} = Q_{{\rm III}_3}(x, \hbar) \, \psi,
\quad
Q_{{\rm III}_3} = \frac{t}{x^3} 
- \frac{{\mathscr E}}{x^2} 
+ \frac{1}{x}.
\end{equation}
to illustrate how our computational method works. 
In particular, we derive different expansions of 
the accessory parameter depending on the choice of scaling degrees and the cycle $\gamma$ on the spectral curve. 
One expansion corresponds to conformal blocks with 
regular singular behavior, while the other is expected to 
provide an expansion of (ramified) irregular conformal blocks. 
These computational results suggest that the WKB method 
offers a unified framework applicable 
to both regular and irregular cases.

\smallskip
\subsubsection{Expansion of the accessory parameter of the Heun equation $H_{{\rm III}_3}$  as $t \to \infty$} 
\label{sec:H-III-3-expansion-example-2}


Here we will derive the formal expansion of the accessory parameter as $t \to \infty$.  
We perform the rescaling with 
$(d_{x}, d_{y}, d_{\mathscr E}) = (1/2, 1/4, 1/2)$ 
and take $\Lambda = 
t^{- 1/4}$; namely, 
we will seek the accessory parameter of the form 
\begin{equation}
{\mathscr E} 
= t^{1/2}  \sum_{k \ge 0} \hbar^{2k} {\mathscr G}_{k}(\Lambda).
\end{equation}
The rescaled spectral curve becomes
\begin{equation}
Q_{0}^{\rm res} = \dfrac{1}{X^3}-\dfrac{\mathscr{G}_0}{X^2}+\dfrac{1}{X}
\quad \xrightarrow{ \Lambda  \to 0 } \quad 
Q_0^{[0]} =  \dfrac{1}{X^3}-\dfrac{\mathscr{G}^{[0]}_0}{X^2}+\dfrac{1}{X}.
\end{equation}
We take ${\mathscr G}_0^{[0]} = 2$
so that the limiting spectral curve\footnote{
The choice 
${\mathscr G}_0^{[0]} = - 2$ 
also gives a genus 0 curve $Y^2 = {(X+1)^2}/{X^3}$. 
In this case, the coefficients of the accessory parameter can be obtained in the same way by choosing 
$\gamma$ as a small circle around $X=-1$. 
This case is essentially equivalent to the case ${\mathscr G}_0^{[0]} = 2$, 
and the result follows from \eqref{eq:AP-HIII-3-2} below by replacing
${\mathscr G}_m^{[\ell]}$ with $- e^{- \pi i \ell/2} \, {\mathscr G}_m^{[\ell]}$. 
} 
\begin{equation}
Y^2 = Q_0^{[0]}(X) = \frac{(X-1)^2}{X^3}
\end{equation}
is of genus $0$. 
Then, in this limit, it is readily seen that two of the three zeros of 
$Q^{\rm res}_0$ coalesce at $X = 1$. 
Therefore, for sufficiently small $\Lambda$, 
we choose the cycle $\gamma$ to be a circle that encloses the two turning points 
which tend to the double zero $X=1$ of $Q_{0}^{[0]}$ inside, 
while keeping the other turning point outside.
Then, the elliptic integrals defining the Voros periods are described 
by the residue at $X = 1$ of the coefficients in the small $\Lambda$-expansion. 
With this choice of cycle $\gamma$, the following holds.

\begin{thm} \label{thm:unique-existence}
For any $\nu \in {\mathbb C}$, 
there exists a unique sequence $\{ {\mathscr G}_{k}(\Lambda) \}_{k \ge 0}$ 
of holomorphic functions at $\Lambda = 0$ such that
the Voros period associated with $\gamma$ becomes 
\begin{equation} \label{eq:partial-IMD-example-H-III-3}
V_\gamma = \frac{2 \pi i \nu}{\hbar} - \pi i
\end{equation}
(i.e., $V_{-1} = 2 \pi i \nu$, $V_0 = - \pi i$ and $V_m = 0 ~~(m \ge 1)$; 
that implies the equality \eqref{eq:partial-IMD} with the $-$-sign). 
Moreover, the series coefficients ${\mathscr G}_{k}^{[\ell]}$ 
of ${\mathscr G}_{k}(\Lambda)$ at $\Lambda = 0$, given in \eqref{eq:ansatz}, 
are explicitly computable at arbitrary order. 
\end{thm}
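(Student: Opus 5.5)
The plan is to first remove all explicit $t$-dependence from the equation and then solve the quantization condition order by order with the implicit function theorem.

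\emph{Rescaling.} Substituting $x=t^{1/2}X$ and ${\mathscr E}=t^{1/2}{\mathscr G}$ gives $Q_{{\rm III}_3}\,dx^2=t^{1/2}q(X;{\mathscr G})\,dX^2$, where
\[
q(X;{\mathscr G})=\frac{1}{X^3}-\frac{{\mathscr G}}{X^2}+\frac{1}{X}.
\]
Consequently $H_{{\rm III}_3}$ becomes $(\hbar\Lambda)^2\psi''=q\,\psi$, with $\Lambda=t^{-1/4}$. The equation therefore depends on $t$ only through the effective Planck constant $\hbar\Lambda$ and through ${\mathscr G}$. Let $\sigma_m(X;{\mathscr G})$ denote the WKB coefficients of this rescaled equation, built by the recursion \eqref{eq:wkb-rec} with $q$ in place of $Q_0$. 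Then $V_\gamma=\sum_m(\hbar\Lambda)^m\oint_\gamma\sigma_m\,dX$, where ${\mathscr G}=\sum_k\hbar^{2k}{\mathscr G}_k(\Lambda)$ is still to be substituted and re-expanded in $\hbar$.

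\emph{Automatic conditions.} Since the potential contains no odd powers of $\hbar$, the standard argument shows that $\sigma_{2j+1}$ ($j\ge1$) are total derivatives, and that $\sigma_0=-\tfrac14\,d\log q$. Because $\gamma$ encircles exactly two simple zeros of $q$ and no pole, this gives $V_0=-\pi i$, and all odd-order conditions in $\hbar$ hold automatically. Only the conditions at orders $\hbar^{2k-1}$ remain.

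\emph{Leading order.} Put $\delta={\mathscr G}_0-2$ and $P(\delta)=\oint_\gamma\sqrt{q(X;2+\delta)}\,dX$. The condition $V_{-1}=2\pi i\nu$ becomes $P(\delta)=2\pi i\nu\Lambda$. On a fixed circle around $X=1$ of radius $r<1$ we have $\sqrt q=(X-1)X^{-3/2}\bigl(1-\delta X/(X-1)^2\bigr)^{1/2}$. For $|\delta|\ll r^2$ this expands in powers of $\delta$ uniformly on the circle, so $P$ is holomorphic at $\delta=0$. Its coefficients are residues at $X=1$, and one finds $P(0)=0$ and $P'(0)=-\tfrac12\oint X^{-1/2}(X-1)^{-1}dX=\mp\pi i\neq0$. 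The implicit function theorem then yields a unique holomorphic solution $\delta=\delta(\nu\Lambda)$ vanishing at $\Lambda=0$. Its Taylor coefficients, obtained by Lagrange inversion, give ${\mathscr G}_0^{[\ell]}$ as explicit polynomials in $\nu$.

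\emph{Higher orders: the main obstacle.} As $\delta\to0$ the two enclosed turning points coalesce, so the periods $\oint\sigma_{2j-1}dX$ are singular in $\delta$. One expects them to behave like $\delta^{1-2j}$ times a holomorphic function, which is Bohr--Sommerfeld scaling. I would prove this by expanding $\sigma_{2j-1}$ on the fixed circle in powers of $\delta$. The recursion shows that $\sigma_{2j-1}$ is a polynomial in $q^{-1/2}$ and derivatives of $q$, homogeneous of a fixed weight. Substituting $q=X^{-3}\bigl((X-1)^2-\delta X\bigr)$ and rescaling $X-1=\delta^{1/2}\xi$ would then give the order $\delta^{1-2j}$ together with a convergent residue expansion. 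Since $\delta=O(\Lambda)$, the term $(\hbar\Lambda)^{2j-1}\oint\sigma_{2j-1}$ is $\hbar^{2j-1}$ times a function holomorphic in $\Lambda$. The $\Lambda^{-1}$ coming from the leading term is balanced in the same way, because $\Lambda^{-1}P(\delta)$ is itself holomorphic once $\delta=O(\Lambda)$. If this holomorphy cannot be established directly, I would instead write $\delta=\Lambda u$ and verify that each term is holomorphic in $(\Lambda,u)$ near $(0,u_0)$.

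\emph{Inductive solution.} Collecting the coefficient of $\hbar^{2k-1}$ gives
\[
\Lambda^{-1}P'({\mathscr G}_0-2)\,{\mathscr G}_k=R_k\bigl({\mathscr G}_0,\dots,{\mathscr G}_{k-1};\Lambda\bigr),
\]
where $R_k$ is holomorphic at $\Lambda=0$ by the estimate above. Since $P'(\delta(\Lambda))\to P'(0)\neq0$, this determines ${\mathscr G}_k(\Lambda)$ uniquely as a holomorphic function. This proves existence and uniqueness by induction on $k$. Every coefficient ${\mathscr G}_k^{[\ell]}$ is a finite combination of residues at $X=1$ (equivalently, at $\xi=0$) of explicit Laurent expansions, which proves the computability claim.
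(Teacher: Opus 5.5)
The gap is in your ``main obstacle'' paragraph. It rests on a false premise, and the conclusion you draw from that premise is also invalid.

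On the fixed circle $|X-1|=r$, the function $q(X;2+\delta)=X^{-3}((X-1)^2-\delta X)$ stays uniformly away from $0$ for small $|\delta|$. Every $\sigma_{2j-1}$ is $q^{1/2}$ times a polynomial in $q^{-1},q',q'',\dots$. Hence $\Pi_j(\delta):=\oint_\gamma\sigma_{2j-1}(X;2+\delta)\,dX$ is holomorphic at $\delta=0$, by exactly the argument you gave for $P$. The coalescence of turning points inside the contour plays no role. Concretely, take the branch $\sigma_{-1}=(X-1)X^{-3/2}$, so that $P'(0)=-\pi i$. Then $\Pi_1(0)=2\pi i\,\mathrm{Res}_{X=1}\sigma_1(X;2)=9\pi i/32\neq0$, which is precisely what produces $\mathscr G_1=\frac{9}{32}\Lambda^2+\cdots$. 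So the periods are not singular, and a form $\delta^{1-2j}h_j(\delta)$ holds only with $h_j$ vanishing to order $2j-1$.

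Even granting your singular form, the step ``since $\delta=O(\Lambda)$, $(\hbar\Lambda)^{2j-1}\oint\sigma_{2j-1}$ is holomorphic'' uses the inequality in the wrong direction. The factor $\Lambda^{2j-1}\delta^{1-2j}$ is bounded only if $|\delta|\ge c|\Lambda|$. Since $\delta'(0)=2\pi i\nu/P'(0)=-2\nu$, the argument collapses exactly at $\nu=0$, where $\delta\equiv0$. Your fallback $\delta=\Lambda u$ near $u_0=-2\nu$ collapses there too, yet the theorem is asserted for all $\nu\in\mathbb C$. The argument also says nothing about the derivatives $\Pi_j^{(p)}(\delta)$ that enter when you re-expand $\mathscr G$ in $\hbar$; under your ansatz these would be even more singular.

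The repair is one line: multiply the $\hbar^{2k-1}$ equation by $\Lambda$ to get
\[
P'(\delta)\,\mathscr G_k=-\sum_{p\ge 2}\frac{P^{(p)}(\delta)}{p!}\sum_{i_1+\cdots+i_p=k}\mathscr G_{i_1}\cdots\mathscr G_{i_p}-\sum_{j=1}^{k}\Lambda^{2j}\sum_{p\ge 0}\frac{\Pi_j^{(p)}(\delta)}{p!}\sum_{i_1+\cdots+i_p=k-j}\mathscr G_{i_1}\cdots\mathscr G_{i_p},
\]
where all $i_l\ge1$, the empty product is $1$, and $\delta=\delta(\Lambda)$. The right-hand side is manifestly holomorphic at $\Lambda=0$ for every $\nu$, and $P'(\delta(\Lambda))\to P'(0)\neq0$. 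Induction even gives $\mathscr G_k=O(\Lambda^{2k})$, matching the paper's data.

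With this fix your route is valid, and it differs from the paper's mainly in bookkeeping. The paper admits all powers $\hbar^m\mathscr U_m$ and solves each order by the implicit function theorem with the same derivative $-\pi i$. It then shows the odd $\mathscr U_m$ vanish by comparison with the auxiliary solution $\widetilde S$, which gives uniqueness in a larger class. You impose the even ansatz from the start and get $V_0=-\pi i$ and $V_{2j}=0$ for free. Note that it is $\sigma_{2j}$ ($j\ge1$) that are total derivatives, and the even-order conditions that are automatic; your text says the opposite in two places. Your rescaled Planck constant $\hbar\Lambda$ is a genuine simplification, since it makes explicit that $\mathscr G_0$ depends only on $\nu\Lambda$.
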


\begin{proof}

We divide the proof into several steps.
\begin{itemize}
\item[(i)] 
First, assume that the expansion \eqref{eq:ansatz} of ${\mathscr G}$ 
may also contain odd powers of $\hbar$, as follows: 
\begin{equation} \label{eq:odd-even-expansion-of-G}
{\mathscr G} = \sum_{m \ge 0} \hbar^{m} {\mathscr U}_{m}(\Lambda).
\end{equation}
Under this assumption, we use the implicit function theorem to prove the existence 
and uniqueness of functions 
\begin{equation}
{\mathscr U}_{m}(\Lambda) = \sum_{\ell \ge 0} \Lambda^{\ell} {\mathscr U}_{m}^{[\ell]}
\end{equation}
for $m \ge 0$, which are holomorphic at $\Lambda=0$, 
satisfying the condition \eqref{eq:partial-IMD-example-H-III-3}. 
\item[(ii)]
In the course of the above argument, we will provide an 
algorithm for computing the expansion coefficients 
${\mathscr U}_{m}^{[\ell]}$ explicitly at arbitrary order.
\item[(iii)]
We then show that, ${\mathscr U}_{m}$ with odd $m$ vanish identically, 
from which it follows that ${\mathscr G}$ is a series of the form \eqref{eq:ansatz} 
with ${\mathscr G}_{k} = {\mathscr U}_{2k}$.
\end{itemize}
In what follows, we prove the proposition along this line of approach.

\bigskip
\noindent 
\underline{(i) : Unique existence of ${\mathscr U}_{m}(\Lambda)$.} \quad 
Following the above setup, 
let us impose condition \eqref{eq:partial-IMD-example-H-III-3}.
The comparison of the coefficients of $\hbar^{-1}$ 
in the condition \eqref{eq:partial-IMD-example-H-III-3} reads:
\begin{equation} \label{eq:perturbative-residue-0-alt}
V_{-1} = \oint_\gamma \sqrt{Q_0(x)} dx 
= \Lambda^{-1} \oint_\gamma \sqrt{Q^{\rm res}_0(X)} \, dX 
= 2 \pi i \nu, 
\end{equation}
which is equivalent to 
\begin{equation} \label{eq:perturbative-residue-0-alt-2}
\oint_\gamma \sqrt{\dfrac{1}{X^3}-\dfrac{\mathscr{U}_0}{X^2}+\dfrac{1}{X}} \, dX 
= 2 \pi i \nu \Lambda.
\end{equation} 
The equality is satisfied at the point $(\Lambda, \mathscr{U}_0) = (0, \mathscr{U}^{[0]}_0) = (0,2)$
since the integrand of the left-hand side of \eqref{eq:perturbative-residue-0-alt-2}  
becomes holomorphic at $X=1$ 
and has trivial residue there. 
If we set ${\mathscr U}_0 = {\mathscr U}^{[0]}_0 +u = 2 + u$, then 
\begin{equation} \label{eq:condition-for-implicit-function-theorem}
\left[
\frac{\partial}{\partial u}  
\oint_\gamma \sqrt{Q^{\rm res}_0(X)} \, dX 
\right]
\biggl|_{(\Lambda,u) = (0,0)}
= 2 \pi i \Res_{X = 1} \left( - \frac{dX}{2X^2 \sqrt{Q_0^{[0]}}} \right) 
= - \pi i \ne 0.
\end{equation}
Therefore, 
the implicit function theorem guarantees the unique existence of 
the holomorphic function $u = u(\Lambda)$
satisfying $u(0) = 0$ and \eqref{eq:perturbative-residue-0-alt-2}. 
The Taylor expansion $u(\Lambda) = \sum_{\ell \ge 1} {\mathscr U}_0^{[\ell]} \Lambda^\ell$
of $u$ gives the higher-order coefficients of ${\mathscr U}_0(\Lambda) \, (= \mathscr{G}_0(\Lambda))$.

The same logic applies to ${\mathscr U}_m$ for $m \geq 1$. 
We assume that all ${\mathscr U}_{m'}$ for $0 \le m' \le m$ 
have been determined. 
From the recursion \eqref{eq:wkb-rec}, we have 
\begin{equation}
S_{m} = - \frac{1}{2S_{-1}}\left( \sum_{\substack{m_1, m_2 \ge 0 \\ m_1 + m_2 = m-1}} S_{m_1} S_{m_2} + \frac{dS_{m-1}}{dX}\right) - \dfrac{{\mathscr U}_{m+1}(\Lambda)}{2X^2 S_{-1}},
\end{equation}
and we are thus led to require that 
\begin{equation}
\oint_{\gamma} S_m(X)\, dX = 
\begin{cases} 
- \pi i & m = 0 \\
0 & m \ge 1
\end{cases}
\end{equation}
to have condition \eqref{eq:partial-IMD-example-H-III-3}.
By equation \eqref{eq:condition-for-implicit-function-theorem},  
we can apply the implicit function theorem again, 
which guarantees the existence of the function ${\mathscr U}_{m+1}(\Lambda)$ 
that is holomorphic at $\Lambda = 0$ and satisfying our requirement.

\bigskip
\noindent 
\underline{(ii) : Explicit computation of coefficient ${\mathscr U}_{m}^{[\ell]}$.} \quad
Let us once again examine the coefficient of $\hbar^{-1}$ 
in the equation \eqref{eq:partial-IMD-example-H-III-3} we imposed: 
\begin{equation} \label{eq:perturbative-residue-0}
2 \pi i  \nu \Lambda 
= \oint_\gamma \sqrt{Q^{\rm res}_0(X)} \, dX 
=  2\pi i \, \sum_{\ell \ge 0} \Lambda^{\ell} \, 
\mathop{\mathrm{Res}}_{X = 1} S_{-1}^{[\ell]}(X) \, dX,  
\end{equation}
where $S_{-1}^{[\ell]}$ are determined as 
\begin{equation} \label{eq:S-1-ell}
\sqrt{Q^{\rm res}_0(X)} = 
\sqrt{\dfrac{1}{X^3}-\dfrac{\mathscr{U}_0}{X^2}+\dfrac{1}{X}} 
= \sum_{\ell \ge 0} \Lambda^{\ell}S_{-1}^{[\ell]}(X), 
\end{equation} 
and explicitly given as follows:
\[
S_{-1}^{[0]}(X) = \sqrt{Q_0^{[0]}(X)} = \frac{X-1}{X^{3/2}}, \quad
S_{-1}^{[1]}(X) = - \frac{{\mathscr U}_{0}^{[1]}}{2X^2 S_{-1}^{[0]}(X)}, 
\]
\begin{equation}
    S_{-1}^{[2]}(X) = 
- \frac{{\mathscr U}_{0}^{[2]}}{2X^{2} S_{-1}^{[0]}(X)}
- \frac{(1 - X\, {\mathscr U}_{0}^{[1]})^{2}}
{8X^{6} S_{-1}^{[0]}(X)^3}, 
\quad \dots.
\end{equation}
The condition \eqref{eq:perturbative-residue-0}
gives a set of constraints on ${\mathscr U}_{0}^{[\ell]}$
which determines them uniquely as follows:
\begin{equation}
{\mathscr U}_{0}^{[1]} = - 2 \, \nu, \quad
{\mathscr U}_{0}^{[2]} = \frac{\nu^2}{8}, \quad
{\mathscr U}_{0}^{[3]} = \frac{\nu^3}{128}, \quad
{\mathscr U}_{0}^{[4]} = \frac{5 \, \nu^4}{4096}, \quad
{\mathscr U}_{0}^{[5]} = \frac{33 \, \nu^5}{131072}, \quad
\dots
\end{equation}
A similar computation can be applied to the higher-order coefficients in $\hbar$ as well.
In principle, the coefficients ${\mathscr U}_{m}^{[\ell]}$ 
are recursively computable at any order. 
Some of the expansion coefficients are given as follows.
\[
{\mathscr U}_{1}^{[0]} = 0, \quad
{\mathscr U}_{1}^{[1]} = 0, \quad
{\mathscr U}_{1}^{[2]} = 0, \quad
{\mathscr U}_{1}^{[3]} = 0, \quad
{\mathscr U}_{1}^{[4]} = 0, \quad
{\mathscr U}_{1}^{[5]} = 0, \quad
\dots
\]
\vspace{-.5em}
\[
{\mathscr U}_{2}^{[0]} = 0, \quad
{\mathscr U}_{2}^{[1]} = 0, \quad
{\mathscr U}_{2}^{[2]} = \frac{9}{32}, \quad
{\mathscr U}_{2}^{[3]} = \frac{3 \nu}{512} , \quad
{\mathscr U}_{2}^{[4]} = \frac{17 \nu^2}{8192}, \quad
{\mathscr U}_{2}^{[5]} = \frac{205 \nu^3}{262144}, \quad
\dots
\]
\vspace{+.2em}
\[
{\mathscr U}_{3}^{[0]} = 0, \quad
{\mathscr U}_{3}^{[1]} = 0, \quad
{\mathscr U}_{3}^{[2]} = 0, \quad
{\mathscr U}_{3}^{[3]} = 0, \quad
{\mathscr U}_{3}^{[4]} = 0, \quad
{\mathscr U}_{3}^{[5]} = 0, \quad
\dots
\]
\vspace{-.1em}
\begin{equation}
{\mathscr U}_{4}^{[0]} = 0, \quad
{\mathscr U}_{4}^{[1]} = 0, \quad
{\mathscr U}_{4}^{[2]} = 0, \quad
{\mathscr U}_{4}^{[3]} = 0, \quad
{\mathscr U}_{4}^{[4]} = \frac{9}{65536} , \quad
{\mathscr U}_{4}^{[5]} = \frac{405\nu}{2097152} , \quad
\dots
\end{equation}

\bigskip
\noindent 
\underline{(iii) : Vanishing of ${\mathscr U}_{m}$ with odd $m$.} \quad
From the above computation, we can observe that all coefficients of 
${\mathscr U}_1$ and ${\mathscr U}_3$ seem to vanish.
To conclude the proof, we show that all ${\mathscr U}_m$ with odd $m$ are trivial.


Let us introduce a formal solution
$\widetilde{S}(X,\hbar) = \sum_{m \ge -1} \hbar^{m} \widetilde{S}_{m}(X)$ 
to the same Riccati equation \eqref{eq:Riccati}
obtained by imposing a certain condition different from 
\eqref{eq:partial-IMD-example-H-III-3} on ${\mathscr U}_{m}$.
Namely, 
\begin{itemize}
\item 
The even-index terms ${\mathscr U}_0, {\mathscr U}_2, \dots$ in 
$\widetilde{S}_{-1}(X), \widetilde{S}_1(X), \ldots$ are 
fixed by imposing the condition \eqref{eq:partial-IMD-example-H-III-3} as before.
\item 
We set ${\mathscr U}_1 = {\mathscr U}_3 = \cdots = 0$ for the odd-index terms, 
and denote by $\widetilde{S}_0(X), \widetilde{S}_2(X), \ldots$
the coefficient of the formal solution of the Riccati equation \eqref{eq:Riccati} 
determined in this way.
\end{itemize}
The decomposition of $\widetilde{S}(X,\hbar)$ in odd/even degree terms in $\hbar$ 
is given as
$\widetilde{S}(X,\hbar) = \widetilde{S}_{\rm odd}(X,\hbar)+\widetilde{S}_{\rm even}(X,\hbar)$.
We can also verify that
\begin{equation} \label{eq:even-odd-Stilde}
\widetilde{S}_{\text{even}}(X,\hbar) = 
-\dfrac{1}{2} \dfrac{d}{dX} \log \widetilde{S}_{\text{odd}}(X,\hbar)
\end{equation}
holds by the same argument of \cite[\S 2]{KT98}.

Below, we prove the following statement by induction on $k$:
\begin{quote}
$(\ast)_k$ : $S_{m}(X) = \widetilde{S}_{m}(X)$ holds for $-1 \le m \le 2k$, 
and ${\mathscr U}_1 = {\mathscr U}_3 =\cdots = {\mathscr U}_{2k+1} = 0$ holds.
\end{quote}

From the construction and the uniqueness of ${\mathscr U}_0$, 
it is clear that $S_{-1}(X) = \widetilde{S}_{-1}(X)$ holds.
The coefficients of $\hbar^{0}$ 
in the condition \eqref{eq:partial-IMD-example-H-III-3} reads
\begin{equation} \label{eq:partial-IMD-example-hbar-1}
    - \pi i=\oint_\gamma S_{0}(X) dX 
    = \oint_\gamma \dfrac{- \frac{dS_{-1}(X)}{dX}-\frac{{\mathscr U}_1(\Lambda)}{X^2}}{2S_{-1}(X)} dX,
\end{equation}
where we have used \eqref{eq:wkb-rec} to obtain the expression of $S_0(X)$. 
Since 
\begin{equation}
- \pi i= \oint_{\gamma} \dfrac{- \frac{dS_{-1}(X)}{dX}}{2S_{-1}(X)} dX
\end{equation}
holds, 
it follows that the condition \eqref{eq:partial-IMD-example-hbar-1} 
is satisfied if and only if ${\mathscr U}_1 = 0$.
The uniqueness of ${\mathscr U}_1$ satisfying \eqref{eq:partial-IMD-example-H-III-3}  
implies 
$S_{0}(X) = \widetilde{S}_{0}(X)$. Thus we have verified $(\ast)_0$.

As the induction hypothesis, we assume that $(\ast)_k$ holds for some $k \ge 0$.
It follows from 
the fact that ${\mathscr U}_{2k+2}$ is uniquely determined by condition  
\eqref{eq:partial-IMD-example-H-III-3} that we have $S_{2k+1}(X) = \widetilde{S}_{2k+1}(X)$.
The recurrence relations that determine $S_{2k+2}(X)$ and $\widetilde{S}_{2k+2}(X)$ reads:
\begin{align}
S_{2k+2} & = - \frac{1}{2S_{-1}}\left( \sum_{\substack{m_1, m_2 \ge 0 \\ m_1 + m_2 = 2k+1}} S_{m_1} S_{m_2} + \frac{dS_{2k+1}}{dX}\right) - \dfrac{{\mathscr U}_{2k+3}(\Lambda)}{2X^2 S_{-1}}, \\
\widetilde{S}_{2k+2} & = 
- \frac{1}{2\widetilde{S}_{-1}}\left( 
\sum_{\substack{m_1, m_2 \ge 0 \\ m_1 + m_2 = 2k+1}} \widetilde{S}_{m_1} \widetilde{S}_{m_2} 
+ \frac{d\widetilde{S}_{2k+1}}{dX}\right).
\end{align}
By the induction hypothesis, we have
\begin{equation} \label{eq:S-and-Stilde-even}
S_{2k+2} - \widetilde{S}_{2k+2} = - \dfrac{{\mathscr U}_{2k+3}(\Lambda)}{2X^2 S_{-1}}.
\end{equation}
The important fact here is that, by 
\eqref{eq:even-odd-Stilde}, 
the even-index terms 
$\widetilde{S}_{2k+2}$ 
admit anti-derivative with respect to $X$. 
More concretely, the equality 
\begin{align} 
\log \widetilde{S}_{\rm odd}&=\log(\widetilde{S}_{-1}\hbar^{-1}
+\widetilde{S}_1\hbar^1+\widetilde{S}_3\hbar^3+\cdots) 
\notag 
\\
&=\log(\widetilde{S}_{-1}\hbar^{-1})  + 
\biggl(\dfrac{\widetilde{S}_1}{\widetilde{S}_{-1}}\hbar^2
+\dfrac{\widetilde{S}_3}{\widetilde{S}_{-1}}\hbar^4+\cdots\biggr)
- \dfrac{1}{2}\biggl(\dfrac{\widetilde{S}_1}{\widetilde{S}_{-1}} \hbar^2
+ \dfrac{\widetilde{S}_3}{\widetilde{S}_{-1}}\hbar^4+\cdots\biggr)^2+\cdots
\label{eq:log-Sodd-tilde}
\end{align}
and \eqref{eq:even-odd-Stilde} imply that the terms $\widetilde{S}_{2k+2}$ with $k \ge 0$
are expressible as total derivatives of rational expressions in the 
$\widetilde{S}_{-1}, \widetilde{S}_{0}, \cdots, \widetilde{S}_{2k+1}$. 
Hence, it follows that $\oint_\gamma \widetilde{S}_{2k+2}(X) \, dX = 0$ holds for $k \ge 0$, 
and we have
\begin{equation}
\oint_\gamma {S}_{2k+2}(X) \, dX = 
- {\mathscr U}_{2k+3}(\Lambda) \oint_{\gamma} \dfrac{dX}{2X^2 S_{-1}}
\end{equation}
due to \eqref{eq:S-and-Stilde-even}.
Since our requirement \eqref{eq:partial-IMD-example-H-III-3} is that 
the left-hand side of the above equation vanish, 
uniqueness of ${\mathscr U}_{2k+3}$ implies that ${\mathscr U}_{2k+3}(\Lambda) = 0$.
This establishes the claim $(\ast)_{k+1}$, and hence, 
by induction, we have proved ${\mathscr U}_{m} = 0$ for odd $m$. This completes the proof.
\end{proof}

\begin{rem} 
From the proof above, we can conclude the following: 
If we denote by $d$ the zero order of the leading term of $S_{-1}=\sqrt{Q_0}$ at the point where the residue is taken, the condition 
\begin{equation}
V_\gamma = \dfrac{2\pi i\nu}{\hbar}-d\pi i 
\end{equation}
implies that ${\mathscr U}_m=0$ hold for all odd $m$. 
In \S \ref{section:examples}, when analyzing other equations as well, 
we will impose this condition as the analogue of 
\eqref{eq:partial-IMD-example-H-III-3}.
\end{rem}

After explicit computation, we obtained the following result: 
\begin{subequations} 
\label{eq:AP-HIII-3-2}
\begin{align} 
{\mathscr G}_0 & =  2 - 2 \, \nu \, \Lambda 
+ \dfrac{\nu^2}{8}\, \Lambda^2 
+\dfrac{\nu^3}{128} \,\Lambda^3
+\dfrac{5 \, \nu^4}{4096} \, \Lambda^4
+\dfrac{33 \, \nu^5}{131072} \, \Lambda^5
+ O(\Lambda^6), \\
{\mathscr G}_1 & =  
 \frac{9}{32} \,   \Lambda^{2}
+ \frac{3 \, \nu}{512} \, \Lambda^{3}
+ \frac{17 \, \nu^2}{8192} \, \Lambda^{4}
+ \frac{205 \, \nu^3}{262144} \, \Lambda^{5}
+ O(\Lambda^6), \\
{\mathscr G}_2 & 
= \frac{9}{65536}\, \Lambda^{4} 
+ \frac{405 \, \nu}{2097152} \,  \Lambda^{5}
+ O(\Lambda^6), \quad \dots
\end{align}
\end{subequations}
Furthermore, by formally interchanging the order of $\hbar$-expansion and large $t$-expansion, we obtain the following formal series expansion for the accessory parameter:
\begin{align}
{\mathscr E} & = 
2 \, t^{1/2} 
- 2\, \nu \, t^{1/4} 
+ \frac{4 \, \nu^2+9 \, \hbar^2}{32} 
+ \frac{4\, \nu^3+3 \, \hbar^2\nu}{512} \, t^{-1/4}
+ \frac{80 \, \nu^4 +136 \,\hbar^2 \nu^2 +9 \, \hbar^2}{65536} \, t^{-1/2} 
\notag \\
& \quad + \frac{528 \,\nu^5 + 1640 \, \hbar^2 \nu^3 + 405\, \hbar^4 \nu}{2097152} \, t^{-3/4}
+ \frac{9\, \left(224\, \nu^6 + 1120\, \hbar^2 \nu^4 + 654 \, \hbar^4 \nu^2 + 27 \, \hbar^6\right)}{33554432} \, t^{-1} 
\notag \\
& \quad  + \frac{33728 \, \nu^7 + 249872\, \hbar^2 \nu^5 + 276004 \, \hbar^4 \nu^3 + 41607 \, \hbar^6 \nu}{2147483648} \, t^{-5/4} 
+ O(t^{-3/2}).
\label{eq:final-AP-HIII-3-2}
\end{align}

\smallskip
\subsubsection{Expansion of the Accessory Parameter of Heun equation $H_{{\rm III}_3}$  
as $t \to 0$}
\label{subsec:example-AP-III-regular}


Here we will derive the formal expansion of the accessory parameter as $t \to 0$.  
We note that this expansion was obtained in \cite{MM09, PP14} 
(see also \cite[Appendix A and B.4]{LN21} for another derivation).

In this case we do not have to perform any rescaling (i.e., $d_{x} = d_{y} = d_{\mathscr E} = 0$), and can take $\Lambda = t$ as the expansion parameter.
The (rescaled) classical limit and its limit as $\Lambda \to 0$ reads
\begin{equation}
Q_0^{\rm res}(X) = \frac{\Lambda}{X^3} 
- \frac{{\mathscr G}_0}{X^2} + \frac{1}{X} 
\quad \xrightarrow{ \Lambda \to 0 } \quad 
Q_0^{[0]} = - \frac{{\mathscr G}_0^{[0]}}{X^2} + \frac{1}{X}. 
\end{equation}
In this limit, it is readily seen that two of the three zeros of 
$Q^{\rm res}_0(X)$ coalesce at the origin, 
and the limiting curve $Y^2 = Q_0^{[0]}(X)$ is a genus $0$ curve. 
We note that the limiting curve is equivalent to the Bessel curve in \cite{IKT18-2}. 
We therefore choose, as the vanishing cycle $\gamma$ in this limit, 
the closed path encircling these two zeros together with the origin.
The cycle $\gamma$ can be identified with the residue cycle around $X = 0$ 
after taking the limit $\Lambda \to 0$. 
With this choice of cycle $\gamma$, the following holds.

\begin{thm} \label{thm:unique-existence-2}
For any $\nu \in {\mathbb C} \setminus \{0\}$, there exist a unique sequence $\{ {\mathscr G}_{k}(\Lambda) \}_{k \ge 0}$ of holomorphic functions at $\Lambda = 0$ such that the Voros period associated with $\gamma$ becomes 
\begin{equation} \label{eq:partial-IMD-example-H-III-3-2}
V_\gamma = \frac{2 \pi i \nu}{\hbar} + \pi i
\end{equation}
(i.e., $V_{-1} = 2 \pi i \nu$, $V_0 = \pi i$ and $V_m = 0 ~~(m \ge 1)$; that implies the equality \eqref{eq:partial-IMD} with the $-$-sign). 
Moreover, the series coefficients ${\mathscr G}_{k}^{[\ell]}$ of ${\mathscr G}_{k}(\Lambda)$ can be explicitly computable at arbitrary order. 
\end{thm}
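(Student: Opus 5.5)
The plan is to follow the three-step structure of the proof of Theorem~\ref{thm:unique-existence}: allow odd powers, $\mathscr{G}=\sum_{m\ge0}\hbar^m\mathscr{U}_m(\Lambda)$, prove unique existence of each $\mathscr{U}_m$ by the implicit function theorem, and then show that the odd-index terms vanish.

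The new feature is that $\gamma$ no longer shrinks to a smooth point. As $\Lambda\to0$ it becomes the residue cycle around the pole $X=0$ of $Q_0^{[0]}=-\mathscr{G}_0^{[0]}/X^2+1/X$, where $\sqrt{Q_0^{[0]}}$ has a simple pole. With no rescaling, $V_{-1}=\oint_\gamma\sqrt{\Lambda/X^3-\mathscr{U}_0/X^2+1/X}\,dX$. I would not evaluate this naively at $\Lambda=0$; instead I would first justify a uniform expansion in $\Lambda$ on a fixed circle $|X|=r$. For $|X|=r$ fixed and $\Lambda$ small, $\sqrt{Q_0^{\rm res}}$ is single-valued on the circle, because the two turning points near the origin lie inside and one can branch-cut between them. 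Expanding $\sqrt{-\mathscr{U}_0/X^2+1/X+\Lambda/X^3}$ in powers of $\Lambda$ then gives terms holomorphic in $(\Lambda,\mathscr{U}_0)$ near $(0,\mathscr{U}_0^{[0]})$, provided $\mathscr{U}_0^{[0]}\neq0$.

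At $\Lambda=0$ the integrand is $\sqrt{-\mathscr{U}_0}\,X^{-1}\sqrt{1-X/\mathscr{U}_0}$, with residue $\sqrt{-\mathscr{U}_0}$. The condition $V_{-1}=2\pi i\nu$ therefore fixes $\mathscr{U}_0^{[0]}=-\nu^2$ (up to the branch of $S_{-1}$). This is where $\nu\ne0$ is needed. The derivative with respect to $\mathscr{U}_0$ at this point is
\[
2\pi i\,\Res_{X=0}\bigl(-dX/(2X^2S_{-1}^{[0]})\bigr)=-\pi i/\sqrt{-\mathscr{U}_0^{[0]}}\neq0,
\]
so the implicit function theorem gives a unique holomorphic $\mathscr{U}_0(\Lambda)$.

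For $m\ge0$ the recursion \eqref{eq:wkb-rec} gives
\[
S_m=(\text{terms fixed by }\mathscr{U}_{0},\dots,\mathscr{U}_m)-\mathscr{U}_{m+1}/(2X^2S_{-1}).
\]
Imposing $\oint_\gamma S_m=\pi i\,\delta_{m,0}$ is therefore affine in $\mathscr{U}_{m+1}$, with coefficient $\oint_\gamma -dX/(2X^2S_{-1})$, which is nonzero at $\Lambda=0$ and hence near it. This gives a unique holomorphic $\mathscr{U}_{m+1}$ by solving a linear equation, or by the implicit function theorem as before.

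Explicit computability follows by expanding each $S_m$ in $\Lambda$, as in \eqref{eq:S-1-ell}. Every coefficient $S_m^{[\ell]}$ is a Laurent polynomial in $X$ times powers of $S_{-1}^{[0]}=\nu X^{-1}\sqrt{1+X/\nu^2}$, which is holomorphic and nonvanishing away from $X=0$ near the origin. Each condition therefore reduces to a residue at $X=0$, computed by Laurent expansion, and solving order by order in $\Lambda$ gives $\mathscr{U}_m^{[\ell]}$ recursively.

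For the vanishing of the odd terms I would repeat step (iii). First check the $\hbar^0$ condition: $\oint_\gamma(-S_{-1}'/(2S_{-1}))\,dX=-\tfrac12\cdot2\pi i\cdot(\text{order of }S_{-1})$. On $\gamma$ the function $S_{-1}$ has two simple zeros, at the turning points, and one pole of order $3/2$ at $X=0$. Working in $X$ on the double cover, the net order count should reproduce $-d\pi i$ with $d=-1$, giving $+\pi i$; this fixes the sign in \eqref{eq:partial-IMD-example-H-III-3-2}, consistent with the remark after Theorem~\ref{thm:unique-existence}. Hence $\mathscr{U}_1=0$. The induction then uses $\widetilde S_{\rm even}=-\tfrac12\,\partial_X\log\widetilde S_{\rm odd}$ together with \eqref{eq:log-Sodd-tilde}. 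This shows that $\widetilde S_{2k+2}$ is a total derivative of a function single-valued on a neighbourhood of $\gamma$, so $\mathscr{U}_{2k+3}=0$ by uniqueness.

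The main obstacle I expect is the branch and monodromy bookkeeping near the pole. I must make sure that $\log\widetilde S_{\rm odd}$ minus $\log(\widetilde S_{-1}\hbar^{-1})$ is genuinely single-valued along $\gamma$, which holds because the correction terms are ratios $\widetilde S_{2j+1}/\widetilde S_{-1}$ that are even under sheet exchange. I must also get the exact value of $\oint_\gamma S_0$, including the contribution at $X=0$. If that sign bookkeeping is off, the fix is to adjust $V_0$ to $\pm\pi i$ accordingly, as in the remark.
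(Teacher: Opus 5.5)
Your proposal is correct and follows the paper's proof. It uses the same three steps as Theorem~\ref{thm:unique-existence}: $\mathscr{U}_0^{[0]}=-\nu^2$ is fixed by the residue of $\sqrt{Q_0^{[0]}}$ at $X=0$, and the implicit-function derivative $-\pi i/\nu$ is exactly where $\nu\neq0$ enters; your uniform expansion on $|X|=r$ and your remark that the higher-order conditions are affine in $\mathscr{U}_{m+1}$ only make explicit what the paper leaves implicit.

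One correction to your bookkeeping: only one turning point, near $X\approx-\Lambda/\nu^2$, lies inside $\gamma$ together with the triple pole at $X=0$, since the other zero of $X^2-\mathscr{U}_0X+\Lambda$ stays near $-\nu^2$. So $S_{-1}$ has net order $\tfrac12-\tfrac32=-1$ inside $\gamma$, which is what gives $\oint_\gamma\bigl(-S_{-1}'/(2S_{-1})\bigr)dX=+\pi i$; your stated count with two zeros inside would not give this value.
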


\begin{proof}
The strategy of the proof is completely the same as that of Theorem \ref{thm:unique-existence}. 
However, we would like to explain why the condition $\nu \ne 0$ is required in this case.

The comparison of the coefficients of $\hbar^{-1}$ 
in the condition \eqref{eq:partial-IMD-example-H-III-3-2} reads
\begin{equation} \label{eq:partial-IMD-example-hbar-0-2}
V_{-1} 
= \oint_\gamma \sqrt{Q^{\rm res}_0(X)} \, dX
= 2 \pi i \nu. 
\end{equation}
Since $\nu$ does not depend on $t$ (and hence, on $\Lambda$), 
our requirement implies that this equality also hold at $\Lambda = 0$. 
Our choice of the cycle $\gamma$ implies 
\begin{equation}
2 \pi i \nu 
=  \oint_\gamma \sqrt{Q_0^{[0]}(X)} \, dX
= 2 \pi i \Res_{X=0} \sqrt{- \frac{{\mathscr U}_0^{[0]}}{X^2} + \frac{1}{X}} \, dX
= 2 \pi i \sqrt{- \mathscr{U}_0^{[0]}}.
\end{equation}
The condition uniquely fixes the leading term and we have 
\begin{equation} \label{eq:fixing-leading-H-III-3-2}
\mathscr{U}_0^{[0]} = \mathscr{G}_0^{[0]} = - \nu^2.
\end{equation} 
If we set ${\mathscr U}_0 = - \nu^2 + u$, then 
\begin{equation} \label{eq:condition-for-implicit-function-theorem-2}
\left[
\frac{\partial}{\partial u}  
\oint_\gamma \sqrt{Q^{\rm res}_0(X)} \, dX 
\right]
\biggl|_{(\Lambda,u) = (0,0)}
= 2 \pi i \Res_{X = 0} \left( - \frac{dX}{2X^2 \sqrt{Q_0^{[0]}}} \right) 
= - \frac{\pi i}{\nu}.
\end{equation}
Therefore, we need the condition $\nu \ne 0$, and the implicit function theorem guarantees the unique existence of the holomorphic function $u = u(\Lambda)$ satisfying $u(0) = 0$ and \eqref{eq:partial-IMD-example-hbar-0-2}. 
Under this condition, the proof of the unique existence of the higher-order coefficients 
${\mathscr U}_{m}$, as well as the proof of the vanishing of the coefficients for odd $m$, are completely same as those in Theorem \ref{thm:unique-existence}, and are therefore omitted here.
\end{proof}

After explicit computation, we obtained the following result: 
\begin{subequations} 
\begin{align} 
{\mathscr G}_0 & = - \nu^2 - \frac{\Lambda}{2 \, \nu^2} 
- \frac{5 \, \Lambda^2}{32 \, \nu^6} 
- \frac{9 \, \Lambda^3}{64 \, \nu^{10}} 
- \frac{1469\,\Lambda^4}{8191 \, \nu^{14}} 
+ O(\Lambda^5), \\
{\mathscr G}_1 & = \frac{1}{4} - \frac{\Lambda}{8 \, \nu^4} 
- \frac{21 \, \Lambda^2}{64 \, \nu^8} 
- \frac{55 \, \Lambda^3}{64 \, \nu^{12}} 
- \frac{18445 \, \Lambda^4}{8191 \, \nu^{16}} + O(\Lambda^5), \\
{\mathscr G}_2 & = - \frac{\Lambda}{32 \, \nu^6} 
-\frac{219 \, \Lambda^2}{512 \, \nu^{10}} 
-\frac{1495 \, \Lambda^3}{512 \, \nu^{14}} 
-\frac{69853 \, \Lambda^4}{8192 \, \nu^{18}}  + O(\Lambda^5), \quad \dots
\end{align}
\end{subequations}
Thus we have obtained a formal series ${\mathscr G}$ 
satisfying \eqref{eq:partial-IMD-example-H-III-3-2}
whose coefficients ${\mathscr G}_k$ are holomorphic around $t = 0$. 
These equalities are equivalent to the results of \cite{MM09, PP14}.

Furthermore, by formally interchanging the order of 
$\hbar$-expansion and small $\Lambda$-expansion (i.e., small $t$-expansion), 
we obtain the following formal series expansion for the accessory parameter:
\begin{align}
{\mathscr E} & = 
\left(-\nu^2 + \dfrac{\hbar^2}{4}\right) +\left(- \dfrac{1}{2\nu^2}- \dfrac{\hbar^2}{8 \, \nu^4}- \dfrac{\hbar^4}{32 \, \nu^6}+O(\hbar^6)\right)\, t \notag \\
&\quad +\left(- \dfrac{5}{32 \, \nu^6} - \dfrac{21 \, \hbar^2}{64 \, \nu^8}
- \dfrac{219 \, \hbar^4}{512 \, \nu^{10}} + O(\hbar^6)\right)\, t^2 \notag \\
&\quad+\left(- \dfrac{9}{64 \, \nu^{10}} - \dfrac{55 \, \hbar^2}{64 \, \nu^{12}}
- \dfrac{1495 \, \hbar^4}{512 \, \nu^{14}} + O(\hbar^6)\right)\, t^3+O(t^4)  \notag \\
& = 
\left(-\nu^2 + \dfrac{\hbar^2}{4} \right)
- \frac{2}{4\,\nu^2 - \hbar^2} \, t 
- \frac{20\, \nu^2+7 \,\hbar^2}{2\,(4\,\nu^2 - \hbar^2)^3 (\nu^2-\hbar^2)} \, t^2 
\notag \\
& \quad  
- \frac{4 \, (144 \, \nu^4 + 232 \, \hbar^2  \nu^2+ 29 \, \hbar^4 )}{(4 \, \nu^2 - \hbar^2)^5 (4 \, \nu^2 - 9\hbar^2)(\nu^2-\hbar^2)} \, t^3
+ O(t^4).
\label{eq:final-AP-HIII-3-1}
\end{align}

In next subsection, we will compare the series \eqref{eq:final-AP-HIII-3-2} and \eqref{eq:final-AP-HIII-3-1} with the classical irregular conformal blocks.

\smallskip
\begin{rem} \label{rem:rational-expression}
Using ${\mathscr G}^{[\ell]}_m$ determined above, we reconsider how the expansions 
\begin{equation}
    S = \sum_{\ell \ge 0} \Lambda^{\ell} \, S^{[\ell]}(X, \hbar)\quad,\quad
    {\mathscr G} = \sum_{\ell \ge 0} \Lambda^{\ell} {\mathscr G}^{[\ell]}(\hbar) 
\end{equation}
at $\Lambda=0$ are obtained.
Substituting this into the Riccati equation \eqref{eq:Riccati} and now comparing coefficients in the power series expansion in $\Lambda$, we obtain 
\begin{subequations}  \label{eq:Ricatti-HIII-1}
\begin{align}
\hbar^2\biggl((S^{[0]})^2+\dfrac{dS^{[0]}}{dX}\biggr)&=\dfrac{1}{X}-\dfrac{\mathscr{G}^{[0]}}{X^2},\label{Sh-Ri1}\\
\hbar^2\biggl(2S^{[0]}S^{[1]}+\dfrac{dS^{[1]}}{dX}\biggr)&=\dfrac{1}{X^3}-\dfrac{\mathscr{G}^{[1]}}{X^2},\label{Sh-Ri2}\\
\hbar^2\biggl(S^{[0]}S^{[\ell]}+\cdots+S^{[\ell]}S^{[0]}+\dfrac{dS^{[\ell]}}{dX}\biggr)&=-\dfrac{\mathscr{G}^{[\ell]}}{X^2}\quad(\ell\geq 2). \label{Sh-Ri3}
\end{align}
\end{subequations}
The condition \eqref{eq:partial-IMD-example-H-III-3-2} imposes 
\begin{subequations} \label{eq:residue-condition-HIII-1}
\begin{align}
\Res_{X=0} S^{[0]}&=\dfrac{\nu}{\hbar}+\dfrac{1}{2}\eqcolon\lambda,\label{Sh-Re1}\\
\Res_{X=0} S^{[\ell]}&=0\quad(\ell\geq 1).\label{Sh-Re2}
\end{align}
\end{subequations}
Then we can uniquely determine $S^{[0]}$ as a formal power series starting from the $X^{-1}$ term which satisfies (\ref{Sh-Ri1}) and (\ref{Sh-Re1}) :
\begin{equation}
S^{[0]}=\lambda X^{-1} + \dfrac{1}{2 \, \lambda \, \hbar^2} 
- \dfrac{1}{4\,\lambda^2(2\lambda+1) \, \hbar^4} \, X+O(X).
\end{equation}
Hence, by comparing the coefficients of $X^{-2}$ in (\ref{Sh-Ri1}), we obtain 
\begin{equation}
\mathscr{G}^{[0]} = 
-\hbar^2 \, \lambda (\lambda-1)= - \nu^2 + \frac{\hbar^2}{4}. 
\end{equation}

Likewise, under the condition that $2\lambda$ is not an integer (i.e., ${2\nu}/{\hbar}$ is not an integer), we can uniquely determine $S^{[\ell]}$ inductively for $\ell=1, 2, \ldots$ as a formal power series starting from a term $X^{-\ell-1}$, together with $\mathscr{G}^{[\ell]}$, as follows. 

If we denote by $S^{[\ell]}_m$ the coefficient of $X^m$ in $S^{[\ell]}$, 
the recursion relation (\ref{eq:Ricatti-HIII-1}) implies that 
it must satisfies the equation of the form 
$(2\lambda + m) \, S^{[\ell]}_m = F^{[\ell]}_m$, where $F^{[\ell]}_m$ is a polynomial of $S^{[\ell']}_{m'}$ with $\ell' < \ell$, or 
$\ell' = \ell$ and $m' < m$. 
An important observation here is that $F^{[\ell]}_m$ contains the term $- \hbar^{-2} {\mathscr G}^{[\ell]}$ if and only if $m=-1$. 
Therefore, if $2\lambda$ is not an integer, then $S_m^{[\ell]}$ is uniquely determined for all $m \neq -1$. 
On the other hand, (\ref{Sh-Re2}) implies $S_{-1}^{[\ell]} = 0$, which also implies $F_{-1}^{[\ell]} = 0$, and this condition determines $\mathscr{G}^{[\ell]}$ uniquely. 

Some explicit calculation results are as follows:
\begin{subequations}
    \begin{align}
S^{[1]} & 
=\dfrac{1}{(2 \,  \lambda-2) \, \hbar^2} \, X^{-2} + \dfrac{1}{4 \, \lambda^3 \, (2\lambda-2)(2\lambda+1) \, \hbar^6}
+ O(X), \\
S^{[2]} & 
= -\dfrac{1}{(2\lambda-3)(2\lambda-2)^2 \, \hbar^4} \, X^{-3} + \dfrac{1}{\lambda \, (2\lambda-3)(2\lambda-2)^3 \, \hbar^6} \, X^{-2}+ O(X^{0}), 
\end{align}
\end{subequations}
and 
\begin{subequations}
    \begin{align}
\mathscr{G}^{[1]} & =-\frac{1}{\lambda \, (2 \, \lambda-2) \, \hbar^2} = - \frac{2}{4 \, \nu^2 - \hbar^2}, \\
\mathscr{G}^{[2]} & =-\frac{5 \, \lambda^2-5 \, \lambda+3}{\lambda^3 \,  (2 \, \lambda-2)^3(2\,\lambda-3)(2\,\lambda+1) \, \hbar^6}=-\frac{20 \, \nu^2 + 7 \, \hbar^2}{2 \, (4 \, \nu^2-\hbar^2)^3(\nu^2-\hbar^2)}.
\end{align}
\end{subequations}
From this method, it follows that $\mathscr{G}^{[\ell]}$ is a rational function of $\hbar$ and $\nu$, 
and the coefficients ${\mathscr G}^{[\ell]}_{m}$ computed above can be recovered as 
its Taylor coefficients at $\hbar = 0$. 
The final equality in \eqref{eq:final-AP-HIII-3-1} is derived in this way. 

It is also found that when this method is applied to accessory parameters exhibiting irregular singular behavior such as the one studied in \S \ref{sec:H-III-3-expansion-example-2}, the coefficients are not rational functions but rather polynomials in $\nu$ and $\hbar$, as written in \eqref{eq:final-AP-HIII-3-2}.


\end{rem}

\bigskip
\subsection{Classical conformal block and Zamolodchikov-type conjecture}

As mentioned in the introduction, Bonelli--Shchechkin--Tanzini have constructed the candidates ${\mathscr Z}(\varepsilon_1, \varepsilon_2)$ of the Nekrasov-type partition functions, or the corresponding conformal blocks, as series solutions to these bilinear equations for the quantum Painlev\'e $\tau$-function \cite{BST25}.  
What we aim to do in this subsection is to extend Zamolodchikov's conjecture by introducing a series ${\mathscr W}$ as the classical limit of ${\mathscr Z}$ proposed in \cite{BST25}, and compare this with the accessory parameter ${\mathscr E}$ computed in the previous section. 
In particular, we demonstrate this procedure using the case of quantum Painlev\'e ${\rm III}_3$ equation (${\rm QPIII}_3$) as an explicit example. 
(Comparisons for other examples will be discussed in \S \ref{section:examples}.)

\subsubsection{Bilinear equations for the quantum Painlev\'e $\tau$-function}

Following \cite{BST25}, we describe a method for computing series solutions of (candidates of) irregular conformal blocks that satisfy the bilinear equations for quantum Painlev\'e $\tau$-functions. 
In particular, we demonstrate this procedure using ${\rm QPIII}_3$ as an explicit example. 

The quantum Painlev\'e $\tau$-function of \cite{BST25} is defined by the Zak transform of a function $\mathscr{Z}(a; \varepsilon_1, \varepsilon_2 | t)$ as 
\begin{align*}
\tau(a, \eta; \varepsilon_1, \varepsilon_2 | t) = \sum_{n \in \mathbb{Z}} e^{in\eta} \mathscr{Z}(a + n\varepsilon_2; \varepsilon_1, \varepsilon_2 | t),
\end{align*}
where the variables $(a, \eta)$ are assumed to be non-commutative 
and satisfy 
\begin{equation}\label{eq:noncommutative relation}
    a e^{i\eta} = e^{i\eta} (a + 2(\varepsilon_1+\varepsilon_2)).
\end{equation}
In \cite{BST25}, based on the bilinear relations for the 4-point conformal block  derived in \cite{BS15}, they obtained these bilinear equations by using the degenerate scheme of the Painlev\'e equations. 
The bilinear equations for ${\rm QPIII}_3$ $\tau$-function are 
\begin{subequations}   \label{eq:bilinear-134}
\begin{align}
& D_{\varepsilon_1, \varepsilon_2 [\ln t]}^1 (\tau^{(1)}, \tau^{(2)}) = 0, 
\label{eq:bilinear-134-a}
\\
& D_{\varepsilon_1, \varepsilon_2 [\ln t]}^3 (\tau^{(1)}, \tau^{(2)}) = (\varepsilon_1+\varepsilon_2) D_{\varepsilon_1, \varepsilon_2 [\ln t]}^2 (\tau^{(1)}, \tau^{(2)}),
\\ 
     &  
     D_{\varepsilon_1, \varepsilon_2 [\ln t]}^4 (\tau^{(1)}, \tau^{(2)}) + 2 \left( \varepsilon_1 \varepsilon_2 \frac{d}{d \ln t} \right) D_{\varepsilon_1, \varepsilon_2 [\ln t]}^2 (\tau^{(1)}, \tau^{(2)}) 
     \notag \\ 
     & \quad 
     - (\varepsilon_1 \varepsilon_2 + (\varepsilon_1+\varepsilon_2)^2) D_{\varepsilon_1, \varepsilon_2 [\ln t]}^2 (\tau^{(1)}, \tau^{(2)}) + \frac{t}{4} \tau^{(1)} \tau^{(2)} = 0. 
     \label{eq:bilinear-134-c}
\end{align}
\end{subequations}
Here, the generalized Hirota differential operators $D_{\varepsilon_1, \varepsilon_2 [\ln t]}^k$ are defined by
\begin{align}
    D^{k}_{\varepsilon_1, \varepsilon_2 [\ln t]}(f, g)=&
    \sum_{i=0}^k \varepsilon_1^i\varepsilon_2^{k-i}\binom{k}{i}
	\left(t\frac{d}{dt}\right)^if\, 
\left(t\frac{d}{dt}\right)^{k-i}g, 
\end{align}
and the shifted $\tau$-functions are defined by
\begin{equation}
\tau^{(1)}(a, \eta; \varepsilon_1, \varepsilon_2 | t) 
= \tau(a, \eta; 2\varepsilon_1, \varepsilon_2 - \varepsilon_1 | t)
\quad , \quad 
\tau^{(2)}(a, \eta; \varepsilon_1, \varepsilon_2 | t)
= \tau(a, \eta; \varepsilon_1 - \varepsilon_2, 2\varepsilon_2 | t). 
\end{equation}
The bilinear equations \eqref{eq:bilinear-134-a}--\eqref{eq:bilinear-134-c} effectively splits into two bilinear relations for $\mathscr{Z}(a; \varepsilon_1, \varepsilon_2 | t)$ due to the non-commutative relations \eqref{eq:noncommutative relation}:
\begin{equation}
    \sum_{n \in \mathbb{Z} + \frac{i}{2}} \mathfrak{D}^{(j)} \Bigl(  
    \mathscr{Z}(a + 2n\varepsilon_1; 2\varepsilon_1, \varepsilon_2 - \varepsilon_1 | t), 
    \mathscr{Z}(a + 2n\varepsilon_2; \varepsilon_1 - \varepsilon_2, 2\varepsilon_2 | t) \Bigr) = 0, 
    \quad (i=0,1, \, j=1,3,4), 
    \label{eq:bilinear split}
\end{equation}
where $\mathfrak{D}^{(j)}$ is the corresponding bilinear differential operator given in \eqref{eq:bilinear-134-a}--\eqref{eq:bilinear-134-c}.

Bonelli--Shchechkin--Tanzini \cite{BST25} succeeded in obtaining various ``quantum expansions'' that satisfy the bilinear equations \eqref{eq:bilinear-134-a}--\eqref{eq:bilinear-134-c} by imposing that ${\mathscr Z}$ admits an appropriate representation while taking into account the non-commutativity \eqref{eq:noncommutative relation} of the variables. 
Below, we present the explicit forms of the expansions of ${\mathscr Z}$ in the limits $t \to \infty$ and $t \to 0$.

\subsubsection{Expansion of the classical conformal block 
as $t \to \infty$}
\label{subsec:example-CCB-III-irregular}

Here we recall the quantum expansion of ${\mathscr Z}$, 
called ``square exp singularity of ${\rm QPIII}_3$ / $N_f=0$", 
given in \cite[(4.71)--(4.74) in \S 4.5]{BST25}. 
We note that the results of the calculations in this section are included in the result \cite{GMS20} of Gavrylenko--Marshakov--Stoyan, but are presented here for the reader’s convenience.

The ansatz imposed by \cite{BST25} for ${\mathscr Z}$ consists of three products 
\begin{equation}
    \mathscr{Z}(a_D; \varepsilon_1, \varepsilon_2 | t) = 
    \mathscr{Z}_{\text{cl}}(a_D; \varepsilon_1, \varepsilon_2 | t) \,
    \mathscr{Z}_{\text{1-loop}}(a_D; \varepsilon_1, \varepsilon_2) \,
    \mathscr{Z}_{\text{inst}}(a_D; \varepsilon_1, \varepsilon_2 | t).
\end{equation} 
Here, following \cite{BST25}, we use the variable $a_D$ instead of $a$ in describing the ``strong coupling expansion'' $t \to \infty$; the variable $a$ will be used in the ``weak coupling expansion'' $t \to 0$ in \S \ref{subsec:example-CCB-III-regular}. 
Let us further suppose that each factor is expressed as 
\begin{subequations} \label{eq:Z-QPIII-3-irregular}
\begin{align} 
    &\mathscr{Z}_{\text{cl}}(a_D; \varepsilon_1, \varepsilon_2 | s) = s^{\frac{\xi_2 + a_D^2 / 2}{\varepsilon_1 \varepsilon_2}} e^{\frac{\beta s^2 + \xi_1 s + \delta a_D s}{\varepsilon_1 \varepsilon_2}},
    \\
    &\mathscr{Z}_{\text{1-loop}}(a_D; \varepsilon_1, \varepsilon_2) = e^{-\frac{\chi a_D^2}{2\varepsilon_1 \varepsilon_2}} \,\Gamma_2(a_D+\mu;\varepsilon_1, \varepsilon_2)^{-1},
    \\
    &\mathscr{Z}_{\text{inst}}(a_D; \varepsilon_1, \varepsilon_2 | s) = 1 + \sum_{k=1}^{\infty} Q_{3k}(a_D) (\varepsilon_1 \varepsilon_2 s)^{-k}.
    \label{eq:Z-QPIII-3-irregular-anzats-3}
\end{align} 
\end{subequations}
where $s=-32 \, i \, t^{1/4}$, $\xi_1,\xi_2,\beta,\delta,\chi,\mu$ are complex parameters and $Q_{3k}(a_D)$ is a polynomial in $a_D$ of degree $3k$, and $\Gamma_2(x|\omega_1,\omega_2)$ is the Barnes double gamma function, which satisfies the difference equations
\begin{equation} \label{eq:difference-1-loop}
\Gamma_2(x+\omega_1|\omega_1,\omega_2)=\frac{\Gamma_2(x|\omega_1,\omega_2)}{\Gamma_1(x|\omega_2)},\quad \Gamma_2(x+\omega_2|\omega_1,\omega_2)=\frac{\Gamma_2(x|\omega_1,\omega_2)}{\Gamma_1(x|\omega_1)}.
\end{equation}
Here, $\Gamma_1(x|\omega)$ is the Barnes multiple gamma function of order 1 satisfying
\begin{equation}
 \Gamma_1(x+\omega|\omega)=x\Gamma_1(x|\omega), \quad 
 \Gamma_1(x|-\omega)=\frac{1}{\Gamma_1(x+\omega|\omega)}. 
\end{equation}

Substituting these expressions into the bilinear relations \eqref{eq:bilinear split}, we may observe that the parameters $\xi_1,\xi_2,\beta,\delta,\chi,\mu$ and the polynomials $Q_{3k}(a_D)$ are determined inductively. 
They are given by
\begin{equation}
\xi_1=0,\quad 
\xi_2=\frac{9 (\varepsilon_1+\varepsilon_2)^2}{8}+\frac{1}{4},\quad 
\beta=\frac{1}{256}, \quad \delta=\frac{1}{4},\quad 
\chi=0,\quad \mu=-\frac{\varepsilon_1+\varepsilon_2}{2},
\end{equation}
and
\begin{subequations}
\begin{align}
   & Q_3(a_D;\varepsilon_1,\varepsilon_2)
   =\frac{a_D \left(4 a_D^2+3 \varepsilon_1^2+8 \varepsilon_1 \varepsilon_2+3 \varepsilon_2^2\right)}{4}, 
    \\
   &  Q_6(a_D;\varepsilon_1,\varepsilon_2) \notag \\
   & \quad = -\frac{1}{32} 
   \Biggl[  -16 a_D^6  + \left( -24 \varepsilon_1^2 + 16 \varepsilon_1 \varepsilon_2 - 24 \varepsilon_2^2 \right) a_D^4 
\notag \\
& \qquad + \left( -9 \varepsilon_1^4 + 88 \varepsilon_1^3 \varepsilon_2 + 238 \varepsilon_1^2 \varepsilon_2^2 + 88 \varepsilon_1 \varepsilon_2^3 - 9 \varepsilon_2^4 \right) a_D^2
\notag \\ & \qquad 
+ \left( 9 \varepsilon_1^5 \varepsilon_2 + 48 \varepsilon_1^4 \varepsilon_2^2 + 78 \varepsilon_1^3 \varepsilon_2^3 + 48 \varepsilon_1^2 \varepsilon_2^4 + 9 \varepsilon_1 \varepsilon_2^5 \right) 
\Biggr],
\\
& Q_9(a_D;\varepsilon_1,\varepsilon_2) 
\notag \\
& \quad =\frac{a_D}{384} 
\Biggl[ 
64 a_D^8 + \left( 144 \varepsilon_1^2 - 576 \varepsilon_1 \varepsilon_2 + 144 \varepsilon_2^2 \right) a_D^6
\notag \\
& \qquad + \left( 108 \varepsilon_1^4 - 1776 \varepsilon_1^3 \varepsilon_2 - 552 \varepsilon_1^2 \varepsilon_2^2 - 1776 \varepsilon_1 \varepsilon_2^3 + 108 \varepsilon_2^4 \right) a_D^4 
\notag \\
&
\qquad + \left( 27 \varepsilon_1^6 - 1116 \varepsilon_1^5 \varepsilon_2 + 7057 \varepsilon_1^4 \varepsilon_2^2 + 18552 \varepsilon_1^3 \varepsilon_2^3 + 7057 \varepsilon_1^2 \varepsilon_2^4 - 1116 \varepsilon_1 \varepsilon_2^5 + 27 \varepsilon_2^6 \right) a_D^2 
\notag \\
&
\qquad + \left( - 81 \varepsilon_1^7 \varepsilon_2 + 2592 \varepsilon_1^6 \varepsilon_2^2 + 13425 \varepsilon_1^5 \varepsilon_2^3 + 21312 \varepsilon_1^4 \varepsilon_2^4 + 13425 \varepsilon_1^3 \varepsilon_2^5 + 2592 \varepsilon_1^2 \varepsilon_2^6 - 81 \varepsilon_1 \varepsilon_2^7 \right) 
\Biggr],
\\
& Q_{12}(a_D;\varepsilon_1,\varepsilon_2) 
\notag \\
& \quad =\frac{1}{6144} 
\Biggl[
256 a_D^{12} 
+ \left( 768 \varepsilon_1^2 - 5632 \varepsilon_1 \varepsilon_2 + 768 \varepsilon_2^2 \right) a_D^{10} 
\notag \\
& \qquad + \left( 864 \varepsilon_1^4 - 19968 \varepsilon_1^3 \varepsilon_2 + 33216 \varepsilon_1^2 \varepsilon_2^2 - 19968 \varepsilon_1 \varepsilon_2^3 + 864 \varepsilon_2^4 \right) a_D^8
\notag \\
& \qquad + \left( 432 \varepsilon_1^6 - 21312 \varepsilon_1^5 \varepsilon_2 + 210448 \varepsilon_1^4 \varepsilon_2^2 + 166656 \varepsilon_1^3 \varepsilon_2^3 + 210448 \varepsilon_1^2 \varepsilon_2^4 - 21312 \varepsilon_1 \varepsilon_2^5 + 432 \varepsilon_2^6 \right) a_D^6 
\notag \\
& \qquad + \left( 81 \varepsilon_1^8 - 7776 \varepsilon_1^7 \varepsilon_2 + 206052 \varepsilon_1^6 \varepsilon_2^2 - 830176 \varepsilon_1^5 \varepsilon_2^3 - 2213466 \varepsilon_1^4 \varepsilon_2^4 - 830176 \varepsilon_1^3 \varepsilon_2^5\right.
\notag \\
& \qquad\quad  \left.+ 206052 \varepsilon_1^2 \varepsilon_2^6 - 7776 \varepsilon_1 \varepsilon_2^7 + 81 \varepsilon_2^8 \right) a_D^4 
\notag \\
& \qquad 
 + \left( - 486 \varepsilon_1^9 \varepsilon_2 + 41040 \varepsilon_1^8 \varepsilon_2^2 - 808128 \varepsilon_1^7 \varepsilon_2^3 - 3941328 \varepsilon_1^6 \varepsilon_2^4 - 6105012 \varepsilon_1^5 \varepsilon_2^5 - 3941328 \varepsilon_1^4 \varepsilon_2^6
\right.
\notag \\
&\qquad\quad \left.- 808128 \varepsilon_1^3 \varepsilon_2^7 + 41040 \varepsilon_1^2 \varepsilon_2^8 - 486 \varepsilon_1 \varepsilon_2^9 \right) a_D^2 
\notag \\
& \qquad + \left( 243 \varepsilon_1^{10} \varepsilon_2^2 - 44064 \varepsilon_1^9 \varepsilon_2^3 - 313740 \varepsilon_1^8 \varepsilon_2^4 - 831456 \varepsilon_1^7 \varepsilon_2^5 - 1124046 \varepsilon_1^6 \varepsilon_2^6 - 831456 \varepsilon_1^5 \varepsilon_2^7
\right.
\notag \\
&\quad\qquad \left.- 313740 \varepsilon_1^4 \varepsilon_2^8 - 44064 \varepsilon_1^3 \varepsilon_2^9 + 243 \varepsilon_1^2 \varepsilon_2^{10} \right)\Biggr],
\\
& Q_{15}(a_D;\varepsilon_1,\varepsilon_2) 
\notag \\ 
& \quad =\frac{a_D}{122880} 
\Biggl[ 1024 a_D^{14} 
 + \left( 3840 \varepsilon_1^2 - 40960 \varepsilon_1 \varepsilon_2 + 3840 \varepsilon_2^2 \right) a_D^{12} 
\notag \\
& \qquad + \left( 5760 \varepsilon_1^4 - 171520 \varepsilon_1^3 \varepsilon_2 + 600320 \varepsilon_1^2 \varepsilon_2^2 - 171520 \varepsilon_1 \varepsilon_2^3 + 5760 \varepsilon_2^4 \right) a_D^{10}
\notag\\
& \qquad + \left( 4320 \varepsilon_1^6 - 253440 \varepsilon_1^5 \varepsilon_2 + 3337120 \varepsilon_1^4 \varepsilon_2^2 - 2296320 \varepsilon_1^3 \varepsilon_2^3
\right.
\notag \\
& \qquad \quad \left.+ 3337120 \varepsilon_1^2 \varepsilon_2^4 - 253440 \varepsilon_1 \varepsilon_2^5 + 4320 \varepsilon_2^6 \right) a_D^8
\notag \\
& \qquad + \bigl( 1620 \varepsilon_1^8 - 164160 \varepsilon_1^7 \varepsilon_2 + 4892880 \varepsilon_1^6 \varepsilon_2^2 - 34381120 \varepsilon_1^5 \varepsilon_2^3 - 36999432 \varepsilon_1^4 \varepsilon_2^4
\notag \\
& \qquad\quad - 34381120 \varepsilon_1^3 \varepsilon_2^5 + 4892880 \varepsilon_1^2 \varepsilon_2^6 - 164160 \varepsilon_1 \varepsilon_2^7 + 1620 \varepsilon_2^8 \bigr) a_D^6
\notag\\
& \qquad + \bigl( 243 \varepsilon_1^{10} - 43200 \varepsilon_1^9 \varepsilon_2 + 2537055 \varepsilon_1^8 \varepsilon_2^2 - 52037920 \varepsilon_1^7 \varepsilon_2^3 + 113231582 \varepsilon_1^6 \varepsilon_2^4 + 333908800 \varepsilon_1^5 \varepsilon_2^5 
\notag \\
& \qquad \quad  + 113231582 \varepsilon_1^4 \varepsilon_2^6 - 52037920 \varepsilon_1^3 \varepsilon_2^7 + 2537055 \varepsilon_1^2 \varepsilon_2^8 - 43200 \varepsilon_1 \varepsilon_2^9 + 243 \varepsilon_2^{10} \bigr) a_D^4
\notag\\
& \qquad + \bigl( - 2430 \varepsilon_1^{11} \varepsilon_2 + 374220 \varepsilon_1^{10} \varepsilon_2^2 - 19525950 \varepsilon_1^9 \varepsilon_2^3 + 275671824 \varepsilon_1^8 \varepsilon_2^4 + 1287899580 \varepsilon_1^7 \varepsilon_2^5 
\notag \\
&\quad\qquad  + 1960171080 \varepsilon_1^6 \varepsilon_2^6 + 1287899580 \varepsilon_1^5 \varepsilon_2^7 + 275671824 \varepsilon_1^4 \varepsilon_2^8 - 19525950 \varepsilon_1^3 \varepsilon_2^9
\notag \\
&\quad\qquad + 374220 \varepsilon_1^2 \varepsilon_2^{10} - 2430 \varepsilon_1 \varepsilon_2^{11} \bigr) a_D^2 
\notag \\
& \qquad + \bigl( 3645 \varepsilon_1^{12} \varepsilon_2^2 - 942840 \varepsilon_1^{11} \varepsilon_2^3 + 54505737 \varepsilon_1^{10} \varepsilon_2^4 + 363517920 \varepsilon_1^9 \varepsilon_2^5 + 918147258 \varepsilon_1^8 \varepsilon_2^6 
\notag \\
&\quad\qquad + 1221269040 \varepsilon_1^7 \varepsilon_2^7 
+ 918147258 \varepsilon_1^6 \varepsilon_2^8 + 363517920 \varepsilon_1^5 \varepsilon_2^9 + 54505737 \varepsilon_1^4 \varepsilon_2^{10} 
\notag \\
&\quad\qquad - 942840 \varepsilon_1^3 \varepsilon_2^{11} + 3645 \varepsilon_1^2 \varepsilon_2^{12} \bigr) \Biggr]. 
\end{align} 
\end{subequations}
Due to the factor $(\varepsilon_1 \varepsilon_2)^{-k}$ in \eqref{eq:Z-QPIII-3-irregular-anzats-3}, the series coefficients of ${\mathscr Z}$ have poles at $ \varepsilon_1, \varepsilon_2 = 0$, and their pole order increases as $k$ becomes larger. 
However, as observed in \cite{BST25}, taking the logarithm leads to nontrivial cancellations, and it is expected that the order of the poles becomes at most one:
\begin{align}
& - \varepsilon_1 \varepsilon_2 \log {\mathscr Z} \notag \\
& \quad = 
-\dfrac{s^2}{256}-\dfrac{a_Ds}{4}-\dfrac{1}{8}(4a_D^2+9\varepsilon_1^2+20\varepsilon_1\varepsilon_2+9\varepsilon_2^2)\log s 
- \dfrac{a_D(4a_D^2+3\left( \varepsilon_1 + \varepsilon_2 \right)^2 + 2\varepsilon_1\varepsilon_2)}{4s} 
\notag \\
& \qquad +
\dfrac{80a_D^4+136a_D^2\varepsilon^2+9\left( \varepsilon_1 + \varepsilon_2 \right)^4+48a_D^2\varepsilon_1\varepsilon_2+12(\varepsilon_1+\varepsilon_2)^2\varepsilon_1\varepsilon_2}{32s^2}
\notag\\
& \qquad -\frac{a_D}{48s^3}  \Biggl[
528 a_D^4 
+ \left( 1640 \varepsilon_1^2 + 3584 \varepsilon_1 \varepsilon_2 + 1640 \varepsilon_2^2 \right) a_D^2 
\notag\\
&
\quad\qquad +\left( 405 \varepsilon_1^4 + 1920 \varepsilon_1^3 \varepsilon_2 + 3006 \varepsilon_1^2 \varepsilon_2^2 + 1920 \varepsilon_1 \varepsilon_2^3 + 405 \varepsilon_2^4 \right) 
 \Biggr]
\notag\\
& \qquad + \frac{1}{32 s^4}  \Biggl[ 
2016 a_D^6 
+ 96 \left( 105 \varepsilon_1^2 + 218 \varepsilon_1 \varepsilon_2 + 105 \varepsilon_2^2 \right) a_D^4 
\notag\\
&\qquad\quad + \left( 5886 \varepsilon_1^4 + 25376 \varepsilon_1^3 \varepsilon_2 + 38564 \varepsilon_1^2 \varepsilon_2^2 + 25376 \varepsilon_1 \varepsilon_2^3 + 5886 \varepsilon_2^4 \right) a_D^2 
\notag\\
&\qquad\quad + 3 \left( \varepsilon_1 + \varepsilon_2 \right)^2 \left( 81 \varepsilon_1^4 + 402 \varepsilon_1^3 \varepsilon_2 + 602 \varepsilon_1^2 \varepsilon_2^2 + 402 \varepsilon_1 \varepsilon_2^3 + 81 \varepsilon_2^4 \right) 
 \Biggr]
\notag\\
&\qquad -\frac{a_D}{80 s^5} \Biggl[ 33728 a_D^6 
+ 112 \left( 2231 \varepsilon_1^2 + 4456 \varepsilon_1 \varepsilon_2 + 2231 \varepsilon_2^2 \right) a_D^4
\notag\\
&\qquad\quad + 4 \left( 69001 \varepsilon_1^4 + 277040 \varepsilon_1^3 \varepsilon_2 + 411950 \varepsilon_1^2 \varepsilon_2^2 + 277040 \varepsilon_1 \varepsilon_2^3 + 69001 \varepsilon_2^4 \right) a_D^2 
\notag\\
&\qquad\quad + 3 \left( 13869 \varepsilon_1^6 + 87080 \varepsilon_1^5 \varepsilon_2 + 216891 \varepsilon_1^4 \varepsilon_2^2 + 287600 \varepsilon_1^3 \varepsilon_2^3 + 216891 \varepsilon_1^2 \varepsilon_2^4 + 87080 \varepsilon_1 \varepsilon_2^5 + 13869 \varepsilon_2^6 \right) \Biggr]
\notag\\
&\qquad +O(s^{-6}), 
\label{eq:logZ-HIII-3-2}
\end{align}
where we have omitted the contribution from the $1$-loop part; 
see Remark \ref{rem:1-loop-part} below. 
As far as the computations in \cite{BST25} go, the expansion coefficients 
on the right-hand side do not involve $\varepsilon_1, \varepsilon_2$ in the denominator. 
Therefore we can expect that the following limit exists\footnote{
To compare with the accessory parameter \eqref{eq:final-AP-HIII-3-2}, 
we set $\varepsilon_1 = i \hbar$ instead of  $\varepsilon_1 = \hbar$.}:
\begin{align}
{\mathscr W} 
& =  - \varepsilon_1 \varepsilon_2 \log {\mathscr Z}  
\Bigl|_{(\varepsilon_1, \varepsilon_2) = (i \hbar, 0)} 
\notag \\
& = -\frac{s^2}{256}
- \frac{a_D}{4} \, s
- \frac{4 \, a_D^2 - 9 \, \hbar^2}{8}\,\log s
- \frac{4 \, a_D^3 - 3 \, \hbar^2 a_D }{4} \, s^{-1}
+ \frac{80 \, a_D^4 - 136 \, \hbar^2 a_D^2  + 9 \, \hbar^4}{32} \, s^{-2}
\notag \\ & \quad  
- \frac{528 \, a_D^5 - 1640 \, \hbar^2 a_D^3  + 405 \,\hbar^4 a_D }{48} \, s^{-3}
+ \frac{9\,\left(224 \, a_D^6 - 1120\, \hbar^2 a_D^4  + 654 \, \hbar^4 a_D^2  - 27\, \hbar^6\right)}{32} \, s^{-4}
\notag \\ & \quad  
- \frac{33728 \, a_D^7 - 249872 \, \hbar^2  a_D^5 + 276004 \, \hbar^4 a_D^3  - 41607 \, \hbar^6 a_D}{80} \, s^{-5}
+ O(s^{-6}).
\label{eq:CCB-HIII-3-2}
\end{align}
This is the (candidate of the) classical conformal block.  
Then, the Zamolodchikov-type conjecture in the presence of irregular singularities is formulated as follows.

\begin{conj}[{Cf., \cite{GMS20}}]
The classical conformal block \eqref{eq:CCB-HIII-3-2} exists. 
Moreover, under the identification 
\begin{equation}
s = -32 \, i \, t^{1/4}, \quad 
a_D = i \, \nu,  
\end{equation}
the accessory parameter \eqref{eq:final-AP-HIII-3-2}
and the classical conformal block \eqref{eq:CCB-HIII-3-2}
are related as follows:
\begin{equation} \label{eq:conj-H-III3-2}
{\mathscr E} = t\frac{d}{dt} {\mathscr W}. 
\end{equation} 
\end{conj}

From the above computations, we can confirm that the equality \eqref{eq:conj-H-III3-2} holds up to $t^{-5/4}$. 

\begin{rem} \label{rem:GMS20}
In \cite[(5.36) and (5.40)]{GMS20}, Gavrylenko--Marshakov--Stoyan refer to the Zamolodchikov-type conjecture. 
While we analyze the Heun equation directly, \cite{GMS20} instead proceed via the Painlev\'e ${\rm III}_3$ equation and further employ the blow-up equation to determine the accessory parameter (i.e., the eigenvalue of the Mathieu equation). 
A rigorous comparison of these two approaches remains a task for future work.
\end{rem}

\smallskip
\begin{rem} \label{rem:1-loop-part}
Since the derivative with respect to $t$ will be considered in the main proposal \eqref{eq:conj-H-III3-2} in the conjecture, we omit the contribution from the 1-loop part ${\mathscr Z}_{\text{1-loop}}$ in \eqref{eq:logZ-HIII-3-2}--\eqref{eq:CCB-HIII-3-2}. 
We will adopt the same convention in the examples below, but note that, when deriving such an expansion from the bilinear equation \eqref{eq:bilinear split}, the difference relation \eqref{eq:difference-1-loop} for the 1-loop part plays a crucial role.
\end{rem}

\smallskip
\subsubsection{Expansion of the classical conformal block 
as $t \to 0$}
\label{subsec:example-CCB-III-regular}

As in the previous case, we show that the expansion of ${\mathscr Z}$ as $t \to 0$ can also be derived from the bilinear relation \eqref{eq:bilinear split}, similarly to the results in \cite{BS15} and \cite[Appendix C]{BST25}. 
The resulting expansion of $\mathscr{Z}(a; \varepsilon_1, \varepsilon_2 | t)$ corresponds to the full $\mathcal{N}=2$, $D=4$ supersymmetric $SU(2)$ partition function without hypermultiplets, which is expressed in terms of the pure Nekrasov partition function. 

In this case we impose the following factorization property
\begin{equation*}
    \mathscr{Z}(a; \varepsilon_1, \varepsilon_2 | t) = \mathscr{Z}_{\text{cl}}(a; \varepsilon_1, \varepsilon_2 | t) \,\mathscr{Z}_{\text{1-loop}}(a; \varepsilon_1, \varepsilon_2) \,\mathscr{Z}_{\text{inst}}(a; \varepsilon_1, \varepsilon_2 | t)
\end{equation*}
and suppose that each factor is expressed as 
\begin{align*}
    &\mathscr{Z}_{\text{cl}}(a; \varepsilon_1, \varepsilon_2 | t) = t^{\frac{(\varepsilon_1+\varepsilon_2)^2/4 - a^2}{\varepsilon_1 \varepsilon_2}} ,
    \\
    &\mathscr{Z}_{\text{1-loop}}(a; \varepsilon_1, \varepsilon_2) = \Gamma_2(2a|\varepsilon_1, \varepsilon_2) \, \Gamma_2(-2a|\varepsilon_1, \varepsilon_2),
    \\
    &\mathscr{Z}_{\text{inst}}(a; \varepsilon_1, \varepsilon_2 | t) = 1 + \sum_{k=1}^{\infty} P_{k}(a;\varepsilon_1,\varepsilon_2) (\varepsilon_1 \varepsilon_2 t)^{k}. 
\end{align*}
It follows from the bilinear equation \eqref{eq:bilinear split} that the coefficients $P_k$ are expressed for example as 
\begin{align}
     &P_1(a; \varepsilon_1, \varepsilon_2) = - \frac{2}{{\varepsilon_1^2 \varepsilon_2^2} \, (4 \, a^2-(\varepsilon_1+\varepsilon_2)^2)}, 
     \\
     &P_2(a; \varepsilon_1, \varepsilon_2) = 
     - \frac{8 \, (\varepsilon_1+\varepsilon_2)^2 + \varepsilon_1 \varepsilon_2 - 8 \, a^2}{{\varepsilon_1^4 \varepsilon_2^4} \,
     \left( 4 \, a^2-(\varepsilon_1+\varepsilon_2)^2 \right)
     \left(4 \, a^2-(2\varepsilon_1+\varepsilon_2)^2 \right)
     \left(4 \, a^2- (\varepsilon_1+2\varepsilon_2)^2 \right) 
     }, \quad \dots
\end{align}
Then, we may observe that   
\begin{align}
    & - \varepsilon_1\varepsilon_2\log \mathscr{Z} = 
    \left( a^2 - \frac{(\varepsilon_1+\varepsilon_2)^2}{4} \right) \,\log t + 
     \frac{2}{4 \, a^2 - (\varepsilon_1+\varepsilon_2)^2} \, t 
    \notag \\ 
    & \quad\qquad
    +\frac{20 \, a^2 + 7 \, \varepsilon_1^2 + 16 \, \varepsilon_1\varepsilon_2 + 7 \, \varepsilon_2^2}{
    \left(4 \, a^2-(\varepsilon_1+\varepsilon_2)^2\right)^2 
    \left(4 \, a^2- (2\varepsilon_1+\varepsilon_2)^2\right) 
    \left(4 \, a^2 - (\varepsilon_1+2\varepsilon_2)^2\right)}t^2 
    + O(t^3). 
\end{align}
has coefficients which are expected to be holomorphic at $\varepsilon_1, \varepsilon_2 = 0$.
Therefore, we can expect that the following limit exists:
\begin{align}
{\mathscr W} 
& =  - \varepsilon_1 \varepsilon_2 \log {\mathscr Z}  
\Bigl|_{(\varepsilon_1, \varepsilon_2) = (\hbar, 0)} 
= \left( a^2 - \frac{\hbar^2}{4} \right) \log t 
+ \frac{2t}{4\,a^2 - \hbar^2}
+ \frac{(20\,a^2 + 7\,\hbar^2)t^2}{4\,(4\,a^2 - \hbar^2)^3(a^2 - \hbar^2)} + O(t^3).
\label{eq:CCB-HIII-3-1}
\end{align}
This is the NS twisted superpotential with $N_f = 0$ \cite{NS09}. 
Through the AGT correspondence, 
this is expected to agree with\footnote{ \label{foot:virasoro-hbar}
We have introduced $\hbar$ by rescalings:  
$\delta_\sigma \mapsto \delta_\sigma/\hbar^2$ and $\widetilde{\mathscr W} \mapsto \widetilde{\mathscr W}/\hbar^2$. 
When considering the classical conformal block constructed using the Virasoro algebra in \S \ref{section:examples}, we introduce $\hbar$ in a similar manner.
} 
\begin{align}\label{eq:CCB-HIII-3-1-Virasoro}
    \widetilde{\mathscr{W}} 
    = \delta_\sigma \, \log t + \dfrac{t}{2 \, \delta_\sigma}+\dfrac{5 \, \delta_\sigma - 3 \, \hbar^2}{16 \, \delta_\sigma^3 \, (4\,\delta_\sigma + 3 \, \hbar^2)} \, t^2+O(t^3), 
\end{align}
which is the classical limit of the irregular conformal block with a regular singular expansion, given in \cite{Gaiotto09} in terms of the Virasoro algebra; see  (\cite[Section 2.2.3, (2.20c)]{LN21}). 
More precisely, the equality ${\mathscr W}  = - \widetilde{\mathscr W}$
is expected to hold under the identification $\delta_\sigma = - a^2 + \hbar^2/4$.

In this case, the Zamolodchikov-type conjecture has already been stated in \cite{MM09, PP14}, but for completeness, we would like to state it here as well.

\begin{conj}[Cf.\,\cite{MM09, PP14}]
The classical conformal blocks \eqref{eq:CCB-HIII-3-1}--\eqref{eq:CCB-HIII-3-1-Virasoro} exist. 
Moreover, under the identification 
\begin{equation}
a = \nu \quad, \quad \delta_\sigma = - \nu^2 + \frac{\hbar^2}{4}, 
\end{equation}
the accessory parameter \eqref{eq:final-AP-HIII-3-1}
and the classical conformal blocks \eqref{eq:CCB-HIII-3-1}--\eqref{eq:CCB-HIII-3-1-Virasoro}
are related as follows:
\begin{equation} \label{eq:conj-HIII-3-1}
\mathscr{E} = - t\dfrac{d}{dt}\mathscr{W} = t\dfrac{d}{dt} \widetilde{\mathscr{W}}. 
\end{equation} 
\end{conj}

Since the expressions become lengthy, we do not list 
the coefficients in \eqref{eq:final-AP-HIII-3-1} and \eqref{eq:CCB-HIII-3-1} beyond this point.  
However, using Mathematica we have computed several further terms, 
and the agreement of the coefficients in \eqref{eq:conj-HIII-3-1}
has been confirmed up to slightly higher orders.

\section{Examples}
\label{section:examples}

In the previous section, taking the equation 
$H_{{\rm III}_3}$ as an example, we explained how the accessory parameter can be computed from the Voros period and described its relation to the classical conformal block. 
In this section, we show that an analogous relationship is expected to hold for all confluent types of the Heun equation. 
More specifically, for each Heun equation, we derive formal series expansions of various accessory parameters by applying the method developed in the previous section, and formulate, in the form of conjectures, their relationship with the series expansions of the classical limits of the conformal blocks obtained in \cite{BST25}. 
Since the expansions of the accessory parameters are derived in exactly the same way as in Theorems \ref{thm:unique-existence} and \ref{thm:unique-existence-2} of the previous section, we do not repeat the proof for each example here.
In some cases (e.g., Conjectures \ref{conj:H-V-2} and \ref{conj:H-IV-1} etc.), the Zamolodchikov-type conjecture has already been stated in previous works; however, for completeness, we include them here as well.

\bigskip
\subsection{Heun equation $H_{\rm VI}$}

The equation $H_{{\rm VI}}$ has the following potential
\begin{equation}
Q_{{\rm VI}} = 
\frac{\theta_0^2 - \frac{\hbar^2}{4}}{x^2} 
+ \frac{\theta_1^2 - \frac{\hbar^2}{4}}{(x-1)^2} + \frac{\theta_t^2 - \frac{\hbar^2}{4}}{(x-t)^2}
+ \frac{\theta_\infty^2 - \theta_0^2 - \theta_1^2 - \theta_t^2 + \frac{\hbar^2}{2}}{x(x-1)} 
- \frac{{\mathscr E}}{x(x-1)(x-t)}.
\end{equation}
In the following, we derive the expansion of 
${\mathscr E}$ in the limit $t \to 0$. 
Note that the quadratic differential
$Q_{\rm VI}(x) \, dx^2$ is invariant under 
the following changes of parameters:
\begin{itemize}
    \item 
    $x \mapsto 1 - x$,\, 
    $t \mapsto 1 -t$,\, 
    ${\mathscr E} \mapsto - {\mathscr E}$, \, 
    $\theta_0 \mapsto \theta_1$, \, 
    $\theta_1 \mapsto \theta_0$.
    \item 
    $x \mapsto \frac{1}{x}$, \, 
    $t \mapsto \frac{1}{t}$, \, 
    ${\mathscr E} \mapsto  \frac{{\mathscr E}}{t} - 2 \left( \theta_t^2 - \frac{\hbar^2}{4}\right)\frac{t-1}{t}$, \,
    $\theta_0 \mapsto \theta_\infty$, \, 
    $\theta_\infty \mapsto \theta_0$.
\end{itemize}
As a consequence of these symmetries, 
the behavior of ${\mathscr E}$ in the limits 
$t \to 1$ and $t \to \infty$ can be obtained from the 
behavior as $t \to 0$, and we therefore omit it here.

\bigskip
\noindent
{\bf Series expansion of ${\mathscr E}$.} \, 
In this case, we do not have to rescale the variables 
(i.e., $(d_{x}, d_{y}, d_{\mathscr E}) = (0, 0, 0)$) 
and take $\Lambda = t$.
The spectral curve becomes
\[
Q_{0}^{\rm res} =\frac{\theta_0^2}{X^2} 
+ \frac{\theta_1^2}{(X-1)^2} + \frac{\theta_t^2}{(X-\Lambda)^2}
+ \frac{\theta_\infty^2 - \theta_0^2 - \theta_1^2 - \theta_t^2}{X(X-1)} 
- \frac{{\mathscr G}_0}{X(X-1)(X-\Lambda)} 
\]
\begin{equation}
\quad\qquad \xrightarrow{ \Lambda \to 0 } \quad 
Q_0^{[0]} = \frac{\theta_0^2}{X^2} 
+ \frac{\theta_1^2}{(X-1)^2} + \frac{\theta_t^2}{X^2}
+ \frac{\theta_\infty^2 - \theta_0^2 - \theta_1^2 - \theta_t^2}{X(X-1)} 
- \frac{{\mathscr G}^{[0]}_0}{X^2(X-1)} . 
\end{equation}
In this limit, two of the four zeros of 
$Q^{\rm res}_0$ coalesce at $X=0$, 
and the limiting curve ${\mathcal C}^{\rm res}_{\rm deg}$ is a genus $0$ curve (which is called the Gauss (hypergeometric) curve in \cite{IKT18-2}). 
Therefore, for sufficiently small $\Lambda$, 
we choose the cycle $\gamma$ to be a circle that encloses the two turning points which tend to $X=0$
together with the origin inside, 
while keeping the other zeros and poles outside. 
Then, the elliptic integrals defining the Voros period are described 
by the residues at $X = 0$ of the coefficients in the small $\Lambda$-expansion.

When imposing the condition 
\begin{equation}
V_{-1} = 2 \pi i \nu, \quad 
V_{0} = \pi i, \quad 
V_m = 0 \quad (m \ge 1)
\end{equation} 
for the Voros period associated with $\gamma$
(which implies \eqref{eq:partial-IMD} with $-$-sign), 
we find that, under the assumption $\nu \ne 0$,  
the first few ${\mathscr G}_{m}$ 
obtained by the method described in Section \ref{subsec:AP-algorithm}, 
admit the following expansions:
\begin{subequations} \label{eq:AP-HVI}
\begin{align} 
{\mathscr G}_0 & = (\nu^2- \theta_0^2 - \theta_t^2)
 -\frac{
  \nu^4 - (\theta_0^2+\theta_1^2 - 3\theta_t^2 -\theta_\infty^2) \, \nu^2 + (\theta_0^2-\theta_t^2)(\theta_1^2-\theta_\infty^2)}{2 \, \nu^2} \, \Lambda 
 \notag \\ 
 & \quad 
 - \frac{
3\,\nu^8
- 2\,A \,\nu^6
+ B \,\nu^4
+ 6 \,C\, \nu^2
-5 \,D
}{32 \, \nu^6} \, \Lambda^2 + O(\Lambda^3), \\
{\mathscr G}_1 & =  \frac{1}{4} - 
\frac{3 \, \nu^4+(\theta_0^2-\theta_t^2)(\theta_1^2-\theta_\infty^2) 
}{8\,\nu^4} \, \Lambda
- \frac{
\nu^8
+ A \,\nu^6
+ 2 \,B\, \nu^4
+ 15 \,C\, \nu^2
-21 \,D
}{64 \nu^8} \, \Lambda^2
+O(\Lambda^3), \\
{\mathscr G}_2 & = -\frac{(\theta_0^2 - \theta_t^2)(\theta_1^2 - \theta_\infty^2)}
{32 \, \nu^6} \, \Lambda 
+ \frac{
\nu^8
- 8 \,A\, \nu^6
- 16 \,B\, \nu^4
- 126 \,C\, \nu^2
+ 219 \,D
}{512 \, \nu^{10}} \, \Lambda^{2}  +O(\Lambda^3), \dots,
\end{align}
\end{subequations}
where
\begin{subequations}
\begin{align}
A & = \theta_0^2+\theta_1^2+\theta_t^2+\theta_\infty^2 \\
B & = - (\theta_0^2+\theta_1^2+\theta_t^2+\theta_\infty^2)^2
  +2 (\theta_0^2 - \theta_1^2)(\theta_t^2 - \theta_\infty^2)
+ 2 (\theta_0^2 - \theta_\infty^2)(\theta_t^2 - \theta_1^2) \\  
C & = (\theta_0^2-\theta_t^2)^2(\theta_1^2+\theta_\infty^2)
  + (\theta_0^2+\theta_t^2)(\theta_1^2-\theta_\infty^2)^2, \\
D & = (\theta_0^2-\theta_t^2)^2(\theta_1^2-\theta_\infty^2)^2.
\end{align}
\end{subequations}
Furthermore, by formally interchanging the order of $\hbar$-expansion and small $t$-expansion, we obtain the following formal series expansion for the accessory parameter:
\begin{align}
{\mathscr E} 
& = \left(\nu^2 - \theta_0^2 - \theta_t^2 +  \frac{\hbar^2}{4}   \right) 
- 
\left(\frac{\nu^4  - (\theta_0^2+\theta_1^2 - 3\theta_t^2 -\theta_\infty^2) \, \nu^2 
+ (\theta_0^2-\theta_t^2)(\theta_1^2-\theta_\infty^2)}{2 \, \nu^2} 
+ O(\hbar^2)
\right) \, t + O(t^3)
\notag \\
& = \left(\nu^2 - \theta_0^2 - \theta_t^2 +  \frac{\hbar^2}{4}   \right) 
- \frac{ P_1  }{ 8 (4 \, \nu^2 - \hbar^2) } \, t 
+ \frac{ P_2 }{512 \, (4 \, \nu^2 - \hbar^2)^3(\nu^2 - \hbar^2)} \, t^2 + O(t^3),
\label{eq:final-AP-HVI}
\end{align}
where 
\begin{subequations}
\begin{align}
P_1 & = 16 \, \Bigl( 
\nu^4 -  (\theta_0^2 + \theta_1^2 + 3 \theta_t^2 - \theta_\infty^2)\,\nu^2 + (\theta_0^2 - \theta_t^2)(\theta_1^2 - \theta_\infty^2) \Bigr) 
\notag \\
& \quad 
+ 4 \,  \left(
 2 \nu^2 + \theta_0^2 + \theta_1^2 + 3 \theta_t^2 - \theta_\infty^2
\right) \, \hbar^2 - 3 \, \hbar^4, 
\\
P_2 & = 
- 3072\, \nu^{10}+ 2048\, A\, \nu^8 - 1024\, B\, \nu^6 - 6144\, C\, \nu^4 + 5120\, D\, \nu^2  
\notag \\
& \quad 
+ \bigl(
4864\, \nu^8
- 4096\, A\, \nu^6 
+ 768\, B\, \nu^4 
+ 3072\, C\, \nu^2  
+ 1792\, D 
\bigr) \,\hbar^2
\notag \\
& \quad 
- \bigl(
1920\, \nu^6
- 2304\, A\, \nu^4 
+ 192\, B\, \nu^2
+ 384\, C  
\bigr) \,\hbar^4
\notag \\
& \quad 
+ \bigl(96\, \nu^4 + 512\, A\, \nu^2 +16\, B \bigr) \, \hbar^6
+ \bigl(68\, \nu^2 + 40\, A \bigr) \, \hbar^8
- 9\, \hbar^{10}.
\end{align}
\end{subequations}
The rational expression of the coefficients 
in the last equality in \eqref{eq:final-AP-HVI}
is obtained by the same method as described in Remark \ref{rem:rational-expression}.
These coefficients can, in principle, be computed explicitly to arbitrary order.

\smallskip
\noindent
{\bf Comparison with ${\mathscr W}$.} \, 
The relevant conformal block has been considered since the time when the Zamolodchikov conjecture \cite{Zam86} was proposed.
It is defined in terms of the Virasoro algebra, and admits a regular singular expansion. 
It is also identified with the Nekrasov partition function with $N_f = 4$ through the AGT correspondence, and the relevant classical conformal block (the NS twisted superpotential) is explicitly given as follows (cf.\,\,\cite[Section 2.1, (2.6)]{LN21}):
\begin{align}\label{eq:CCB-HVI}
    \mathscr{W} & =
    (\delta_{\sigma}-\delta_0-\delta_t) \, \log t
- \frac{(\delta_{\sigma}-\delta_0+\delta_t)(\delta_{\sigma}-\delta_\infty+\delta_1)}{2 \, \delta_{\sigma}}\, t 
\notag \\
& \quad + \Biggl(
\frac{(\delta_{\sigma}-\delta_0+\delta_t)^2(\delta_{\sigma}-\delta_\infty+\delta_1)^2}{8 \, \delta_{\sigma}^2} 
\biggl(
\frac{1}{\delta_{\sigma}-\delta_0+\delta_t}+\frac{1}{\delta_{\sigma}-\delta_\infty+\delta_1}-\frac{1}{2 \, \delta_\sigma}
\biggr)
\notag \\
& \qquad 
+ \frac{\bigl( \delta_{\sigma}^2 + 2\,\delta_{\sigma}(\delta_0+\delta_t)
-3\, (\delta_0-\delta_t)^2\bigr)
\bigl(\delta_{\sigma}^2 + 2\,\delta_{\sigma}(\delta_\infty+\delta_1)
-3 \, (\delta_\infty-\delta_1)^2\bigr)}
{16 \, \delta_{\sigma}^2 \, (4 \, \delta_{\sigma}+3 \, \hbar^2)}
\Biggr) \, t^2
+O(t^3).
\end{align}
Here, we have introduced $\hbar$ appropriately by rescalings; see footnote \ref{foot:virasoro-hbar}.
Then, the original conjecture by Zamolodchikov can be formulated as follows:
\begin{conj}[{\cite{Zam86}}] \label{conj:H-VI}
The classical conformal block \eqref{eq:CCB-HVI} exists. 
Moreover, under the identification 
\begin{equation}
\delta_\sigma=-\nu^2 + \dfrac{h^2}{4},
\quad
\delta_{0}=-\theta_0^2 + \dfrac{\hbar^2}{4}, \quad
\delta_{t}=-\theta_t^2 + \dfrac{\hbar^2}{4}, \quad
\delta_{1}=-\theta_1^2 + \dfrac{\hbar^2}{4}, \quad
\delta_{\infty}=-\theta_\infty^2 + \dfrac{\hbar^2}{4},
\end{equation}
the accessory parameter \eqref{eq:final-AP-HVI}
and the classical conformal block \eqref{eq:CCB-HVI}
are related as follows:
\begin{equation} \label{eq:conj-H-VI}
\mathscr{E} = t(t-1)\dfrac{d}{dt}\mathscr{W}. 
\end{equation} 
\end{conj}

From the above computational results, we can observe that the equality \eqref{eq:conj-H-VI} holds up to $O(t^{-3})$.
The agreement of the coefficients in \eqref{eq:conj-H-VI} has been confirmed up to slightly higher orders; 
however, those coefficients are too lengthy to include here and are therefore omitted.

\bigskip
\subsection{Confluent Heun equation $H_{\rm V}$}
\label{section:H-V}

The equation $H_{{\rm V}}$ has the following potential
\begin{equation} 
Q_{{\rm V}} = \frac{\theta_0^2 - \frac{\hbar^2}{4}}{x^2}
+ \frac{\theta_t^2 - \frac{\hbar^2}{4}}{(x-t)^2} 
+ \frac{1}{4} + \frac{\theta_\infty}{x} - \frac{{\mathscr E}}{x(x-t)}.
\end{equation}

\subsubsection{Expansion at $t \to 0$} ~ 

\noindent
{\bf Series expansion of ${\mathscr E}$.} \, 
We perform the rescaling with 
$(d_{x}, d_{y}, d_{\mathscr E}) = (0, 0, 0)$ 
and take $\Lambda = t$.
The rescaled spectral curve becomes
\begin{equation}
Q_{0}^{\rm res} = 
\frac{\theta_0^2}{X^2}
+ \frac{\theta_t^2}{(X-\Lambda)^2}
+ \frac{1}{4}
+ \frac{\theta_\infty}{X}
- \frac{\mathscr{G}_0}{X(X-\Lambda)}
\quad \xrightarrow{ \Lambda \to 0 } \quad 
Q_0^{[0]} = 
\frac{\theta_0^2}{X^2}
+ \frac{\theta_t^2}{X^2}
+ \frac{1}{4}
+ \frac{\theta_\infty}{X}
- \frac{\mathscr{G}_0^{[0]}}{X^2}.
\end{equation}
In this limit, it is readily seen that one of the two zeros of 
$Q^{\rm res}_0$ and the double pole $\Lambda$ tends to the origin, 
and the limiting curve ${\mathcal C}^{\rm res}_{\rm deg}$ is a genus $0$ curve (which is called the Kummer curve in \cite{IKT18-2}). 
Therefore, for sufficiently small $\Lambda$, 
we choose the cycle $\gamma$ to be a circle that encloses the two turning points 
which tend to  
the zeros of $Q_{0}^{[0]}$ together with the origin and $\Lambda$ inside, 
while keeping the other turning point outside. 
Then, the elliptic integrals defining the Voros period are described 
by the residues at $X = 0$ of the coefficients in the small $\Lambda$-expansion.

When imposing the condition 
\begin{equation}
V_{-1} = 2 \pi i \nu, \quad 
V_{0} = \pi i , \quad 
V_m = 0 \quad (m \ge 1)
\end{equation} 
for the Voros period associated with $\gamma$
(which implies \eqref{eq:partial-IMD} with $-$-sign), 
under the assumption $\nu \ne 0$, 
we find 
\begin{subequations} \label{eq:AP-HV-1}
\begin{align} 
{\mathscr G}_0 & =  (- \nu^2+ \theta_0^2 + \theta_t^2)
 - \frac{\theta_{\infty}\left(\nu^2 - \theta_0^2 + \theta_t^2 \right)}{2\,\nu^2} \, \Lambda 
 - \frac{\nu^6 + A \,\nu^4 -3B \,\nu^2+ 5\,C}{32 \, \nu^6} \, \Lambda^2 + O(\Lambda^3),
\\
{\mathscr G}_1 & = - \frac{1}{4} 
+ \frac{\theta_{\infty}(\theta_0^2 - \theta_t^2)}{8\,\nu^4} \, \Lambda
+ \frac{\nu^6 -4A \,\nu^4 +15B \,\nu^2  - 42\,C}{128 \,\nu^8} \, \Lambda^2
+ O(\Lambda^3),
%
%
\\
{\mathscr G}_2 & = 
\frac{\theta_{\infty}\bigl(\theta_0^2-\theta_t^2\bigr)}{32\,\nu^6} \Lambda 
+ \frac{4 \, \nu^6 - 16 A\, \nu^4 +63B \,\nu^2 - 219\,C}{512 \,\nu^{10}} \,  \Lambda^2 
+ O(\Lambda^3), 
\end{align}
\end{subequations}
where
\begin{equation}
A  = 2\theta_0^2+2\theta_t^2 + \theta_{\infty}^2, \quad
B  =  (\theta_0^2-\theta_t^2)^2
+2\theta_{\infty}^2(\theta_0^2+\theta_t^2), \quad
C  = \theta_{\infty}^2(\theta_0^2-\theta_t^2)^2.
\end{equation}
Furthermore, by formally interchanging the order of $\hbar$-expansion 
and small $t$-expansion, 
we obtain the following formal series expansion for the accessory parameter:
\begin{align}
{\mathscr E} 
& = \left(- \nu^2 + \theta_0^2 + \theta_t^2 + \frac{\hbar^2}{4}   \right) 
+  \left( - \frac{\theta_\infty( \nu^2-\theta_0^2 + \theta_t^2)}{2 \, \nu^2} + \frac{\theta_\infty(\theta_0^2 - \theta_t^2)}{8 \, \nu^4} \,\hbar^2 
+ O(\hbar^4) \right) \, t + O(t^2) 
\notag \\
& = \left( - \nu^2 + \theta_0^2 + \theta_t^2 + \frac{\hbar^2}{4}  \right) + 
\frac{\theta_\infty\left(- 4( \nu^2 -\theta_0^2 + \theta_t^2  ) + \hbar^2 \right)}{2\left(4 \, \nu^2 - \hbar^2\right)} \, t + O(t^2).
\label{eq:final-AP-HV-1}
\end{align}

\bigskip
\noindent
{\bf Comparison with ${\mathscr W}$.} 
The relevant conformal block, with a regular singular expansion, was introduced by \cite{Gaiotto09} in terms of the Virasoro algebra.
It is also identified with the Nekrasov partition function with $N_f = 3$ through the AGT correspondence. 
The associated classical conformal block is expected to coincide with the NS twisted superpotential, and its expansion coefficients are explicitly given as follows (cf.\, \cite[Section 2.2.2, (2.14)]{LN21}):
\begin{align}\label{eq:CCB-HV-1}
    \mathscr{W} & =
    (\delta_{\sigma} - \delta_0-\delta_t) \, \log t
- \frac{(\delta_{\sigma}-\delta_0+\delta_t) \, \theta_{\infty}}{2 \, \delta_{\sigma}}\, t 
\notag \\
& \quad + \Biggl( 
\frac{(\delta_{\sigma}^2 - (\delta_0-\delta_t)^2)\theta_{\infty}^2}{16 \, \delta_{\sigma}^3}
- \frac{(3\, \theta_{\infty}^2+\delta_{\sigma})
\bigl(\delta_{\sigma}^2+2 \, \delta_{\sigma}(\delta_0+\delta_t)
-3 \, (\delta_0-\delta_t)^2\bigr)}
{16 \, \delta_{\sigma}^2(4\, \delta_{\sigma} + 3 \, \hbar^2)}
\Biggr) \, t^2
+O(t^3).
\end{align}

\begin{conj} 
\label{conj:H-V-1}
The classical conformal block \eqref{eq:CCB-HV-1} exists. 
Moreover, under the identification 
\begin{equation}
\delta_\sigma=-\nu^2 + \dfrac{\hbar^2}{4},
\quad
\delta_{0}=-\theta_0^2 + \dfrac{\hbar^2}{4}, \quad
\delta_{t}=-\theta_t^2 + \dfrac{\hbar^2}{4}, 
\end{equation}
the accessory parameter \eqref{eq:final-AP-HV-1} 
and the classical conformal block \eqref{eq:CCB-HV-1}
are related as follows:
\begin{equation} \label{eq:H-V-1}
\mathscr{E} = t\dfrac{d}{dt}\mathscr{W}.
\end{equation} 
\end{conj}

From the above computational results, we can observe that the equality \eqref{eq:H-V-1} holds up to $O(t^{-3})$.


\bigskip
\subsubsection{Type 1 expansion at $t \to \infty$}
~\\[-1.em] 

\smallskip
\noindent
{\bf Series expansion of ${\mathscr E}$.} \, 
Before going into details, we note that most of the results in this subsection 
are essentially contained in \cite[\S 3.2]{LN21} by Lisovyy--Naidiuk.
The new contribution we have added concerns the comparison between 
the existing classical conformal block and \cite{BST25}.

We perform the rescaling with 
$(d_{x}, d_{y}, d_{\mathscr E}) = (0, 0, 1)$ 
and take $\Lambda = t^{-1}$.
The rescaled spectral curve becomes
\begin{equation}
Q_{0}^{\rm res} = 
\frac{\theta_0^2}{X^2}
+ \frac{\theta_t^2}{(X-\Lambda^{-1})^2}
+ \frac{1}{4}
+ \frac{\theta_\infty}{X}
- \frac{\Lambda^{-1} \mathscr{G}_0}{X(X-\Lambda^{-1})}
\quad \xrightarrow{ \Lambda \to 0 } \quad 
Q_0^{[0]} = 
\frac{\theta_0^2}{X^2}
+ \frac{1}{4}
+ \frac{\theta_\infty}{X}
+ \frac{\mathscr{G}_0^{[0]}}{X}.
\end{equation}
In this limit, it is readily seen that one of the two zeros of $Q^{\rm res}_0$ and the double pole $\Lambda^{-1}$ tend to $\infty$, and the limiting curve ${\mathcal C}^{\rm res}_{\rm deg}$ is a genus $0$ curve (which is called the Kummer curve in \cite{IKT18-2}).  
Therefore, for sufficiently small $\Lambda$, 
we choose the cycle $\gamma$ to be a circle that encloses the two turning points which tend to the zeros of $Q_{0}^{[0]}$ together with the origin inside, 
while keeping the turning points that merge with infinity and $\Lambda^{-1}$ outside.
Then, the elliptic integrals defining the Voros period are described 
by the residues at $X = \infty$ of the coefficients in the small $\Lambda$-expansion.

When imposing the condition
\begin{equation}
V_{-1} = 2 \pi i \nu, \quad 
V_m = 0 \quad (m \ge 0)
\end{equation} 
for the Voros period associated with $\gamma$
(which implies \eqref{eq:partial-IMD} with $+$-sign), 
the resulting computation of the accessory parameter is as follows:
\begin{subequations} \label{eq:AP-HV-3}
\begin{align} 
{\mathscr G}_0 & = 
(\nu-\theta_{\infty})
+ (2 \, \nu^2-2 \, \theta_{\infty}\nu) \, \Lambda
+ \Bigl(
-4 \, \nu^3
+ 6 \, \theta_{\infty}\nu^2
+ 2 \, (\theta_t^2-\theta_{\infty}^2)\nu
+ 2 \, \theta_0^2\nu
- 2 \, \theta_0^2\theta_{\infty}
\Bigr) \, \Lambda^2
\notag \\
&
\quad  + \Bigl(
20 \, \nu^4
- 40 \, \theta_{\infty}\nu^3
+ (24 \, \theta_{\infty}^2-12 \, \theta_t^2-12 \, \theta_0^2) \, \nu^2
+ (4 \, \theta_{\infty}\theta_t^2+20 \, \theta_0^2\theta_{\infty})\nu
- 4 \, \theta_0^2(2\theta_{\infty}^2-\theta_t^2)
\Bigr) \, \Lambda^3 
\notag \\
& \quad 
+O(\Lambda^4), \\
{\mathscr G}_1 & = 
\left(-\nu + \frac{\theta_{\infty}}{2}\right) \, \Lambda^2
+ \bigl(
 10\, \nu^2 
 - 10 \, \theta_{\infty}\nu
 -\theta_0^2
- \theta_t^2
+ 2 \, \theta_{\infty}^2
\bigr) \, \Lambda^3 + O(\Lambda^4), \\
{\mathscr G}_2 & = \frac{\Lambda^3}{4}
- \left( \frac{\nu}{2} - \frac{7 \, \theta_\infty}{8} \right) \,\Lambda^4 + O(\Lambda^5), \dots.
\end{align}
\end{subequations}
Furthermore, by formally interchanging the order of $\hbar$-expansion and large $t$-expansion, 
we obtain the following formal series expansion for the accessory parameter:
\begin{align}
{\mathscr E} 
& = (\nu-\theta_{\infty})\,t
+ (2\nu^2-2\theta_{\infty}\nu)
\notag \\ 
& \quad - \Biggl(   4 \, \nu^3
- 6 \, \theta_{\infty}  \nu^2
- \Bigl(2 \, (\theta_0^2 + \theta_t^2 - \theta_{\infty}^2) - \hbar^2\Bigr) \, \nu
+ 2\theta_0^2\theta_{\infty}- \frac{\theta_{\infty}\hbar^2}{2} \Biggr) \, t^{-1} 
\notag \\[+.3em]
& \quad 
+ \Biggl( 20 \, \nu^4
- 40 \, \theta_\infty \nu^3
- 2 \, \Bigl( 6 \, (\theta_0^2 + \theta_t^2 - 2\theta_\infty^2) - 5 \,\hbar^2\Bigr) \, \nu^2 
+ 2 \, \theta_\infty \Bigl( 2 \, (5\theta_0^2 + \theta_t^2 - \theta_\infty^2) - 5 \, \hbar^2 \Bigr) \, \nu
\notag \\
& \qquad
+  4 \, \theta_0^2 (\theta_t^2 - 2\theta_\infty^2) 
- (\theta_0^2 + \theta_t^2 -2\theta_\infty^2) \, \hbar^2 
+ \frac{\hbar^4}{4}  \Biggr) \, t^{-2} 
+ O(t^{-3}).
\label{eq:final-AP-HV-3}
\end{align}

\smallskip
\noindent
{\bf Comparison with ${\mathscr W}$.} \,
The relevant conformal block was introduced in \cite{Nagoya15} 
by the second-named author based on the Virasoro algebra and irregular vertex operator, 
and the Zamolodchikov-type conjecture was studied in \cite{LN21}.
The existence of its classical limit was conjectured in \cite[Section 2.3.2, (2.29)--(2.31)]{LN21}, and the expected expansion coefficients of the classical conformal block are given as follows (with $\hbar$ introduced appropriately):
\begin{align}
\widetilde{\mathscr W} & =
(\nu-\theta_{\infty})\,t
+ 2 \, \nu(\nu-\theta_{\infty}) \, \log t
+ \bigl(
4 \, \nu^3
- 6 \, \theta_{\infty} \nu^2
+ 2 \, (\delta_0+\delta_t+\theta_{\infty}^2) \, \nu
- 2 \, \delta_0\theta_{\infty}
\bigr) \, t^{-1} 
\notag \\
& + \Bigl(
-2\, \bigl(\delta_0-\nu(\theta_{\infty}-3\nu)\bigr)
\bigl(\delta_t+(2\theta_{\infty}-3\nu)(\theta_{\infty}-\nu)\bigr)
- 2\nu(\theta_{\infty}-\nu)\bigl((\theta_{\infty}-2\nu)^2-{\hbar^2}\bigr)
\Bigr) \, t^{-2} + O(t^{-3}).
\label{eq:CCB-HV-3}
\end{align}
 
We also expect that \eqref{eq:final-AP-HV-3} and \eqref{eq:CCB-HV-3} are related to the quantum expansion of ${\mathscr Z}$,  called ``linear exp singularity of ${\rm QPV}$ / $N_f=3$ with light hypers", given in \cite[(4.9)--(4.17) in \S 4.2]{BST25}. 
The quantum expansion in \cite{BST25} is given by
\begin{align}
& - \varepsilon_1 \varepsilon_2 \log {\mathscr Z}  \notag \\
& \quad = 
\left(a_D + \frac{e_1 + \varepsilon_1 + \varepsilon_2}{2} \right) \, s
- \left( 2 \, a_D^2 - \frac{w_2 - (\varepsilon_1 + \varepsilon_2)^2}{2} \right) \, \log s
\notag \\ 
& \qquad + \Bigl(  4 \, a_D^3 - \left(w_2 - (\varepsilon_1 + \varepsilon_2)^2\right) \, a_D + e_3 \Bigr) \, s^{-1} 
\notag \\ 
& \qquad + \biggl(
10 \, a_D^4
- \left(3 \, w_2 - 5 \, (\varepsilon_1 + \varepsilon_2)^2\right) \, a_D^2
+ 4 \, e_3 \, a_D
+ \frac{\left(w_2 - (\varepsilon_1 + \varepsilon_2)^2\right)^2}{8}
- \frac{w_4}{2}
\biggr) \, s^{-2} + O(s^{-3}).
\end{align}
The results of \cite{BST25} suggest that the following classical conformal block exists\footnote{In \cite{BST25}, the mass parameters $w_i$ and $e_i$ 
may depend on $\varepsilon_1, \varepsilon_2$, 
but for notational simplicity, in this paper we use the same symbols 
to denote their limits $\varepsilon_2 \to 0$.
In addition, we omit the accents (such as tildes and checks) used for 
the mass parameters in \cite{BST25} in this paper.
}:
\begin{align}
{\mathscr W} & = - \varepsilon_1 \varepsilon_2 \log {\mathscr Z}  
\Bigl|_{(\varepsilon_1, \varepsilon_2) = (\hbar, 0)}  
\notag \\
& 
= \left( a_D + \frac{e_1 + \hbar}{2} \right) \, s
- \left(
2 \, a_D^2 - \frac{w_2 - \hbar^2}{2} 
\right) \, \log s
\notag \\ 
& \quad + \Bigl( 
4 \, a_D^3 - \left(w_2 - \hbar^2\right) \, a_D + e_3 
\Bigr) \, s^{-1} 
\notag \\ 
& \quad + \biggl(
10 \, a_D^4
- \left(3 \, w_2 - 5 \, \hbar^2\right) \, a_D^2
+ 4 \, e_3 \,  a_D
+ \frac{\left(w_2 - \hbar^2\right)^2}{8}
- \frac{w_4}{2}
\biggr) \, s^{-2} + O(s^{-3}).
\label{eq:CCB-HV-3-alt}
\end{align}

As an analogue of Conjecture \ref{conj:H-V-1}, we expect 

\begin{conj}[cf., {\cite[Conjecture 3.1]{LN21}}] 
\label{conj:H-V-2}
The classical conformal blocks \eqref{eq:CCB-HV-3} and \eqref{eq:CCB-HV-3-alt} exist. 
Moreover, under the identification 
\[
s = - t, 
\quad a_D = \nu - \frac{\theta_\infty}{2}, 
\quad e_1 = - \theta_\infty - \hbar, 
\quad e_3 = \theta_\infty(\theta_0^2-\theta_t^2), 
\]
\[
w_2 = 2\theta_0^2+2\theta_t^2+\theta_\infty^2, 
\quad w_4 = (\theta_0^2 - \theta_t^2)^2 +2 \theta_\infty(\theta_0^2+\theta_t^2), 
\]
\begin{equation}
\delta_0=-\theta_0^2+\dfrac{\hbar^2}{4}, \quad
\delta_t=-\theta_t^2+\dfrac{\hbar^2}{4},
\end{equation}
the accessory parameter \eqref{eq:final-AP-HV-3}
and the classical conformal blocks \eqref{eq:CCB-HV-3} and \eqref{eq:CCB-HV-3-alt}
are related as follows:
\begin{equation} \label{eq:H-V-3-alt}
{\mathscr E} = - t \frac{d}{dt} {\mathscr{W}} 
+ \left(\theta_0^2 + \theta_t^2 - \frac{\hbar^2}{2} \right)
= t \frac{d}{dt} \widetilde{\mathscr{W}} . 
\end{equation} 
\end{conj}

From the above computational results, we can observe that the equality \eqref{eq:H-V-3-alt} holds up to $O(t^{-3})$.

\bigskip
\subsubsection{Type 2 expansion at $t \to \infty$} ~ 

\smallskip
\noindent
{\bf Series expansion of ${\mathscr E}$.} \, 
We perform the rescaling with 
$(d_{x}, d_{y}, d_{\mathscr E}) = (1, 1, 2)$ 
and take $\Lambda = t^{-1}$.
The rescaled spectral curve becomes
\begin{equation}
Q_{0}^{\rm res} = 
\frac{\theta_0^2 \, \Lambda^2}{X^2}
+ \frac{\theta_t^2 \, \Lambda^2 }{(X-1)^2}
+ \frac{1}{4}
+ \frac{\theta_\infty \, \Lambda }{X}
- \frac{\mathscr{G}_0}{X(X-1)}
\quad \xrightarrow{ \Lambda \to 0 } \quad 
Q_0^{[0]} = 
\frac{1}{4}
- \frac{\mathscr{G}_0^{[0]}}{X(X-1)}.
\end{equation}
We take $\mathscr{G}_0^{[0]} = -1/16$ to have the limiting spectral curve
\begin{equation}
Y^2 = Q^{[0]}_0 (X) = \frac{(X-\frac{1}{2})^2}{4X(X-1)}
\end{equation}
has genus $0$. 
In this limit, it is readily seen that two zeros of $Q^{\rm res}_0$ tends to $1/2$. 
Therefore, for sufficiently small $\Lambda$, 
we choose the cycle $\gamma$ to be a circle that encloses the two turning points 
which tend to $1/2$ together with the origin and $\Lambda$ inside, 
while keeping the other turning point outside. 
Then, the elliptic integrals defining the Voros period are described 
by the residue at $X = 1/2$ of the coefficients in the small $\Lambda$-expansion.
 
When imposing the condition, 
\begin{equation}
V_{-1} = 2 \pi i \nu, \quad 
V_{0} = - \pi i , \quad 
V_m = 0 \quad (m \ge 1)
\end{equation} 
for the Voros period associated with $\gamma$
(which implies \eqref{eq:partial-IMD} with $+$-sign), 
the resulting computation of the accessory parameter is as follows:
\begin{subequations} \label{eq:AP-HV-2}
\begin{align} 
{\mathscr G}_0 & = - \frac{1}{16} 
- \frac{i \, \nu + \theta_\infty}{2} \, \Lambda 
+ \left(\frac{\nu^2}{2} - \theta_0^2 - \theta_t^2 - \theta_\infty \right) \, \Lambda^2 
 \notag \\ 
 & \quad - \left( \frac{i \, \nu^3}{2} - 4 i \,(2\theta_0^2+2\theta_t^2+\theta_\infty^2) \, \nu  + 8 \, (\theta_0^2-\theta_t^2) \theta_\infty \right) \, \Lambda^3
\notag \\
&
\quad - \left(  \frac{5 \, \nu^4}{4} 
- 24\,(2\theta_0^2 + 2\theta_t^2 + \theta_\infty^2)\,\nu^2 
- 128\, i\, \theta_\infty (\theta_0^2 - \theta_t^2)\,\nu
+ 16\, \Bigl((\theta_0^2 - \theta_t^2)^2
+ 2 \, \theta_\infty^2 (\theta_0^2 + \theta_t^2)\Bigr)
\right) \Lambda^4
\notag \\
&
\quad +O(\Lambda^5), \\
{\mathscr G}_1 & = \frac{\Lambda^2}{8} 
- \frac{11 i \nu}{8} \Lambda^3 
- \left( \frac{65 \, \nu^2}{8} -6 \, (2 \, \theta_0^2+2 \, \theta_t^2 + \theta_\infty^2)   \right) \, \Lambda^4
+O(\Lambda^5), \\
{\mathscr G}_2 & = - \frac{105}{64} \, \Lambda^4 + \frac{5621 \, i \nu}{128} \, \Lambda^5 
+ O(\Lambda^6), \dots.
\end{align}
\end{subequations}
Furthermore, by formally interchanging the order of $\hbar$-expansion and large $t$-expansion, 
we obtain the following formal series expansion for the accessory parameter:
\begin{align}
{\mathscr E} 
& = - \frac{t^2}{16} - \frac{i \, \nu + \theta_\infty}{2} \, t + 
\left( \frac{\nu^2}{2} -\theta_0^2 - \theta_\infty^2 - \theta_t^2
 + \frac{\hbar^2}{8}  \right)
\notag \\ 
& \quad - \left( 
\frac{i \, \nu^3}{2} 
- 4\,i\,\Bigl( 2\, \theta_0^2 + 2 \, \theta_t^2 + \theta_\infty^2  - \frac{11 \, \hbar^2}{32}  \Bigr)\,\nu
+ 8\,\theta_\infty(\theta_0^2 - \theta_t^2)
\right) \, t^{-1} 
+ O(t^{-2}).
\label{eq:final-AP-HV-2}
\end{align}

\smallskip
\noindent
{\bf Comparison with ${\mathscr W}$.} \, 
Let us recall the quantum expansion of ${\mathscr Z}$, 
called ``square exp singularity of ${\rm QPV}$ / $N_f=3$ with heavy hypers", 
given in \cite[(4.26)--(4.35) in \S 4.2]{BST25}\footnote{
In \cite{BST25}, the notation $\alpha_D = i a_D$ was used; 
here, we adopt a formulation expressed solely in terms of $a_D$.}:  
\begin{align}
& - \varepsilon_1 \varepsilon_2 \log {\mathscr Z} = 
\frac{t^2}{32}
+ \frac{i \, a_D + e_1 + \varepsilon_1 + \varepsilon_2}{2} \, t
+ \left(
- \frac{a_D^2}{2} + w_2 
-\frac{5 \, (\varepsilon_1 + \varepsilon_2)^2}{8}
+ \frac{\varepsilon_1 \varepsilon_2}{4}
\right)\log t
\notag \\ 
& \quad + \left( - \frac{i \, a_D^3}{2}
+ \left(4 \, w_2 + \frac{7 \, \varepsilon_1 \varepsilon_2}{4} - \frac{11 \, (\varepsilon_1 + \varepsilon_2)^2}{8} \right) \, i a_D
+ 8 \, e_3 \right) \, t^{-1} 
\notag \\ 
& \quad + \Biggl(
-\frac{5 \, a_D^4}{8}
+ \left(12 \, w_2 + \frac{45 \, \varepsilon_1 \varepsilon_2}{8}  - \frac{65 \, (\varepsilon_1 + \varepsilon_2)^2}{16}\right) \, a_D^2
- 64 \, i \, e_3 \, a_D
\notag \\ 
& \qquad\qquad 
- 8 \, w_4 
- 2 \, w_2 \varepsilon_1 \varepsilon_2
- \frac{(\varepsilon_1 \varepsilon_2)^2}{2}
+ \frac{61 \, \varepsilon_1 \varepsilon_2 (\varepsilon_1 + \varepsilon_2)^2}{32}
\biggr) \, t^{-2} + O(t^{-3}).
\end{align}
The computational results of \cite{BST25} suggest that 
one can expect that 
the following classical conformal block exists:
\begin{align}
{\mathscr W} 
& = - \varepsilon_1 \varepsilon_2 \log {\mathscr Z}  
\Bigl|_{(\varepsilon_1, \varepsilon_2) = (\hbar, 0)} 
\notag \\
& = \frac{t^2}{32}
+ \frac{i \, a_D + e_1 + \hbar}{2} \,  t
+ \left(
- \frac{a_D^2}{2} + w_2 -\frac{5 \, \hbar^2}{8} 
\right) \, \log t
\notag \\ 
& \quad + \left( - \frac{i a_D^3}{2}
+ \left(4 \, w_2 - \frac{11 \, \hbar^2}{8} \right) \, i a_D
+ 8 \, e_3 \right) \, t^{-1} 
\notag \\ 
& \quad + \biggl(
-\frac{5 \, a_D^4}{8} 
+ \left(12 \, w_2 - \frac{65 \, \hbar^2}{16} \right) \, a_D^2
- 64 i \, e_3 \, a_D
- 8 \, w_4 
\biggr) \, t^{-2} + O(t^{-3}).
\label{eq:del-CCB-HV-2}
\end{align}
Then, our conjecture claims 
\begin{conj}
The classical conformal block \eqref{eq:del-CCB-HV-2} exists. 
Moreover, under the identification 
\[
a_D = \nu, \quad e_1 = \theta_\infty - \hbar, 
\quad e_3 = - \theta_\infty(\theta_0^2-\theta_t^2), 
\quad w_2 = 2\theta_0^2+2\theta_t^2+\theta_\infty^2, 
\]
\begin{equation}
w_4 = (\theta_0^2 - \theta_t^2)^2 +2 \,\theta_\infty(\theta_0^2+\theta_t^2) 
- \frac{3 \, (2\theta_0^2+2\theta_t^2+\theta_\infty^2) \, \hbar }{8} 
+ \frac{105 \,\hbar^2}{1024},
\end{equation}
the accessory parameter  \eqref{eq:final-AP-HV-2} 
and the classical conformal block \eqref{eq:del-CCB-HV-2}
are related as follows:
\begin{equation} \label{eq:main-conj-HV-2}
{\mathscr E} = t \frac{d}{dt} {\mathscr W} + \left( \theta_0^2+\theta_t^2-\frac{\hbar^2}{2}\right). 
\end{equation} 
\end{conj}

From the above computational results, we can observe that the equality \eqref{eq:main-conj-HV-2} holds up to $O(t^{-2})$.
Using Mathematica, we can verify that the coefficients at the next order agree as well.

\bigskip 
\subsection{Biconfluent Heun equation $H_{\rm IV}$}

The equation $H_{{\rm IV}}$ has the following potential
\begin{equation}
Q_{{\rm IV}} = \dfrac{\theta_0^2-\frac{\hbar^2}4}{x^2}-\dfrac{\mathscr{E}}{x}+2\theta_\infty + (x+t)^2.
\end{equation}


\bigskip
\subsubsection{Type 1 expansion at $t \to \infty$} ~ 

\smallskip
\noindent
{\bf Series expansion of ${\mathscr E}$.} \, 
We perform the rescaling with 
$(d_{x}, d_{y}, d_{\mathscr E}) = (-1, 0, 1)$ 
and take $\Lambda = t^{-2}$.
The rescaled spectral curve becomes
\begin{equation}
Q_{0}^{\rm res} = 
\dfrac{\theta_0^2}{X^2}-\dfrac{\mathscr{G}_0}{X}+2\theta_\infty \Lambda +(\Lambda \, X+1)^2
\quad \xrightarrow{ \Lambda \to 0 } \quad 
Q_0^{[0]} = \dfrac{\theta_0^2}{X^2}- \dfrac{\mathscr{G}_0^{[0]}}{X}+1. 
\end{equation}
In this limit, it is readily seen that one of the two zeros of 
$Q^{\rm res}_0$ coalesces at infinity, 
and the limiting curve ${\mathcal C}^{\rm res}_{\rm deg}$ is a genus $0$ curve (which is called the Kummer curve in \cite{IKT18-2}). 

Therefore, for sufficiently small $\Lambda$, 
we choose the cycle $\gamma$ to be a circle that encloses the two turning points 
which tend to  
the zeros of $Q_{0}^{[0]}$ together with the origin inside, 
while keeping the other turning point outside. 
Then, the elliptic integrals defining the Voros period are described 
by the residue at $X = \infty$ of the coefficients in the small $\Lambda$-expansion.

When imposing the condition 
\begin{equation}
V_{-1} = 2 \pi i \nu, \quad 
V_m = 0 \quad (m \ge 0)
\end{equation} 
for the Voros period associated with $\gamma$
(which implies \eqref{eq:partial-IMD} with $+$-sign), 
the resulting computation of the accessory parameter is as follows:
\begin{subequations} \label{eq:AP-HIV-1}
\begin{align} 
{\mathscr G}_0 & = 
2 \, \nu
+ \left(3 \, \nu^2 + 2 \, \theta_\infty\nu  -\theta_0^2 \right) \, \Lambda
- \left( 6 \, \nu^3 + 6 \, \theta_\infty\nu^2 - (3 \, \theta_0^2 - \theta_\infty^2) \, \nu - 2 \, \theta_0^2\theta_\infty  \right) \, \Lambda^2 
\notag \\[+.3em]
& \quad 
+ \dfrac{105 \, \nu^4
+ 140 \, \theta_\infty \nu^3
- (66 \, \theta_0^2 - 48 \, \theta_\infty^2)\nu^2
- (72 \, \theta_0^2\theta_\infty - 4 \, \theta_\infty^3 )\nu
+ \theta_0^4 - 16 \, \theta_0^2\theta_\infty^2}{4} \, \Lambda^3
+O(\Lambda^4), \\
{\mathscr G}_1 & = 
\dfrac{\Lambda}{4} 
- \dfrac{ 3 \, \nu +  \theta_\infty}{2} \, \Lambda^2
+ \dfrac{111 \, \nu^2
+ 74 \, \theta_\infty \nu
 - 13 \, \theta_0^2 + 8 \, \theta_\infty^2}{8} \, \Lambda^3
\notag \\
& \quad
- \dfrac{1125 \, \nu^3
+ 1125 \, \theta_\infty\nu^2
- (285 \, \theta_0^2 - 280 \, \theta_\infty^2) \, \nu
-149 \, \theta_0^2\theta_\infty + 16 \, \theta_\infty^3}{8} \, \Lambda^4
+O(\Lambda^5), \\
{\mathscr G}_2 & = 
\dfrac{25}{64} \, \Lambda^3
- \dfrac{281 \, \left( 3 \, \nu + \theta_\infty \right)}{64} \, \Lambda^4
+O(\Lambda^5), \dots.
\end{align}
\end{subequations}
Furthermore, by formally interchanging the order of $\hbar$-expansion and large $t$-expansion, 
we obtain the following formal series expansion for the accessory parameter:
\begin{align}
{\mathscr E} & = 
2 \, \nu \,  t + \left( 3 \, \nu^2 + 2 \, \theta_\infty\nu -\theta_0^2  + \dfrac{\hbar^2}{4} \right) t^{-1} \notag \\
&\quad +\left( 
- \left( 6 \, \nu^3 + 6 \, \theta_\infty\nu^2 - (3 \, \theta_0^2 - \theta_\infty^2) \nu - 2 \, \theta_0^2\theta_\infty  \right)
- \dfrac{(3 \, \nu+\theta_\infty) \,\hbar^2}{2} \right)\, t^{-3} 
\notag \\[+.3em]
&\quad +\Biggl(
 \dfrac{105 \, \nu^4
+ 140 \, \theta_\infty \nu^3
- (66 \, \theta_0^2 - 48 \, \theta_\infty^2) \, \nu^2
- (72 \, \theta_0^2 \theta_\infty - 4 \, \theta_\infty^3 ) \, \nu
+ \theta_0^4 - 16 \, \theta_0^2\theta_\infty^2
}{4} 
\notag \\
&\quad\quad + \dfrac{\left( 111 \, \nu^2
+ 74 \, \theta_\infty \nu
 - 13 \, \theta_0^2 + 8 \, \theta_\infty^2  \right) \, \hbar^2}{8} 
+ \dfrac{25 \, \hbar^4}{64} \Biggr)\, t^{-5}+O(t^{-7}).
\label{eq:final-AP-HIV-1}
\end{align}

 
\smallskip
\noindent
{\bf Comparison with ${\mathscr W}$.} \, 
The relevant conformal block was introduced in \cite{Nagoya15} 
by the second-named author in terms of the Virasoro algebra, 
and the Zamolodchikov-type conjecture was studied in \cite{LN21}.
The existence of its classical limit was conjectured in \cite[Section 2.3.3, (2.36)--(2.38)]{LN21}, and the expected expansion coefficients of the classical conformal block are given as follows (with $\hbar$ introduced appropriately):
\begin{align}\label{eq:CCB-HIV-1-Virasoro}
\widetilde{\mathscr{W}}&=
\nu \,  t^2 
+ (3 \, \nu^2  + 2 \, \theta_\infty \nu + \delta_0) \, \log t
+\dfrac{1}{2} \, \left( 
6 \, \nu^3 + 6 \, \theta_\infty \nu^2
+\left( \theta_\infty^2 + 3 \, \delta_0 + \dfrac{3 \, {\hbar^2}}{4}\right)\nu+2 \, \delta_0\theta_\infty\right) \,
t^{-2} \notag \\
&\quad -\dfrac{1}{4} \, \Biggl( \dfrac{105}{4} \, \nu^4+35 \, \theta_\infty \nu^3
+\left( 12 \, \theta_\infty^2+\dfrac{33 \, \delta_0}{2}+\dfrac{39\,{\hbar^2}}{4} \right) \, \nu^2  \notag\\
&\quad\qquad +  \left( \theta_\infty^2+18 \, \delta_0+\dfrac{19\,{\hbar^2}}{4} \right) \, \theta_\infty \nu 
+\left( 4 \, \theta_\infty^2+\dfrac{\delta_0}{4}+\dfrac{3 \, {\hbar^2}}{2} \right) \, \delta_0 \Biggr) \, t^{-4} 
+ O(t^{-6}),
\end{align}
where, for later comparison, we present here the expression obtained by replacing $\nu$ in \cite{LN21} with $-\nu$. 

We also expect that the classical conformal block \eqref{eq:CCB-HIV-1-Virasoro} is related to the quantum expansion of ${\mathscr Z}$, called ``linear exp singularity of ${\rm QPIV}$ / $H_2$ with light hypers'', 
given in \cite[(4.85)--(4.94) in \S 4.6]{BST25}. 
The quantum expansion in \cite{BST25} is given by
\begin{align}
 - \varepsilon_1 \varepsilon_2 \log {\mathscr Z} 
&  = 
\frac{2 \, a_D + 4 \, m_3 - \varepsilon_1 - \varepsilon_2}{4} \, s
- \frac{12 \, a_D^2 + 16 \, e_2 + (\varepsilon_1+\varepsilon_2)^2}{8} \, \log s
\notag \\ 
& \quad +  
\frac{3\left(4 \, a_D^3 + \left(8 \, e_2 + (\varepsilon_1 + \varepsilon_2)^2 \right) \, a_D  + 8 \, {e}_3 \right)}{2} \, s^{-1}
\notag \\ 
& \quad 
+ 
\biggl(
\frac{105}{4} \, a_D^4 
+ 3 \, \left(22 \, e_2 - \frac{\varepsilon_1 \varepsilon_2}{4} + \frac{37 \,  (\varepsilon_1 + \varepsilon_2)^2}{8} \right)  \, a_D^2
+ 84 \, {e}_3 a_D 
\notag \\ 
& \qquad
+ \frac{\left( 16 \, e_2 + (\varepsilon_1 + \varepsilon_2)^2\right)
\left( 16 \, e_2 - 4 \, \varepsilon_1 \varepsilon_2 + 25 \, (\varepsilon_1 + \varepsilon_2)^2 \right)}{64} 
\biggr) \, s^{-2}
+ O(s^{-3}).
\end{align}
The results of \cite{BST25} suggest that the following classical conformal block exists:
\begin{align}
{\mathscr W} & = - \varepsilon_1 \varepsilon_2 \log {\mathscr Z}  
\Bigl|_{(\varepsilon_1, \varepsilon_2)=({\hbar}, 0)}  
\notag \\
& 
= \frac{2 \, a_D + 4 \, m_3 - \hbar}{4} \,s
- \frac{12 \, a_D^2 + 16 \, e_2 + \hbar^2}{8} \, \log s
\notag \\ 
& \quad +   
\frac{3 \, \left(4 \, a_D^3 + (8 \, e_2 + \hbar^2) \, a_D + 8 \, {e}_3\right)}{2} \, s^{-1}
\notag \\ 
& \quad + 
\frac{
1680 \, a_D^4
+ (4224 \, e_2 + 888 \, \hbar^2) \, a_D^2
+ 5376 \, {e}_3 a_D
+ 256 \, e_2^2 + 416 \,  e_2 \hbar^2 + 25 \, \hbar^4
}{64} \,  s^{-2} + O(s^{-3}).
\label{eq:CCB-HIV-1}
\end{align}

Then, the Zamolodchikov-type conjecture, which has already been suggested in \cite{LN21}, claims the following:


\begin{conj}[cf., {\cite[Conjecture 3.2]{LN21}}] \label{conj:H-IV-1}
The classical conformal blocks \eqref{eq:CCB-HIV-1-Virasoro} and \eqref{eq:CCB-HIV-1} exist. 
Moreover, under the identification 
\[
s = 2t^2, \quad 
a_D = - \nu - \frac{\theta_\infty}{3}, \quad
m_3 = \frac{2\theta_\infty+3\hbar}{12}, 
\]
\begin{equation}
e_2 = -\frac{3 \, \theta_0^2 + \theta_\infty^2}{12}, \quad
{e}_3 = \frac{9 \,\theta_0^2 \theta_\infty - \theta_\infty^3}{108}, \quad 
\delta_0=-\theta_0^2+\dfrac{\hbar^2}{4},
\end{equation}
the accessory parameter 
\eqref{eq:final-AP-HIV-1} and the classical conformal blocks \eqref{eq:CCB-HIV-1-Virasoro} and \eqref{eq:CCB-HIV-1}
are related as follows:
\begin{equation} \label{eq:main-conj-HIV-1}
{\mathscr E} = - \frac{d}{dt}  {\mathscr{W}} = \frac{d}{dt} \widetilde{\mathscr{W}}.
\end{equation} 
\end{conj}

The above computational results show that the equality \eqref{eq:main-conj-HIV-1} holds up to $O(t^{-6})$.

\bigskip
\subsubsection{Type 2 expansion at $t \to \infty$} ~ 

\smallskip
\noindent
{\bf Series expansion of ${\mathscr E}$.} \, 
 We perform the rescaling with 
$(d_{x}, d_{y}, d_{\mathscr E}) = (1, 2, 3)$ 
and take $\Lambda = t^{-2}$.
The rescaled spectral curve becomes
\begin{equation}
Q_{0}^{\rm res} = 
(X+1)^2-\dfrac{\mathscr {G}_0}{X}
+2\theta_\infty \Lambda + \dfrac{\theta_0^2 \, \Lambda^2}{X^2}
\quad \xrightarrow{ \Lambda  \to 0 } \quad 
Q_0^{[0]} =  (X+1)^2- \dfrac{\mathscr{G}^{[0]}_0}{X}.
\end{equation}
We take ${\mathscr G}_0^{[0]} = -{4}/{27}$
so that the limiting spectral curve 
\begin{equation}
Y^2 = Q_0^{[0]}(X) = \frac{(X+\frac13)^2(X+\frac43)}{X}
\end{equation}is of genus $0$. 
Then, in this limit, it is readily seen that two of the three original zeros of $Q^{\rm res}_0$ coalesce at $X = -1/3$. 
Therefore, for sufficiently small $\Lambda$, we choose the cycle $\gamma$ to be a circle that encloses the two turning points 
which tend to the double zero $X=-1/3$ of $Q_{0}^{[0]}$ inside, 
while keeping the other two turning points outside.
Then, the elliptic integrals defining the Voros period are described 
by the residue at $X = -1/3$ of the coefficients in the small $\Lambda$-expansion.
 
When imposing the condition 
\begin{equation}
V_{-1} = 2 \pi i \nu, \quad 
V_{0} = -\pi i, \quad 
V_{m} = 0 \quad (m \ge 1)
\end{equation} 
for the Voros period associated with $\gamma$ (which implies \eqref{eq:partial-IMD} with $-$-sign), the resulting computation of the accessory parameter is as follows:
\begin{subequations} \label{eq:AP-HIV-2}
\begin{align} 
{\mathscr G}_0 & = 
-\dfrac{4}{27}
+ \dfrac{2 \, \left( i\sqrt{3} \, \nu - \theta_\infty  \right)}{3} \, \Lambda
+ \left( \nu^2 - 3 \, \theta_0^2 - \theta_\infty^2 \right) \, \Lambda^2 
\notag \\
& \quad 
+ \left(  \dfrac{2i \, \nu^3 - 9i \, \nu \, (3\theta_0^2 + \theta_\infty^2) }{\sqrt{3}}  + 9 \, \theta_0^2\theta_\infty - \theta_\infty^3 \right) \, \Lambda^3
+O(\Lambda^4), \\
{\mathscr G}_1 & = 
\dfrac{5}{12} \, \Lambda^2
+ \dfrac{7i \, \nu}{2\sqrt{3}} \, \Lambda^3
- \dfrac{17\,  \left( 17\nu^2 - 9 \,(3 \theta_0^2 + \theta_\infty^2 ) \right)}{24} \, \Lambda^4
+O(\Lambda^5), \\
{\mathscr G}_2 & = 
-\dfrac{1105}{576} \, \Lambda^4
+O(\Lambda^5), \quad \dots
\end{align}
\end{subequations}
Furthermore, by formally interchanging the order of $\hbar$-expansion and large $t$-expansion, we obtain the following formal series expansion for the accessory parameter:
\begin{align}
{\mathscr E}  =& 
-\dfrac{4}{27} \, t^3
+\dfrac{2 \left( i\sqrt{3} \, \nu - \theta_\infty \right)}{3} \, t
+\left(\nu^2 - 3 \, \theta_0^2 - \theta_\infty^2  +\dfrac{5 \, \hbar^2}{12} \right) \, t^{-1} \notag \\[+.3em]
&+\left(
\dfrac{i \, \left( 2 \, \nu^3 - 9 \, (3\theta_0^2 + \theta_\infty^2) \, \nu  \right)}{\sqrt{3}}  + 9 \, \theta_0^2\theta_\infty - \theta_\infty^3 
+ \dfrac{7 \, i\nu}{2\sqrt{3}}\hbar^2\right) \, t^{-3}+ O(t^{-4}).
\label{eq:final-AP-HIV-2}
\end{align}

\smallskip
\noindent
{\bf Comparison with ${\mathscr W}$.} \, 
 Let us recall the quantum expansion of ${\mathscr Z}$, 
called ``square exp singularity of ${\rm QPIV}$ / $H_2$ with heavy hypers", 
given in \cite[(4.103)--(4.114) in \S 4.7]{BST25}\footnote{
In \cite{BST25}, the notation $\alpha_D = i a_D/ \sqrt{3}$ was used; 
here, we adopt a formulation expressed solely in terms of $a_D$.}: :  
\begin{align}
& - \varepsilon_1 \varepsilon_2 \log {\mathscr Z} \notag \\ 
& \quad =
\dfrac{s^2}{108}
+\dfrac{(2i\sqrt{3} \, a_D + 12 \, {m_3} -3 \, (\varepsilon_1+\varepsilon_2) )}{12} \, s
-\dfrac{12 \, a_D^2 + 144 \, {e_2} + 5 \, \varepsilon_1^2 + 6 \, \varepsilon_1\varepsilon_2 + 5 \, \varepsilon_2^2}{24} \, \log s\notag\\
& \qquad -\dfrac{4i\sqrt{3} \, a_D^3 + 
i\sqrt{3} \, \bigl(
216 \, {e_2} + 7 \, \varepsilon_1^2 + 6 \, \varepsilon_1 \varepsilon_2 
+ 7 \, \varepsilon_2^2 \bigr) \, a_D 
-648 \, e_3  }{6} \, s^{-1} 
+O(s^{-2}). 
\label{eq:CB-HIV-2}
\end{align}
The result of \cite{BST25} suggests that the associated classical conformal block also exists:
\begin{align}
{\mathscr W}  =& - \varepsilon_1 \varepsilon_2 \log {\mathscr Z}  
\Bigl|_{(\varepsilon_1, \varepsilon_2) = (\hbar, 0)}\notag\\
=& \dfrac{s^2}{108} 
+ \dfrac{2i\sqrt{3} \, a_D + 12 \, {m_3}-3 \, \hbar}{12} \, s
-\dfrac{12 \, a_D^2 + 144 \, {e_2}+5 \, \hbar^2}{24} \, \log s\notag\\[+.3em]
&-\dfrac{4i\sqrt{3} \, a_D^3 
+ i\sqrt{3} \, \bigl( 216 \, e_2 + 7 \, \hbar^2 \bigr) \, a_D 
-648 \, e_3}{6} \, s^{-1}+O(s^{-2}).
\label{eq:del-CCB-HIV-2}
\end{align}
Then, our conjecture claims 


\begin{conj}
The classical conformal block \eqref{eq:del-CCB-HIV-2} exists. 
Moreover, under the identification 
\begin{equation}
s=2 t^2, \quad a_D = -\nu, \quad 
{m_3}=\dfrac{2 \, \theta_\infty + 3 \, \hbar}{12}, \quad
{e_2}=-\dfrac{3 \, \theta_0^2 + \theta_\infty^2}{12}, \quad 
{{e}_3}=\dfrac{9 \, \theta_0^2\theta_\infty-\theta_\infty^3}{108}, 
\end{equation}
the accessory parameter 
\eqref{eq:final-AP-HIV-2}
and the classical conformal block \eqref{eq:del-CCB-HIV-2} are related as follows:
\begin{equation} \label{eq:main-conj-HIV-2}
{\mathscr E} = -\frac{d}{dt} {\mathscr W} 
\end{equation} 
\end{conj}

The above computational results show that the equality \eqref{eq:main-conj-HIV-2} holds up to $O(t^{-4})$.

\bigskip
\subsection{Doubly confluent Heun equation $H_{{\rm III}_1}$}

The equation $H_{{\rm III}_1}$ has the following potential
\begin{equation}
Q_{{\rm III}_1} = \frac{t^2}{4x^4} +  \frac{t \theta_0}{x^3} 
- \frac{{\mathscr E}}{x^2} + \frac{\theta_\infty}{x} + \frac{1}{4}.
\end{equation}

\smallskip
\subsubsection{Expansion at $t \to 0$} ~

\smallskip
\noindent
{\bf Series expansion of ${\mathscr E}$.} \, 
In this case we do not have to perform any rescaling 
(i.e., $d_{x} = d_{y} = d_{\mathscr E} = 0$), 
and can take $\Lambda = t$ as the expansion parameter, and we have
\begin{equation}
Q_0^{\rm res} = \frac{\Lambda^2}{4X^4} +  \frac{\theta_0 \, \Lambda}{X^3} 
- \frac{{\mathscr G}_0}{X^2} + \frac{\theta_\infty}{X} + \frac{1}{4} 
\quad \xrightarrow{ \Lambda \to 0 } \quad 
Q_0^{[0]} = - \frac{{\mathscr G}_0^{[0]}}{X^2} + \frac{\theta_\infty}{X} + \frac{1}{4}. 
\end{equation}
In this limit, it is readily seen that two of the four zeros of 
$Q^{\rm res}_0(X)$ coalesce at the origin, 
and the limiting curve $Y^2 = Q_0^{[0]}(X)$ is of genus $0$ (which is called the Kummer curve in \cite{IKT18-2}).  
We therefore choose, as the vanishing cycle $\gamma$ in this limit, 
the closed path encircling these two zeros together with the origin.
Then, the elliptic integrals defining the Voros period are described 
by the residues at $X = 0$ of the coefficients in the small $\Lambda$-expansion.
 
When imposing the condition 
\begin{equation}
V_{-1} = 2 \pi i \nu, \quad 
V_{0} = \pi i , \quad 
V_m = 0 \quad (m \ge 1)
\end{equation} 
for the Voros period associated with $\gamma$
(which implies \eqref{eq:partial-IMD} with $-$-sign), 
under the assumption $\nu \ne 0$, 
we find
\begin{subequations} \label{eq:AP-HIII-1-1}
\begin{align} 
{\mathscr G}_0 & = - \nu^2 - \frac{\theta_0 \theta_\infty}{2 \, \nu^2} \, \Lambda 
-\frac{\nu^4 - 3 \, (\theta_0^2 + \theta_\infty^2) \, \nu^2 + 5 \, \theta_0^2 \theta_\infty^2}{32 \, \nu^6} \, \Lambda^2 + O(\Lambda^3), \\
{\mathscr G}_1 & = \frac{1}{4} - \frac{\theta_0 \theta_\infty}{8 \, \nu^4} \, \Lambda
-\frac{4 \, \nu^4 - 15 \, (\theta_0^2 + \theta_\infty^2) \, \nu^2 + 42 \, \theta_0^2 \theta_\infty^2}{128 \, \nu^8} \, \Lambda^2 + O(\Lambda^3), \\
{\mathscr G}_2 & = - \frac{\theta_0 \theta_\infty}{32 \, \nu^6} \, \Lambda
- \frac{16 \, \nu^4 - 63 \, (\theta_0^2 + \theta_\infty^2) \, \nu^2 + 219 \, \theta_0^2 \theta_\infty^2}{512 \, \nu^{10}} \,
\Lambda^2 + O(\Lambda^3), \dots
\end{align}
\end{subequations}
Furthermore, by formally interchanging the order of $\hbar$-expansion and large $t$-expansion, 
we obtain the following formal series expansion for the accessory parameter:
\begin{align}
{\mathscr E} 
& = \left( - \nu^2 + \frac{\hbar^2}{4}   \right) 
+  \left( - \frac{\theta_0\theta_\infty}{2 \, \nu^2} 
- \frac{\theta_0\theta_\infty}{8 \, \nu^4} \, \hbar^2 
- \frac{\theta_0\theta_\infty}{32 \, \nu^6} \, \hbar^4 + O(\hbar^6) \right) \, t 
\notag \\
& \quad + \left( -\frac{\nu^4 - 3 \, (\theta_0^2 + \theta_\infty^2)\, \nu^2 +5 \, \theta_0^2 \theta_\infty^2}{32 \, \nu^6} + O(\hbar^2) \right) \,t^{2} + O(t^3)  \notag \\
& = \left( - \nu^2 + \frac{\hbar^2}{4}  \right) - 
\frac{2 \, \theta_0 \theta_\infty}{4 \, \nu^2 - \hbar^2} \, t  + 
\frac{P_2}{32 \, (4 \, \nu^2-\hbar^2)^3(\nu^2 - \hbar^2)} \, t^2 + O(t^3), 
\label{eq:final-AP-HIII-1-1}
\end{align}
where 
\begin{align}
P_2 & = 
- 64 \,\left( \nu^6 
- 3 \, (\theta_0^2 + \theta_\infty^2) \, \nu^4 
+ 5 \, \theta_0^2 \theta_\infty^2 \nu^2 \right) \notag \\[+.3em]
& \qquad + 16 \, \left(3 \, \nu^4 
- 6 \, (\theta_0^2 + \theta_\infty^2)\nu^2
-7 \, \theta_0^2 \theta_\infty^2 
\right) \, \hbar^2 
- 12 \, (\nu^2 - \theta_0^2 - \theta_\infty^2) \, \hbar^4
+ \hbar^6.
\end{align}
The rational expression of the coefficients 
in the last equality in \eqref{eq:final-AP-HIII-1-1}
is obtained by the same method as described in Section \ref{subsec:example-AP-III-regular}.

\bigskip
\noindent
{\bf Comparison with ${\mathscr W}$.} \, 
The relevant conformal block was introduced by \cite{Gaiotto09} in terms of the Virasoro algebra with a regular singular expansion.
It is also identified with the Nekrasov partition function with $N_f = 2$ through the AGT correspondence. 
The associated classical conformal block is expected to coincide with the NS twisted superpotential, and its expansion coefficients are explicitly given as follows (cf.\, \cite[Section 2.2.3, (2.20a)]{LN21}):
\begin{align}\label{eq:CCB-HIII-1-1}
    \mathscr{W} & =
    \delta_{\sigma} \, \log t
+ \frac{\theta_0 \theta_{\infty}}{2\, \delta_{\sigma}}\, t 
+ \Biggl(
- \frac{\theta_{0}^2\theta_{\infty}^2}{16 \, \delta_{\sigma}^3}
+ \frac{(3 \, \theta_{0}^2 + \delta_{\sigma})(3 \, \theta_{\infty}^2+\delta_{\sigma})}
{16 \, \delta_{\sigma}^2 \, (4 \, \delta_{\sigma} + 3 \, \hbar^2)}
\Biggr)t^2
+O(t^3).
\end{align}

Then, the Zamolodchikov-type conjecture, which has already been suggested in previous works, claims the following:
\begin{conj}[\cite{PP16, LN21}]
The classical conformal block \eqref{eq:CCB-HIII-1-1} exists. 
Moreover, under the identification 
\begin{equation}
\delta_\sigma=-\nu^2 + \dfrac{\hbar^2}{4},
\end{equation}
the accessory parameter \eqref{eq:final-AP-HIII-1-1} 
and the classical conformal block \eqref{eq:CCB-HIII-1-1}
are related as follows:
\begin{equation} \label{eq:main-conj-HIII-1-1}
\mathscr{E} = t\dfrac{d}{dt}\mathscr{W}.
\end{equation} 
\end{conj}

The above computational results show that the equality \eqref{eq:main-conj-HIII-1-1} holds up to $O(t^{3})$.

\bigskip
\subsubsection{Expansion at $t \to \infty$} ~

\smallskip 
\noindent
{\bf Series expansion of ${\mathscr E}$.} \,  We perform the rescaling with 
$(d_{x}, d_{y}, d_{\mathscr E}) = (1/2 , 1/2, 1)$ 
and take $\Lambda = t^{-1/2}$.
The rescaled spectral curve becomes
\begin{equation}
Q_{0}^{\rm res} = \dfrac{1}{4X^4}-\dfrac{\mathscr{G}_0}{X^2}+\dfrac{1}{4} + \frac{\theta_0 \, \Lambda}{X^3}+\frac{\theta_\infty \, \Lambda}{X}
\quad \xrightarrow{ \Lambda  \to 0 } \quad 
Q_0^{[0]} =  \dfrac{1}{4X^4} - \dfrac{\mathscr{G}^{[0]}_0}{X^2}+\dfrac{1}{4}. 
\end{equation}
We take ${\mathscr G}_0^{[0]} = -1/2$
so that the limiting spectral curve 
\begin{equation}
Y^2 = Q_0^{[0]}(X) = \frac{(X^2+1)^2}{4X^4}
\end{equation}
is of genus $0$\footnote{
There is another choice ${\mathscr G}_0^{[0]} = 1/2$: 
In this case, we can also obtain a similar series expansion of accessory parameter which is equivalent to \eqref{eq:final-AP-HIII-1-2}. 
More precisely, the accessory parameter is obtained from \eqref{eq:final-AP-HIII-1-2}
by $t^{-k/2} \mapsto i^k t^{-k/2}$ and $(\theta_0,\theta_\infty) \mapsto (-\theta_0, \theta_\infty)$. 
}. 
In this limit, it is readily seen that 
two of the zeros of $Q^{\rm res}_0$ coalesce at $X=i$, 
while the other two of the zeros coalesce at $X=-i$.
Therefore, for sufficiently small $\Lambda$, 
we choose $\gamma = \gamma_{i} - \gamma_{-i}$, 
where $\gamma_{\pm i}$ is the positively oriented 
residue cycle around $X = \pm i$.
Then, the elliptic integrals defining the Voros period are described 
by the difference of the residues at $X = \pm i$ 
of the coefficients in the small $\Lambda$-expansion.
 
When imposing the condition
\begin{equation}
V_{-1} = 2 \pi i \nu, \quad 
V_{m} = 0 \quad (m \ge 0)
\end{equation} 
for the Voros period associated with $\gamma$
(which implies \eqref{eq:partial-IMD} with $+$-sign), 
the resulting computation of the accessory parameter is as follows:
\begin{subequations} \label{eq:AP-HIII-1-2}
\begin{align} 
{\mathscr G}_0 & = - \frac{1}{2} 
- i \nu \, \Lambda 
+ \dfrac{\nu^2 - 3 \, (\theta_0+\theta_\infty)^2 + 4 \, \theta_0 \theta_\infty}{8}\, \Lambda^2
- \dfrac{ i \nu^3 - i \nu \, \left( 
9 \,(\theta_0+\theta_\infty)^2-4 \, \theta_0 \theta_\infty \right) }{64}\, \Lambda^3
+ O(\Lambda^4), \\
{\mathscr G}_1 & =  
 \frac{3}{8} \, \Lambda^{2}
- \frac{3 i \, \nu}{64} \, \Lambda^{3}
- \frac{17 \, \nu^2 - 21 \, (\theta_0+\theta_\infty)^2  + 20 \, \theta_0 \theta_\infty}{512} \, \Lambda^{4}
+ O(\Lambda^5), \\
{\mathscr G}_2 & = \frac{9}{1024}\, \Lambda^{4}
+ \frac{405 \, i \nu}{16384}\, \Lambda^{5} + O(\Lambda ^6), \dots
\end{align}
\end{subequations}
Furthermore, by formally interchanging the order of $\hbar$-expansion and large $t$-expansion, 
we obtain the following formal series expansion for the accessory parameter:
\begin{align}
{\mathscr E} & = 
- \frac{t}{2} - i  \nu \, t^{1/2} 
+ \dfrac{\nu^2 - 3 \, (\theta_0+\theta_\infty)^2 + 4 \, \theta_0 \theta_\infty + 3 \, \hbar^2}{8}
\notag \\
& \quad 
- \dfrac{i  \nu^3 - i  \nu \, \left( 
9 \, (\theta_0+\theta_\infty)^2 - 4 \, \theta_0 \theta_\infty -3 \, \hbar^2 \right)}{64}\, t^{-1/2} 
\notag \\ 
& \quad 
- \frac{1}{1024} \, \biggl(
5 \, \nu^4
- \left(
102 \, ( \theta_0+\theta_\infty)^2 
- 24 \, \theta_0 \theta_\infty
- 34 \, \hbar^2
\right) \, \nu^2
\notag \\
& \qquad 
+ 33 \, (\theta_0+\theta_\infty)^4
- 136 \, (\theta_0+\theta_\infty)^2 \theta_0 \theta_\infty
+ 16 \, \theta_0^2 \theta_\infty^2
- \left(42 \, (\theta_0+\theta_\infty)^2 - 40 \, \theta_0\theta_\infty \right) \, \hbar^2
+ 9 \, \hbar^4
\biggr) \, t^{-1}
\notag \\
& \qquad + O(t^{-3/2}).
\label{eq:final-AP-HIII-1-2}
\end{align}

\smallskip
\noindent
{\bf Comparison with ${\mathscr W}$.} \, Let us recall the quantum expansion of ${\mathscr Z}$, 
called ``${\rm QPIII}_1 / N_f = 2$", 
given in \cite[(4.40)--(4.47) in \S 4.3]{BST25}:  
\begin{align}
& - \varepsilon_1 \varepsilon_2 \log {\mathscr Z} = 
- \frac{s^2}{64}
+ \frac{a_D \, s}{2}
- \left( a_D^2 - \frac{e_2}{2} - \frac{3 \, \big(w_2 - (\varepsilon_1+\varepsilon_2)^2 \big)}{4} \right) \, \log s 
\notag \\
& \quad 
- \left(2 \, a_D^3 - \frac{14 \, e_2 + 9 \, w_2 + 2 \, \varepsilon_1 \varepsilon_2 - 3 \, (\varepsilon_1+\varepsilon_2)^2}{2} \, a_D \right)  \, s^{-1} 
\notag \\
& \quad 
- \Biggl( 
5 \, a_D^4 
- \frac{90 \, e_2 + 51 \, w_2 + 14 \, \varepsilon_1 \varepsilon_2 - 17 \, (\varepsilon_1+\varepsilon_2)^2}{2} \, a_D^2
\notag \\
& \quad\quad 
- \frac{\bigl( 62 \, e_2 + 33 \, w_2 + 12 \, \varepsilon_1 \varepsilon_2 - 9 \, (\varepsilon_1+\varepsilon_2)^2\bigr) 
\bigl(2 \, e_2 - w_2 + (\varepsilon_1+\varepsilon_2)^2 \bigr)}{16}
\Biggr) \, s^{-2} + O(s^{-3}).
\end{align}
As far as the computations in \cite{BST25} go, the expansion coefficients 
on the right-hand side do not involve $\varepsilon_1, \varepsilon_2$ in the denominator. 
One can expect that the same property holds at all orders, and in particular, 
the following classical conformal block to exist: 
\begin{align}
{\mathscr W} 
& = - \varepsilon_1 \varepsilon_2 \log {\mathscr Z}  
\Bigl|_{(\varepsilon_1, \varepsilon_2) = (\hbar, 0)} 
\notag \\
& = - \frac{s^2}{64}
+ \frac{a_D}{2} \, s 
- \left( a_D^2 - \frac{e_2}{2} - \frac{3 \, \big(w_2 - \hbar^2 \big)}{4} \right) \,
\log s
\notag \\ 
& \quad - \left(2 \, a_D^3 - \frac{14 \, e_2 + 9 \, w_2 
- 3 \, \hbar^2}{2} \, a_D \right) \, s^{-1}
\notag \\ 
& \quad - \biggl(
5 \, a_D^4 
- \frac{90 \, e_2 + 51 \, w_2 - 17 \, \hbar^2}{2} \, a_D^2
- \frac{\bigl( 62 \, e_2 + 33 \, w_2 - 9 \, \hbar^2\bigr) 
\bigl(2 \, e_2 - w_2 + \hbar^2 \bigr)}{16}
\biggr) \, s^{-2} + O(s^{-3}).
\label{eq:CCB-HIII-1-2}
\end{align}

Then, our conjecture claims 

\begin{conj}
The classical conformal block \eqref{eq:CCB-HIII-1-2} exists. 
Moreover, under the identification 
\begin{equation}
s = 8 i \, t^{1/2}, \quad
a_D = \frac{\nu}{2}, \quad 
e_2 =  \theta_0 \theta_\infty, 
\quad w_2 = \theta_0^2+\theta_\infty^2, 
\end{equation}
the accessory parameter  \eqref{eq:final-AP-HIII-1-2} 
and the classical conformal block \eqref{eq:CCB-HIII-1-2}
are related as follows:
\begin{equation} \label{eq:main-conj-HIII-1-2}
{\mathscr E} = - t \frac{d}{dt} {\mathscr W} + \frac{t}{2}. 
\end{equation} 
\end{conj}

The above computational results show that the equality \eqref{eq:main-conj-HIII-1-2} holds up to $O(t^{-3/2})$.

\subsection{Reduced doubly confluent Heun equation $H_{{\rm III}_2}$}

The equation $H_{{\rm III}_2}$ has the following potential
\begin{equation}
Q_{{\rm III}_2} = \dfrac{t}{x^3}-\dfrac{\mathscr{E}}{x^2}+\dfrac{\theta_\infty}{x}+\dfrac{1}{4}.
\end{equation}

\bigskip
\subsubsection{Type 1 expansion at $t \to 0$} ~ 

\smallskip
\noindent
{\bf Series expansion of ${\mathscr E}$.} \, 
We perform the rescaling with 
$(d_{x}, d_{y}, d_{\mathscr E}) = (0, 0, 0)$ 
and take $\Lambda = t$.
The rescaled spectral curve becomes
\begin{equation}
Q_{0}^{\rm res} = \dfrac{\Lambda}{X^3}-\dfrac{\mathscr{G}_0}{X^2} +\dfrac{\theta_\infty}{X}+ \dfrac{1}{4}
\quad \xrightarrow{ \Lambda \to 0 } \quad 
Q_0^{[0]} = - \dfrac{\mathscr{G}_0^{[0]}}{X^2}+\dfrac{\theta_\infty}{X}+\dfrac{1}{4}. 
\end{equation}
In this limit, it is readily seen that one of the three zeros of 
$Q^{\rm res}_0$ coalesces at $0$, 
and the limiting curve ${\mathcal C}^{\rm res}_{\rm deg}$ is a genus $0$ curve (which is called the Kummer curve in \cite{IKT18-2}). 

Therefore, for sufficiently small $\Lambda$, 
we choose the cycle $\gamma$ to be a circle that encloses the two turning points 
which tend to  
the zeros of $Q_{0}^{[0]}$ together with the origin inside, 
while keeping the other turning point outside. 
Then, the elliptic integrals defining the Voros period are described 
by the residues at $X = 0$ of the coefficients in the small $\Lambda$-expansion.

When imposing the condition
\begin{equation}
V_{-1} = 2 \pi i \nu, \quad 
V_{0} = \pi i, \quad 
V_{m} = 0 \quad (m \ge 1)
\end{equation} 
for the Voros period associated with $\gamma$
(which implies \eqref{eq:partial-IMD} with $-$-sign), 
under the assumption $\nu \ne 0$, 
we find 
\begin{subequations} \label{eq:AP-HIII-2-1}
\begin{align} 
{\mathscr G}_0 & = 
-\nu^2 
- \dfrac{\theta_\infty}{2 \, \nu^2}\, \Lambda
+ \dfrac{3 \, \nu^2-5 \, \theta_\infty^2}{32 \, \nu^6}\, \Lambda^2
+ \dfrac{7 \, \theta_\infty\nu^2-9 \, \theta_\infty^3}{64 \, \nu^{10}}\, \Lambda^3\notag\\
& \quad - \dfrac{153 \, \nu^4-1430 \, \theta_\infty^2\nu^2+1469 \,\theta_\infty^4}{8192 \, \nu^{14}}\, \Lambda^4
+O(\Lambda^5), \\
{\mathscr G}_1 & =  
\dfrac{1}{4}
- \dfrac{\theta_\infty}{8 \,\nu^4}\, \Lambda
+ \dfrac{3 \,(5 \,\nu^2-14 \,\theta_\infty^2)}{64 \, \nu^8}\, \Lambda^2
+ \dfrac{129 \,\theta_\infty\nu^2-220 \, \theta_\infty^3}{256 \,\nu^{12}}\, \Lambda^3\notag\\
& \quad - \dfrac{2475 \, \nu^4-29549 \, \theta_\infty^2\nu^2+36890 \, \theta_\infty^4}{16384 \, \nu^{16}}\, \Lambda^4
+O(\Lambda^5), \\
{\mathscr G}_2 & = 
- \dfrac{\theta_\infty}{32 \, \nu^6}\, \Lambda
+ \dfrac{3 \, (21 \, \nu^2-73 \, \theta_\infty^2)}{512 \,\nu^{10}}\, \Lambda^2
+ \dfrac{23 \, (33 \, \theta_\infty\nu^2-65 \, \theta_\infty^3)}{512 \, \nu^{14}}\, \Lambda^3\notag\\
& \quad - \dfrac{101439 \, \nu^4-1413300 \, \theta_\infty^2\nu^2+1971898 \, \theta_\infty^4}{131072 \, \nu^{18}}\, \Lambda^4
+O(\Lambda^5), \dots
\end{align}
\end{subequations}
Furthermore, by formally interchanging the order of $\hbar$-expansion and large $t$-expansion, 
we obtain the following formal series expansion for the accessory parameter:
\begin{align}
{\mathscr E} & = 
\left(-\nu^2 + \dfrac{\hbar^2}{4} \right)+\left( -\dfrac{\theta_\infty}{2 \, \nu^2} - \dfrac{\theta_\infty}{8 \, \nu^4} \, \hbar^2 -  \dfrac{\theta_\infty}{32 \, \nu^6} \, \hbar^4+O(\hbar^6)\right)\, t \notag \\
&\quad +\left(\dfrac{3 \, \nu^2-5 \, \theta_\infty^2}{32 \, \nu^6} + \dfrac{3 \, (5 \, \nu^2 - 14 \, \theta_\infty^2)}{64 \, \nu^8} \, \hbar^2 + \dfrac{3 \, (21 \, \nu^2-73 \, \theta_\infty^2)}{512 \, \nu^{10}} \, \hbar^4 + O(\hbar^6)\right)\, t^2 \notag \\
&\quad +\left(\dfrac{7\,\theta_\infty\nu^2-9\,\theta_\infty^3}{64\,\nu^{10}}
+\dfrac{129\,\theta_\infty\nu^2-220\,\theta_\infty^3}{256\,\nu^{12}} \, \hbar^2 
+\dfrac{23\,(33 \, \theta_\infty\nu^2-65 \, \theta_\infty^3)}{512 \, \nu^{14}} \, \hbar^4 + O(\hbar^6)\right)\, t^3  + O(t^4)
\notag \\
&  = 
\left(-\nu^2 + \dfrac{\hbar^2}{4}\right)
- \frac{\theta_\infty}{4 \, \nu^2-\hbar^2} \, t 
+ \frac{
48 \, \nu^4 - \left(80 \, \theta_\infty^2 + 24 \, \hbar^2\right) \, \nu^2 - \left(28 \, \theta_\infty^2 \hbar^2  - 3 \, \hbar^4\right) }
{8 \left(4 \, \nu^2 - \hbar^2\right)^3 \left(\nu^2 - \hbar^2\right)} \, t^2 \notag \\
& \quad + \frac{\theta_\infty \Bigl( 
448 \, \nu^6
- 48 \left(12\, \theta_\infty^2 - \hbar^2\right) \, \nu^4
- 4 \left(232 \, \theta_\infty^2 + 27 \, \hbar^2\right) \, \hbar^2 \nu^2 
- \left(116 \, \hbar^4 \theta_\infty^2  - 17 \, \hbar^6\right)
\Bigr)}{\left(4 \, \nu^2 - \hbar^2\right)^5 \left(4 \, \nu^2 - 9 \, \hbar^2\right) \left(\nu^2 - \hbar^2\right)} \, t^3 + O(t^4).
\label{eq:final-AP-HIII-2-1}
\end{align}

\smallskip
\noindent 
{\bf Comparison with ${\mathscr W}$.} \, 
 The relevant classical conformal block is expected to coincide with the NS twisted superpotential with $N_f = 1$, and its expansion coefficients are explicitly given as follows (\cite[Section 2.2.3, (2.20b)]{LN21}):
\begin{align}\label{eq:CCB-HIII-2-1}
    {\mathscr W} =\delta_\sigma \,  \ln t+\dfrac{\theta_\infty}{2\delta_\sigma} \, t 
    + \dfrac{(5 \, \delta_\sigma-3 \, \hbar^2) \, \theta_\infty^2+3 \, \delta_\sigma^2}{16 \, \delta_\sigma^3\, (4\, \delta_\sigma+3\, \hbar^2)} \, t^2+O(t^3).
\end{align}

The Zamolodchikov-type conjecture, which has already been suggested in previous works, claims the following:

\begin{conj}[Cf.\, \cite{PP16, LN21}]
The classical conformal block \eqref{eq:CCB-HIII-2-1} exists. 
Moreover, under the identification 
\begin{equation}
\delta_\sigma=-\nu^2+\dfrac{\hbar^2}{4}
\end{equation}
the accessory parameter 
\eqref{eq:final-AP-HIII-2-1} and the classical conformal block \eqref{eq:CCB-HIII-2-1}
are related as follows:
\begin{equation} \label{eq:main-conj-HIII2-1}
\mathscr{E} = t\dfrac{d}{dt}\mathscr{W} 
\end{equation} 
\end{conj}

\bigskip


\subsubsection{Type 2 expansion at $t \to \infty$} 

\,

\smallskip
\noindent
{\bf Series expansion of ${\mathscr E}$.} \, 
 We perform the rescaling with 
$(d_{x}, d_{y}, d_{\mathscr E}) = (1/3, 1/3, 2/3)$ 
and take $\Lambda = t^{-1/3}$.
The rescaled spectral curve becomes
\begin{equation}
Q_{0}^{\rm res} = \dfrac{1}{X^3}-\dfrac{\mathscr{G}_0}{X^2}+\dfrac{\theta_\infty}{X}\Lambda+\dfrac{1}{4}
\quad \xrightarrow{ \Lambda  \to 0 } \quad 
Q_0^{[0]} =  \dfrac{1}{X^3}-\dfrac{\mathscr{G}^{[0]}_0}{X^2}+\dfrac{1}{4}.
\end{equation}
We take ${\mathscr G}_0^{[0]} = \frac{3}{2^{4/3}}$
to have the limiting spectral curve 
\begin{equation}
Y^2 = Q_0^{[0]}(X) = \frac{(X-2^{1/3})^2(X+2^{4/3})}{4X^3}
\end{equation}
is of genus $0$. 
Then, in this limit, it is readily seen that two of the three zeros of 
$Q^{\rm res}_0$ coalesce at $X = 2^{1/3}$, 
and the limiting curve ${\mathcal C}^{\rm res}_{\rm deg}$ is a genus $0$ curve. 
Therefore, for sufficiently small $\Lambda$, 
we choose the cycle $\gamma$ to be a circle that encloses the two turning points 
which tend to the double zero $X=2^{1/3}$ of $Q_{0}^{[0]}$ inside, 
while keeping the other two turning point outside.
Then, the elliptic integrals defining the Voros period are described 
by the residue at $X = 2^{1/3}$ of the coefficients in the small $\Lambda$-expansion.
 
When imposing the condition
\begin{equation}
V_{-1} = 2 \pi i \nu, \quad 
V_0 = -\pi i, \quad 
V_{m} = 0 \quad (m \ge 1)
\end{equation} 
for the Voros period associated with $\gamma$
(which implies \eqref{eq:partial-IMD} with $-$-sign), 
the resulting computation of the accessory parameter is as follows:
\begin{subequations} \label{eq:AP-HIII-2-2}
\begin{align} 
{\mathscr G}_0 & =  \dfrac{3}{2^{4/3}} - 2^{1/3} \left( \sqrt{3} \, \nu - \theta_\infty \right) \Lambda +\dfrac{\nu^2-2\,\theta_\infty^2}{6}\, \Lambda^2
\notag \\ 
& \quad -\dfrac{5\,\nu^3-24\sqrt{3} \, \theta_\infty \nu^2+72\,\theta_\infty^2 \nu - 16\sqrt{3} \,\theta_\infty^3}{108\cdot 2^{1/3}\cdot \sqrt{3}} \, \Lambda^3
+ O(\Lambda ^4), \\
{\mathscr G}_1 & =  
 \frac{23}{72} \, \Lambda ^{2}
- \frac{25 \, \nu - 8\sqrt{3} \, \theta_\infty}{432\cdot 2^{1/3}\cdot \sqrt{3}} \, \Lambda ^{3}
+ O(\Lambda ^4), \\
{\mathscr G}_2 & = \frac{1571}{559872\cdot 2^{1/3}}\, \Lambda ^{4}
+ O(\Lambda ^5), \dots
\end{align}
\end{subequations}
Furthermore, by formally interchanging the order of $\hbar$-expansion and large $t$-expansion, 
we obtain the following formal series expansion for the accessory parameter:
\begin{align}
{\mathscr E} & = 
\dfrac{3}{2^{4/3}} \, t^{2/3} 
- 2^{1/3} \left( \sqrt{3} \, \nu - \theta_\infty \right) \, t^{1/3}
+\left(\dfrac{\nu^2-2\,\theta_\infty^2}{6}+\dfrac{23 \, \hbar^2}{72}\right)\notag\\[+.3em]
& \quad  - \dfrac{20 \,\nu^3-96\sqrt{3}\,\theta \nu^2+288 \, \theta_\infty^2\nu-64\sqrt{3} \, \theta_\infty^3 - (25 \, \nu-8\sqrt{3} \, \theta_\infty)\,\hbar^2}{432\cdot 2^{1/3}\cdot \sqrt{3}} \, t^{-1/3}
+O(t^{-2/3}).
\label{eq:final-AP-HIII-2-2}
\end{align}

\smallskip
\noindent
{\bf Comparison with ${\mathscr W}$.} \, 
Let us recall the quantum expansion of ${\mathscr Z}$, 
called ``square exp singularity of ${\rm QPIII_2}$ / $N_f=1$",  
given in \cite[(4.55)--(4.64) in \S 4.5]{BST25}\footnote{
We set $\psi=1$ and $\alpha_D = a_D/\sqrt{3}$ in \cite{BST25}. 
As in previous examples, we have omitted the tilde on the variables.
}:

\begin{align}
- \varepsilon_1 \varepsilon_2 \log {\mathscr Z} 
& = -\frac{s^2}{8}
+ \left( \sqrt{3}\, a_D - m_1 \right) \,  s
- \left( \frac{a_D^2}{2} - m_1^2 
+ \frac{\varepsilon_1 \varepsilon_2}{12} 
+ \frac{23 \,  (\varepsilon_1+\varepsilon_2)^2}{24}\right) \, \log s 
\notag \\
& 
\quad + 
\Biggl( -\frac{5}{12 \sqrt{3}} \,  a_D^3
+ 2 \, m_1 a_D^2 
- 
\left(
6 \, m_1^2 + \frac{13 \, \varepsilon_1 \varepsilon_2}{24}
- \frac{25 \, (\varepsilon_1+\varepsilon_2)^2}{48} 
\right) \frac{a_D}{\sqrt{3}} 
\notag \\
& \qquad  +  \frac{m_1 \, \left(
8 \, m_1^2 + 2 \, \varepsilon_1 \varepsilon_2 - (\varepsilon_1+\varepsilon_2)^2 \right)}{6} 
\Biggr) \, s^{-1} + O(s^{-2}).
\label{eq:CB-HIII-2-2}
\end{align}
For this case as well, it is expected from the above results that the following classical conformal block exists:
\begin{align}
{\mathscr W} & =  - \varepsilon_1 \varepsilon_2 \log {\mathscr Z}  
\Bigl|_{(\varepsilon_1, \varepsilon_2) = (\hbar, 0)} \notag\\
& = -\frac{s^2}{8}
+ \left( \sqrt{3} \, a_D - m_1 \right)s
- \left( \frac{a_D^2}{2} - m_1^2 
+ \frac{23 \, \hbar^2}{24}\right) \, \log s 
\notag \\
& 
\quad + 
\Biggl( -\frac{5}{12 \sqrt{3}} \, a_D^3
+ 2 \, m_1 a_D^2 
- 
\left(
6 \, m_1^2 - \frac{25 \, \hbar^2}{48} 
\right) \frac{a_D}{\sqrt{3}}  
+ \frac{m_1 \, \left(8 \, m_1^2  - \hbar^2 \right)}{6} 
\Biggr) \, s^{-1} + O(s^{-2}).
%
%
%
\label{eq:del-CCB-HIII-2-2}
\end{align}
Then, our conjecture claims 

\begin{conj}
The classical conformal block \eqref{eq:del-CCB-HIII-2-2} exists. 
Moreover, under the identification 
\begin{equation}
s = (54 \, t)^{1/3}, \quad 
a_D = \nu, \quad 
{m_1}=\theta_\infty,
\end{equation}
the accessory parameter 
\eqref{eq:final-AP-HIII-2-2} and the classical conformal block \eqref{eq:del-CCB-HIII-2-2}
are related as follows:
\begin{equation} \label{eq:main-conj-HIII2-2}
{\mathscr E} = -t\frac{d}{dt} {\mathscr W}. 
\end{equation} 
\end{conj}

The above computational results show that the equality \eqref{eq:main-conj-HIII2-2} holds up to $O(t^{-2/3})$, and we have further verified the agreement of the coefficients in the subsequent terms.

\bigskip




\subsection{Triconfluent Heun equation $H_{\rm II}$}

The equation $H_{\rm II}$ has the following potential
\begin{equation}
Q_{\rm II} = (x^2+t)^2 + 2 \theta_\infty x +{\mathscr E}.
\end{equation}

\smallskip
\subsubsection{Type 1 expansion at $t \to \infty$} ~ 

\smallskip
\noindent
{\bf Series expansion of ${\mathscr E}$.} \, 
We perform the rescaling with 
$(d_{x}, d_{y}, d_{\mathscr E}) = (1/2, 3/2, 1/2)$ 
and take $\Lambda = t^{-3/2}$.
The rescaled spectral curve becomes
\begin{equation}
Q_{0}^{\rm res} = (X^2 + 1)^2 + (2 \theta_\infty X + {\mathscr G}_0) \,  \Lambda 
\quad \xrightarrow{ \Lambda  \to 0 } \quad 
Q_0^{[0]} =  (X^2 + 1)^2.
\end{equation}
In this limit, it is readily seen that two of the zeros of $Q^{\rm res}_0$ coalesce at $X=i$, 
while the other two of the zeros coalesce at $X=-i$.
The limiting curve ${\mathcal C}^{\rm res}_{\rm deg}$ is a genus $0$ curve. 

Here, for sufficiently small $\Lambda$, 
we choose $\gamma = \gamma_{i} - \gamma_{-i}$, 
where $\gamma_{\pm i}$ is the positively oriented residue cycle around $X = \pm i$.
Then, the elliptic integrals defining the Voros period are described 
by the difference of the residues at $X = \pm i$ of the coefficients in the small $\Lambda$-expansion.
 
When imposing the condition 
\begin{equation}
V_{-1} = 2 \pi i \nu, \quad 
V_{m} = 0 \quad (m \ge 0)
\end{equation} 
for the Voros period associated with $\gamma$
(which implies \eqref{eq:partial-IMD} with $+$-sign), 
the resulting computation of the accessory parameter is as follows:
\begin{subequations} \label{eq:AP-HII-1}
\begin{align} 
{\mathscr G}_0 & =  2 i \, \nu - \frac{3 \, \nu^2 - \theta_\infty^2}{8} \, \Lambda  
+ \frac{i \, (17 \, \nu^3 -  9 \, \theta_\infty^2)}{128} \, \Lambda ^2 
+\frac{375 \, \nu^{4} - 258 \, \theta_{\infty}^{2} \nu^{2} +11 \,\theta_{\infty}^{4}}{4096} \,\Lambda ^3
+ O(\Lambda ^4), \\
{\mathscr G}_1 & =  -\frac{\Lambda}{8} 
+ \frac{19 i \, \nu}{128} \, \Lambda ^{2}
+ \frac{459 \, \nu^{2} - 71 \, \theta_{\infty}^{2}}{2048} \, \Lambda ^{3}
+ O(\Lambda ^4), \\
{\mathscr G}_2 & = \frac{131}{4096}\, \Lambda ^{3}
- \frac{22707i \, \nu}{131072}\, \Lambda^{4} + O(\Lambda^5), \dots
\end{align}
\end{subequations}
Furthermore, by formally interchanging the order of $\hbar$-expansion and large $t$-expansion, 
we obtain the following formal series expansion for the accessory parameter:
\begin{align}
{\mathscr E} & = 
2 i \, \nu \, t^{1/2} - \frac{3\, \nu^2 - \theta_\infty^2 + \hbar^2}{8} \,t^{-1} 
+ \frac{i \, \nu \, (17 \, \nu^2 - 9 \, \theta_\infty^2 + 19 \, \hbar^2)}{128} \,t^{-5/2} 
\notag \\
& \quad + \frac{375 \, \nu^4 - (258\,  \theta_\infty^2 - 918 \, \hbar^2) \, \nu^2 
+  11 \,  \theta_\infty^4 - 142 \, \hbar^2 \theta_\infty^2   + 131 \, \hbar^4}{4096} \, t^{-4}
+ O(t^{-11/2}).
\label{eq:final-AP-HII-1}
\end{align}

\smallskip
\noindent
{\bf Comparison with ${\mathscr W}$.} \, 
Let us recall the quantum expansion of ${\mathscr Z}$, 
called ``linear exp singularity of ${\rm QPII}$ / $H_1$ with light hyper", 
given in \cite[(4.119)--(4.125) in \S 4.8]{BST25}:  
\begin{align}
& - \varepsilon_1 \varepsilon_2 \log {\mathscr Z} 
\notag \\
& \quad =  
\frac{a_D}{3} \, s
- \left(a_D^2 + \frac{(\varepsilon_1 + \varepsilon_2)^2}{12} - \frac{4 \, {m}^2}{3} 
\right) \, \log s
+ \left( 
\frac{17 \, a_{D}^{3}}{3} 
- \Bigl( 12 \, {m}^{2}
+ \frac{\varepsilon_{1}\varepsilon_{2}}{6}
- \frac{19 \, (\varepsilon_1+\varepsilon_2)^{2}}{12}  \Bigr) \, a_{D}
\right) \, s^{-1}
\notag \\[+.3em] 
& \qquad + 
\Biggl( 
\frac{125}{4} \, a_{D}^{4} 
- \Bigl( 86 \, {m}^{2}
+ \frac{15 \, \varepsilon_{1}\varepsilon_{2}}{4}
- \frac{153 \, (\varepsilon_{1}+\varepsilon_{2})^{2}}{8}  \Bigr) a_{D}^{2}
\notag \\
& \quad\qquad 
+ \Bigl( 16 \, {m}^{2} - (\varepsilon_{1}+\varepsilon_{2})^{2} \Bigr)
   \Bigl( \frac{11 \, {m}^{2}}{12}
        + \frac{17 \, \varepsilon_{1}\varepsilon_{2}}{48}
        - \frac{131 \,(\varepsilon_{1}+\varepsilon_{2})^{2}}{192}  \Bigr)
        \Biggr) \, s^{-2} 
+ O(s^{-3}).
\end{align}
As far as the computations in \cite{BST25} go, the expansion coefficients 
on the right-hand side do not involve $\varepsilon_1, \varepsilon_2$ in the denominator. 
One can expect that the same property holds at all orders, and in particular, 
the following classical conformal block to exist: 
\begin{align}
{\mathscr W} & = - \varepsilon_1 \varepsilon_2 \log {\mathscr Z}  
\Bigl|_{(\varepsilon_1, \varepsilon_2) = (\hbar, 0)} 
\notag \\
& = \frac{a_{D}}{3} \, s 
-\frac{12 \, a_{D}^{2} + \hbar^{2} - 16 \, {m}^{2}}{12}\,\log s
+ \frac{68 \, a_{D}^{3} - (144 \, m^2-19 \, \hbar^2) \, a_D}{12} \, s^{-1} 
\notag \\
& \quad 
+ \frac{6000 \, a_{D}^{4} 
-(16512 \, {m}^{2} -  3672 \, \hbar^{2} ) \, a_D^2
+ 2816 \, {m}^{4} - 2272 \, \hbar^{2} {m}^{2}
+ 131 \, \hbar^{4} }{192} \, s^{-2}  + O(s^{-3}).
\label{eq:del-CCB-HII-1}
\end{align}
Then, our conjecture claims 

\begin{conj}
The classical conformal block \eqref{eq:del-CCB-HII-1} exists. 
Moreover, under the identification 
\begin{equation}
s = 8i \, t^{3/2},
\quad a_D = \frac{\nu}{2}, \quad {m} = \frac{\theta_\infty}{4},
\end{equation}
the accessory parameter \eqref{eq:final-AP-HII-1} and 
the classical conformal block \eqref{eq:del-CCB-HII-1}
are related as follows:
\begin{equation} \label{eq:conj-HII-1}
{\mathscr E} = \frac{d}{dt} {\mathscr W}. 
\end{equation} 
\end{conj}

We can observe that \eqref{eq:conj-HII-1}
holds up to $O(t^{-11/2})$.

\bigskip
\subsubsection{Type 2 expansion at $t \to \infty$} ~ 

\smallskip
\noindent
{\bf Series expansion of ${\mathscr E}$.} \, 
We perform the rescaling with 
$(d_{x}, d_{y}, d_{\mathscr E}) = (1/2, 3/2, 2)$ 
and take $\Lambda = t^{-3/2}$.
The rescaled spectral curve becomes
\begin{equation}
Q_{0}^{\rm res} = (X^2+1)^2+ {\mathscr G}_0
 + 2 \theta_\infty X  \Lambda 
\quad \xrightarrow{ \Lambda  \to 0 } \quad 
Q_0^{[0]} =  (X^2+1)^2+ {\mathscr G}^{[0]}_0.
\end{equation}
We take ${\mathscr G}_0^{[0]} = -1$
to have the limiting spectral curve 
\begin{equation}
Y^2 = Q_0^{[0]}(X) = X^2(X^2+2)
\end{equation}
is of genus $0$. 
Then, in this limit, it is readily seen that two of the three original zeros of 
$Q^{\rm res}_0$ coalesce at $X = 0$, 
and the limiting curve ${\mathcal C}^{\rm res}_{\rm deg}$ is a genus $0$ curve. 
Therefore, for sufficiently small $\Lambda$, 
we choose the cycle $\gamma$ to be a circle that encloses the two turning points 
which tend to the double zero $X=0$ of $Q_{0}^{[0]}$ inside, 
while keeping the other two turning point outside.
Then, the elliptic integrals defining the Voros period are described 
by the difference of the residues at $X = 0$ of the coefficients in the small $\Lambda$-expansion.
 
When imposing the condition
\begin{equation}
V_{-1} = 2 \pi i \nu, \quad 
V_{0} = - \pi i, \quad
V_{m} = 0 \quad (m \ge 1)
\end{equation} 
for the Voros period associated with $\gamma$
(which implies \eqref{eq:partial-IMD} with $-$-sign), 
the resulting computation of the accessory parameter is as follows:
\begin{subequations} \label{eq:AP-HII-2}
\begin{align} 
{\mathscr G}_0 & = 
 -1 + 2 \sqrt{2} \, \nu \Lambda - \frac{3 \, \nu^2 - 2 \,\theta_\infty^2}{4} \, \Lambda^2 
- \frac{\nu \, ( 17 \, \nu^2 -  48 \, \theta_\infty^2 )}{32\sqrt{2}} \, \Lambda^3 
\notag \\ & \quad
- \frac{ 375 \, \nu^{4} - 2496 \, \theta_{\infty}^{2} \nu^{2} + 64 \, \theta_{\infty}^{4} }{1024} \,\Lambda^4
+ O(\Lambda^5), \\
{\mathscr G}_1 & =  -\frac{3}{16} \,\Lambda^2 
- \frac{67 \, \nu}{128 \sqrt{2}} \, \Lambda^{3}
- \frac{1707 \,  \nu^{2} - 736 \, \theta_{\infty}^{2}}{2048} \, \Lambda^{4}
+ O(\Lambda^5), \\
{\mathscr G}_2 & = - \frac{1539}{16384}\, \Lambda^{4}
- \frac{305141 \, \nu}{262144 \sqrt{2}}\, \Lambda^{5} + O(\Lambda^6), \dots
\end{align}
\end{subequations}
Furthermore, by formally interchanging the order of $\hbar$-expansion and large $t$-expansion, 
we obtain the following formal series expansion for the accessory parameter:
\begin{align}
{\mathscr E} & = 
-t^{2} 
 + 2 \sqrt{2} \, \nu \, t^{1/2}
 - \frac{12 \, \nu^{2} - 8 \, \theta_{\infty}^{2} +3 \, \hbar^{2} }{16} \, t^{-1}
 - \frac{\nu \, \left(68 \, \nu^2 - 192 \, \theta_{\infty}^{2} + 67 \, \hbar^{2}\right) }{128 \sqrt{2}} \, t^{-5/2}
 \notag \\[+.3em]
 & \quad + \frac{6000 \, \nu^{4} 
- \left(39936 \, \theta_{\infty}^{2} - 13656 \, \hbar^{2} \right)\nu^{2} 
+ \left(1024 \, \theta_{\infty}^{4} - 5888 \, \hbar^{2} \theta_{\infty}^{2} +1539 \, \hbar^{4}  \right)
}{16384} \, t^{-4} + O(t^{-11/2}).
\label{eq:final-AP-HII-2}
\end{align}

\smallskip
\noindent
{\bf Comparison with ${\mathscr W}$.} \, 
Let us recall the quantum expansion of ${\mathscr Z}$, 
called ``square exp singularity of ${\rm QPII}$ / $H_1$ with heavy hyper", 
given in \cite[(4.132)--(4.138) in \S 4.9]{BST25}\footnote{
In \cite{BST25}, the notation $\alpha_D = {i a_D}/{\sqrt{2}}$ was used; here, we adopt a formulation expressed solely in terms of $a_D$. 
We also note that there is a typo in \cite[(4.135)]{BST25}, where the coefficient of $(\varepsilon_1+\varepsilon_2)^4 s^{-2}$ in \eqref{eq:CB-HII-2} is written as ${175}/{32}$.
}:  
\begin{align}
& - \varepsilon_1 \varepsilon_2 \log {\mathscr Z} \notag \\ 
& \quad = 
\frac{s^2}{192} - \frac{i\sqrt{2} \, a_D}{6} \, s 
+ \left(- \frac{a_D^2}{2} + \frac{16 \, m^2}{3}  +\frac{\varepsilon_1 \varepsilon_2}{12} \right) \, \log s
\notag \\
& \qquad + \left( \frac{17 i}{6\sqrt{2}} \, a_D^3 - \frac{i}{\sqrt{2}}\left(128 \, {m}^2 + \frac{31 \, \epsilon_1\epsilon_2}{12} +  \frac{5 \, (\epsilon_1+\epsilon_2)^2}{24}\right)a_D \right) \, s^{-1}
\notag \\ 
& \qquad 
+ \biggl( 
-\frac{125}{16} \, a_D^4
+ 
\left(
832 \, m^2
+ \frac{293 \, \varepsilon_1 \varepsilon_2}{16} 
+ \frac{55 \, (\varepsilon_1 + \varepsilon_2)^2}{32} 
\right) a_D^2
\notag \\ & \quad\quad 
- \frac{1024 \, m^4}{3} 
- \frac{160 \, \left(
\varepsilon_1 \varepsilon_2 - 2 \, (\varepsilon_1 + \varepsilon_2)^2
\right) \, m^2}{3}
- \frac{3 \, \varepsilon_1 \varepsilon_2)^2}{4} 
+ \frac{165 \, \varepsilon_1 \varepsilon_2 (\varepsilon_1 + \varepsilon_2)^2}{256}
+ \frac{175 \, (\varepsilon_1 + \varepsilon_2)^4}{256} 
\biggr) \, s^{-2}
\notag \\ 
&\qquad  
+ O(s^{-3}). 
\label{eq:CB-HII-2}
\end{align}
The result of \cite{BST25} suggests that the associated classical conformal block also exists:
\begin{align}
{\mathscr W} 
& = - \varepsilon_1 \varepsilon_2 \log {\mathscr Z}  
\Bigl|_{(\varepsilon_1, \varepsilon_2) = (\hbar, 0)} 
\notag \\
& 
= \frac{s^2}{192}
- \frac{i \sqrt{2} \, a_{D}}{6} \, s 
- \frac{3 \, a_{D}^{2} - 32 \, {m}^{2}}{6}\,\log s
+ \frac{i \sqrt{2} \, \left( 68 \, a_{D}^{3} - (3071 \, {m}^{2} + 5  \, \hbar^2) \, a_{D}  \right)}{48} \, s^{-1} 
\notag \\
& \quad 
-\frac{
6000 \, a_D^{4}
- (638976 \, {m}^{2} + 1320 \, \hbar^{2})  a_D^{2}
+ 262144 \, {m}^{4}
- 81920 \, \hbar^{2} {m}^{2}
- 525 \, \hbar^{4}}
{768} \,  s^{-2}
+ O(s^{-3}).
\label{eq:del-CCB-HII-2}
\end{align}
Then, our conjecture claims 

\begin{conj}
The classical conformal block \eqref{eq:del-CCB-HII-2} exists. 
Moreover, under the identification 
\begin{equation}
s = 8 i \, t^{3/2}, 
\quad a_D = \nu, \quad 
{m}^2 = \frac{\theta_\infty^2}{16} - \frac{3 \, \hbar^2}{128},
\end{equation}
the accessory parameter \eqref{eq:final-AP-HII-2}
and the classical conformal block \eqref{eq:del-CCB-HII-2}
are related as follows:
\begin{equation} \label{eq:conj-HII-2}
{\mathscr E} = \frac{d}{dt} {\mathscr W}. 
\end{equation} 
\end{conj}

We can observe that \eqref{eq:conj-HII-2}
holds up to $O(t^{-11/2})$.

\bigskip
\subsection{Reduced biconfluent Heun equation $H_{{\rm II}'}$}

The equation $H_{{\rm II}'}$ has the following potential
\begin{equation}
Q_{{\rm II}'} = \frac{\theta_0^2 - \frac{\hbar^2}{4}}{x^2} + \frac{{\mathscr E}}{x} + t + x.
\end{equation}

\subsubsection{Type 1 expansion at $t \to \infty$} ~ 

\smallskip
\noindent
{\bf Series expansion of ${\mathscr E}$.} \, 
We perform the rescaling with 
$(d_{x}, d_{y}, d_{\mathscr E}) = (-1/2, 0, 1/2)$ 
and take $\Lambda = t^{-3/2}$.
The rescaled spectral curve becomes
\begin{equation}
Q_{0}^{\rm res} = \left( \frac{\theta_0^2}{X^2} +\
\frac{{\mathscr G}_0}{X} + 1 \right) + X \Lambda 
\quad \xrightarrow{ \Lambda  \to 0 } \quad 
Q_0^{[0]} = \frac{\theta_0^2}{X^2} + 
\frac{{\mathscr G}_0^{[0]}}{X} + 1. 
\end{equation}
In this limit, it is readily seen that one of the three original zeros of 
$Q^{\rm res}_0$ coalesces at the infinity, 
and the limiting curve ${\mathcal C}^{\rm res}_{\rm deg}$ is a genus $0$ curve 
(which is called the Kummer curve in \cite{IKT18-2}). 
Therefore, for sufficiently small $\Lambda$, 
we choose the cycle $\gamma$ to be a circle that encloses the two turning points 
which tend to the zeros of $Q_{0}^{[0]}$ together with the origin inside, 
while keeping the turning point that merges with infinity outside.
Then, the elliptic integrals defining the Voros period are described 
by the residues at $X = \infty$ of the coefficients in the small $\Lambda$-expansion.

When imposing the condition 
\begin{equation}
V_{-1} = 2 \pi i \nu, \quad 
V_{m} = 0 \quad (m \ge 0)
\end{equation} 
for the Voros period associated with $\gamma$
(which implies \eqref{eq:partial-IMD} with $+$-sign), 
the resulting computation of the accessory parameter is as follows:
\begin{subequations} \label{eq:AP-HII'-1}
\begin{align} 
{\mathscr G}_0 & = - 2 \nu - \frac{3 \, \nu^2 - \theta_0^2}{2} \, \Lambda  
+ \frac{17 \, \nu^3 -  9 \, \theta_0^2}{8} \, \Lambda ^2 
-\frac{375 \, \nu^{4} - 258 \, \theta_{0}^{2} \nu^{2} +11 \, \theta_{0}^{4}}{64} \,\Lambda ^3
+ O(\Lambda ^4), \\
{\mathscr G}_1 & =  -\frac{1}{8} \,\Lambda 
+ \frac{19 \, \nu}{32} \, \Lambda^{2}
- \frac{459 \, \nu^{2} - 71 \, \theta_{0}^{2}}{128} \, \Lambda^{3}
+ O(\Lambda^4), \\
{\mathscr G}_2 & = -\frac{131}{1024}\, \Lambda^{3}
+ \frac{7287 \, \nu}{8192}\, \Lambda^{4} + O(\Lambda^5), \dots
\end{align}
\end{subequations}
Furthermore, by formally interchanging the order of 
$\hbar$-expansion and large $t$-expansion, 
we obtain the following formal series expansion for the accessory parameter:
\begin{align}
{\mathscr E} & = 
-2 \nu \, t^{1/2} - \frac{4 \, (3 \, \nu^2 - \theta_0^2) + \hbar^2}{8} \, t^{-1} 
+ \frac{\nu \, (68 \, \nu^2 - 36 \, \theta_0^2 + 19 \, \hbar^2)}{32} \, t^{-5/2} 
\notag \\
& \quad - \frac{6000 \, \nu^4 - (4128 \, \theta_0^2 - 3672 \, \hbar^2) \, \nu^2 
+  176 \, \theta_0^4 - 568 \, \hbar^2  \theta_0^2  + 131 \, \hbar^4}{1024} \, t^{-4}
+ O(t^{-11/2}).
\label{eq:final-AP-HII'-1}
\end{align}

\smallskip
\noindent
{\bf Comparison with ${\mathscr W}$.} \, 
The series expansion of relevant classical conformal block 
is computed in \eqref{eq:del-CCB-HII-1}. 
Our conjecture claims

\begin{conj}
The classical conformal block \eqref{eq:del-CCB-HII-1} exists. 
Moreover, under the identification 
\begin{equation}
s = 4 \, t^{3/2}, \quad a_D = - \nu, \quad {m} = \frac{\theta_0}{2},
\end{equation}
the accessory parameter \eqref{eq:final-AP-HII'-1} and
the classical conformal block \eqref{eq:del-CCB-HII-1}
are related as follows:
\begin{equation} \label{eq:conj-HII'-1}
{\mathscr E} = \frac{d}{dt} {\mathscr W}. 
\end{equation} 
\end{conj}

We can observe that \eqref{eq:conj-HII'-1} holds up to $O(t^{-11/2})$.


\bigskip
\subsubsection{Type 2 expansion at $t \to \infty$} ~ 

\smallskip
\noindent
{\bf Series expansion of ${\mathscr E}$.} \, 
We perform the rescaling with 
$(d_{x}, d_{y}, d_{\mathscr E}) = (1, 3/2, 2)$ 
and take $\Lambda = t^{-3/2}$.
The rescaled spectral curve becomes
\begin{equation}
Q_{0}^{\rm res} =  1 + X + 
\frac{{\mathscr G}_0}{X} 
+  \frac{\theta_0^2}{X^2} \, \Lambda
\quad \xrightarrow{ \Lambda \to 0 } \quad 
Q_0^{[0]} = 1 + X + 
\frac{{\mathscr G}_0^{[0]}}{X}. 
\end{equation}

We take ${\mathscr G}_0^{[0]} = {1}/{4}$
to have the limiting spectral curve 
\begin{equation}
    Y^2 = Q_0^{[0]}(X) = \frac{(X+\frac{1}{2})^2}{X}
\end{equation} 
is of genus $0$. 
Then, in this limit, it is readily seen that two of the three original zeros of 
$Q^{\rm res}_0$ coalesce at $X = -1/2$, 
and the limiting curve ${\mathcal C}^{\rm res}_{\rm deg}$ is a genus $0$ curve. 
Therefore, for sufficiently small $\Lambda$, 
we choose the cycle $\gamma$ to be a circle that encloses the two turning points 
which tend to the double zero of $Q_{0}^{[0]}$ inside, 
while keeping the other turning point outside.
Then, the elliptic integrals defining the Voros period are described 
by the residues at $X = -1/2$ of the coefficients in the small $\Lambda$-expansion.

When imposing the condition
\begin{equation}
V_{-1} = 2 \pi i \nu, \quad 
V_{0} = - \pi i, \quad
V_{m} = 0 \quad (m \ge 1)
\end{equation} 
for the Voros period associated with $\gamma$
(which implies \eqref{eq:partial-IMD} with $-$-sign), 
the resulting computation of the accessory parameter is as follows:
\begin{subequations} \label{eq:AP-HII'-2}
\begin{align} 
{\mathscr G}_0 & = \frac{1}{4} - {i \sqrt{2}  \nu} \, \Lambda  
- \frac{3 \, \nu^2 - 8 \, \theta_0^2}{4} \, \Lambda ^2 
- \frac{i \nu \, ( 17 \, \nu^2 - 192 \, \theta_0^2)}{16 \sqrt{2}} \,\Lambda ^3
+ O(\Lambda ^4), \\
{\mathscr G}_1 & =  -\frac{3}{16} \,\Lambda^2 
- \frac{67 \, i \nu}{64 \sqrt{2}} \, \Lambda^{3}
+ \frac{1707 \, \nu^{2} - 2944 \, \theta_{0}^{2}}{512} \, \Lambda ^{4}
+ O(\Lambda^5), \\
{\mathscr G}_2 & = \frac{1539}{4096}\, \Lambda^{4}
+ \frac{305141 \, i \nu}{32768\sqrt{2}}\, \Lambda^{5} + O(\Lambda^6), \dots
\end{align}
\end{subequations}
Furthermore, by formally interchanging the order of $\hbar$-expansion and large $t$-expansion, 
we obtain the following formal series expansion for the accessory parameter:
\begin{align}
{\mathscr E} & = 
\frac{t^2}{4} - i\sqrt{2} \, t^{1/2}
- \frac{12 \, \nu^2 - 32 \, \theta_0^2 + 3 \, \hbar^2}{16} \, t^{-1} 
- \frac{i \nu \,(68 \, \nu^2 - 768 \, \theta_0^2 + 67 \, \hbar^2)}{64\sqrt{2}} \,t^{-5/2}
\notag \\
& \quad + \frac{6000 \, \nu^4 - 24\,(6656 \,\theta_0^2 - 569 \,\hbar^2) \nu^2 
+  16384 \, \theta_0^4 - 23552 \, \hbar^2 \theta_0^2   + 1539 \, \hbar^4}{4096} t^{-4}
+ O(t^{-11/2}).
\label{eq:final-AP-HII'-2}
\end{align}

\smallskip
\noindent
{\bf Comparison with ${\mathscr W}$.} \, 
The series expansion of relevant classical conformal block 
is computed in \eqref{eq:del-CCB-HII-2}. 
Our conjecture claims

\begin{conj}
The classical conformal block \eqref{eq:del-CCB-HII-2} exists. 
Moreover, under the identification 
\begin{equation}
s = 4 \,t^{3/2}, \quad a_D = \nu, \quad {m}^2 = \frac{\theta_0^2}{4} - \frac{3 \, \hbar^2}{128},
\end{equation}
the accessory parameter \eqref{eq:final-AP-HII'-2} and
the classical conformal block \eqref{eq:del-CCB-HII-2}
are related as follows:
\begin{equation} \label{eq:conj-HII'-2}
{\mathscr E} = \frac{d}{dt} {\mathscr W}. 
\end{equation} 
\end{conj}
 
We can observe that \eqref{eq:conj-HII'-2} holds up to $O(t^{-11/2})$.

\bigskip 
\subsection{Reduced triconfluent Heun equation $H_{\rm I}$}

The equation $H_{{\rm I}}$ has the following potential
\begin{equation}
Q_{{\rm I}} = 4x^3 + 2t x + {\mathscr E}.
\end{equation}
In the following, we derive the expansion of ${\mathscr E}$ as $t \to \infty$, 
and test the Zamolodchikov-type conjecture.

\smallskip
\noindent
{\bf Series expansion of ${\mathscr E}$.} \, 
We perform the rescaling with 
$(d_{x}, d_{y}, d_{\mathscr E}) = (1/2, 5/4, 3/2)$ 
and take $\Lambda = t^{-5/4}$.
The rescaled spectral curve becomes
\begin{equation}
Q_{0}^{\rm res} = 4X^3+2X+{\mathscr G}_0
\quad \xrightarrow{ \Lambda \to 0 } \quad 
Q_0^{[0]} =  4X^3+2X+{\mathscr G}_0^{[0]}. 
\end{equation}
We set $q = \sigma i \sqrt{{1}/{6}}$ 
with $\sigma = +1$ or $-1$, and 
take 
${\mathscr G}_0^{[0]} = - {4q}/{3}$
to have the limiting spectral curve 
\begin{equation}
Y^2 = Q_0^{[0]}(X) = 4(X-q)^2(X+2q)
\end{equation}
is of genus $0$. 
Then, in this limit, it is readily seen that two of the three original zeros of 
$Q^{\rm res}_0$ coalesce at $X = q$. 
Therefore, for sufficiently small $\Lambda$, 
we choose the cycle $\gamma$ to be a circle that encloses the two turning points 
which tend to the double zero of $Q_{0}^{[0]}$ inside, 
while keeping the other turning point outside.
Then, the elliptic integrals defining the Voros period are described 
by the residues at $X = q$ of the coefficients in the small $\Lambda$-expansion.

When imposing the condition 
\begin{equation}
V_{-1} = 2 \pi i \nu, \quad 
V_{0} = - \pi i, \quad
V_{m} = 0 \quad (m \ge 1)
\end{equation} 
for the Voros period associated with $\gamma$
(which implies \eqref{eq:partial-IMD} with $-$-sign), 
the resulting computation of the accessory parameter is as follows\footnote{
The choice of the branch of $\sqrt{q}$ depends on the choice of 
the branch of $\sqrt{Q_0}$ along $\gamma$. 
Since this choice causes no issue in the later comparison 
with the classical conformal block, we do not discuss the choice of the branch here.}: 
\begin{subequations} \label{eq:AP-HI}
\begin{align} 
{\mathscr G}_0 & = -\frac{4q}{3}
+ 4\sqrt{3q}\, \nu \Lambda
- \frac{5 \nu^2}{4} \, \Lambda^2
- \frac{235\, \nu^3}{192 \sqrt{3q}} \, \Lambda^3
- \frac{38585\,\nu^4}{1492992 \,q} \, \Lambda^4 
+ O(\Lambda^5), \\
{\mathscr G}_1 & = -\frac{7}{48} \, \Lambda^2
- \frac{385\, \nu}{768\sqrt{3q}} \, \Lambda^3
- \frac{69685\, \nu^2}{110592\,q} \, \Lambda^4
+ O(\Lambda^5), \\
{\mathscr G}_2 & = -\frac{101479\,}{2654208\,q} \, \Lambda^4
- \frac{43147783\,\nu}{84934656 \,q\sqrt{3q}} \, \Lambda^5  + O(\Lambda^6), \dots
\end{align}
\end{subequations}
Furthermore, by formally interchanging the order of $\hbar$-expansion and large $t$-expansion, 
we obtain the following formal series expansion for the accessory parameter:
\begin{align}
{\mathscr E} & = 
-\frac{4}{3} q\, t^{3/2}
+ 4\sqrt{3q} \, \nu\, t^{1/4}
- \frac{60 \, \nu^2 + 7 \,\hbar^2}{48} \, t^{-1}
- \frac{5 \,(188 \, \nu^3+77 \, \hbar^2\nu)}{768 \sqrt{3q}} \, t^{-9/4}
\notag \\
& \quad - \frac{1852080 \, \nu^4 + 1672440 \, \hbar^2\nu^2 + 101479 \, \hbar^4}{2654208\,q} \, t^{-7/2}
+ O(t^{-19/4}).
\label{eq:final-AP-HI}
\end{align}

\smallskip
\noindent
{\bf Comparison with ${\mathscr W}$.} \, 
Let us recall the quantum expansion of ${\mathscr Z}$, 
called ``${\rm QPI}$ / $H_0$", 
given in \cite[(4.143)--(4.146) in \S 4.10]{BST25}:  
\begin{align}
& - \varepsilon_1 \varepsilon_2 \log {\mathscr Z} 
\notag \\
& \quad = - \frac{s^2}{103680}
+ \frac{a_D}{60} \, s
- \left(
\frac{a_D^2}{2}
 - \frac{2\,\varepsilon_1 \varepsilon_2-7 \,(\varepsilon_1+\varepsilon_2)^2}{120}
  \right)\log s 
\notag \\
& \qquad   +
   \left( 47 \, a_D^3 - \frac{34\,\varepsilon_1\varepsilon_2 -77\,(\epsilon_1+\epsilon_2)^2}{4} a_D \right) \, s^{-1}
  \notag \\[+.2em] 
&  \qquad  + 
  \Biggl( 
  \frac{7717}{2} \, a_D^4
  - \frac{7354\,\varepsilon_1\varepsilon_2
    - 13937\, (\varepsilon_1+\varepsilon_2)^2}{4}a_D^2
   \notag \\ & \quad\quad 
  + 7 \Bigl(
      \frac{24\,(\varepsilon_1\varepsilon_2)^2}{5}
      - \frac{4597\,\varepsilon_1\varepsilon_2 (\varepsilon_1+\varepsilon_2)^2}{120}
      + \frac{14497 \, (\varepsilon_1+\varepsilon_2)^4}{480}
    \Bigr)
\Biggr) \, s^{-2}  + O(s^{-3}).     
\label{eq:CB-HI}
\end{align}
The series is expected to agree with the irregular conformal block of rank $5/2$ introduced in \cite{PP23, IILZ25}.
The result of \cite{BST25} suggests that
the associated classical conformal block also exists\footnote{
To compare with the accessory parameter \eqref{eq:final-AP-HI}, 
we set $\varepsilon_1 = \sqrt{2} \hbar$ instead of  $\varepsilon_1 = \hbar$.
}:
\begin{align}
{\mathscr W} & = - \varepsilon_1 \varepsilon_2 \log {\mathscr Z}  
\Bigl|_{(\varepsilon_1, \varepsilon_2) = (\sqrt{2} \, \hbar, 0)} 
= -\frac{s^2}{103680}
+ \frac{a_D}{60} \, s
- \left( \frac{30 \, a_D^2 + 7 \, \hbar^2}{60} \right) \log s
\notag \\
& \quad
+ \frac{94 \, a_D^3 + 77 \, \hbar^2  a_D}{2} \, s^{-1}
+ \frac{463020\, a_D^{4} + 836220 \, \hbar^{2} a_D^{2} + 101479\, \hbar^{4}}{120} \, s^{-2} + O(s^{-3}).
\label{eq:CCB-HI}
\end{align}
Then, our conjecture claims 

\begin{conj}
The classical conformal block \eqref{eq:CCB-HI} exists. 
Moreover, under the identification 
\begin{equation}
s = 96 \sqrt{6 q} \, t^{5/4}, \quad a_D = \sqrt{2} \, \nu, 
\end{equation}
the accessory parameter \eqref{eq:final-AP-HI} and
the classical conformal block \eqref{eq:CCB-HI}
are related as follows:
\begin{equation} \label{eq:conj-HI}
{\mathscr E} =  \frac{d}{dt}  {\mathscr W}. 
\end{equation} 
\end{conj}

We can observe that \eqref{eq:conj-HI} holds up to $O(t^{-19/4})$.



\appendix

\bigskip
\section{Introduction of $\hbar$ via scaling}
\label{section:introduce-hbar}

{\footnotesize 

\begin{table}[t]
  \begin{tabular}{l|l|l} \hline
     $J$ & Name of equation & $V_{J}(x)$ \\ \hline \hline
\parbox[c][5.5em][c]{0em}{}
VI
&
Heun
& ~~~\quad
\begin{minipage}{.42\textwidth}
\begin{center}
$\displaystyle \frac{\theta_0^2 - \frac{1}{4}}{x^2} + \frac{\theta_1^2 - \frac{1}{4}}{(x-1)^2} 
+ \frac{\theta_t^2 - \frac{1}{4}}{(x-t)^2}$    \\[+.5em]
$\displaystyle + \frac{\theta_\infty^2 - \theta_0^2 - \theta_1^2 - \theta_t^2 + \frac{1}{2}}{x(x-1)} 
- \frac{{\mathscr E}}{x(x-1)(x-t)}$
\end{center}
\end{minipage}
\\\hline
\parbox[c][3.3em][c]{0em}{}
V
&
confluent Heun
& ~~~\quad
\begin{minipage}{.42\textwidth}
\begin{center}
$\displaystyle \frac{\theta_0^2 - \frac{1}{4}}{x^2}+ \frac{\theta_t^2 - \frac{1}{4}}{(x-t)^2} 
+ \frac{1}{4} + \frac{\theta_\infty}{x} - \frac{{\mathscr E}}{x(x-t)}$ 
\end{center}
\end{minipage}
\\\hline
\parbox[c][3.3em][c]{0em}{}
IV
&
biconfluent Heun
& ~~~\quad
\begin{minipage}{.42\textwidth}
\begin{center}
$\displaystyle \frac{\theta_0^2 - \frac{1}{4}}{x^2} + \frac{{\mathscr E}}{x} + 2 \theta_\infty + (x+t)^2$
\end{center}
\end{minipage}
\\\hline
\parbox[c][3.3em][c]{0em}{}
${\rm III}_1$
&
doubly confluent Heun
& ~~~\quad
\begin{minipage}{.42\textwidth}
\begin{center}
$\displaystyle \frac{t^2}{4x^4}  + \frac{t \theta_0}{x^3} - \frac{{\mathscr E}}{x^2} + \frac{\theta_\infty}{x} + \frac{1}{4}$ 
\end{center}
\end{minipage}
\\\hline

\parbox[c][3.3em][c]{0em}{}
${\rm III}_2$
&
reduced doubly confluent Heun
& ~~~\quad
\begin{minipage}{.42\textwidth}
\begin{center}
$\displaystyle \frac{t}{x^3} - \frac{{\mathscr E}}{x^2} + \frac{\theta_\infty}{x} + \frac{1}{4}$ 
\end{center}
\end{minipage}
\\\hline
\parbox[c][3.3em][c]{0em}{}
${\rm III}_3$
&
doubly reduced doubly confluent Heun
& ~~~\quad
\begin{minipage}{.42\textwidth}
\begin{center}
$\displaystyle \frac{t}{x^3} - \frac{{\mathscr E}}{x^2} + \frac{1}{x}$ 
\end{center}
\end{minipage}
\\\hline

\parbox[c][3.3em][c]{0em}{}
II
&
triconfluent Heun
& ~~~\quad
\begin{minipage}{.42\textwidth}
\begin{center}
$\displaystyle (x^2+t)^2 + 2 \theta_\infty x + {\mathscr E}$ 
\end{center}
\end{minipage}
\\\hline

\parbox[c][3.3em][c]{0em}{}
${\rm II}'$
&
reduced biconfluent Heun
& ~~~\quad
\begin{minipage}{.42\textwidth}
\begin{center}
$\dfrac{\theta_0^2 - \frac{1}{4}}{x^2}
+ \dfrac{{\mathscr E}}{x} + t + x$ 
\end{center}
\end{minipage}
\\\hline

\parbox[c][3.3em][c]{0em}{}
I
&
reduced triconfluent Heun
& ~~~\quad
\begin{minipage}{.42\textwidth}
\begin{center}
$\displaystyle 4x^3 + 2 t x + {\mathscr E}$ 
\end{center}
\end{minipage}
\\\hline

  \end{tabular}
\end{table}

}

The list of original (i.e., $\hbar = 1$ case of) Heun equations in the SL-form is given as follows
(cf.\, \cite[Table 1]{LN21}):
\[
\frac{d^2 \psi}{dx^2} = V_J(x) \psi
\]
Below, we will review how to introduce the small parameter $\hbar$ to these equations. 

\smallskip
\begin{itemize}
\item  \underline{Heun equation $H_{\rm VI}$.}  
~~ Applying the rescaling 
\[
{\mathscr E} \mapsto \hbar^{-2} {\mathscr E}, \quad 
\theta_0 \mapsto \hbar^{-1} \theta_0, \quad
\theta_1 \mapsto \hbar^{-1} \theta_1, \quad
\theta_t \mapsto \hbar^{-1} \theta_t, \quad
\theta_\infty \mapsto \hbar^{-1} \theta_\infty,
\]
we have $V_{\rm VI}(x) \, dx^2 \mapsto \hbar^{-2} Q_{\rm VI}(x) \, dx^2$ with 
\[
Q_{\rm VI} = \frac{\theta_0^2 - \frac{\hbar^2}{4}}{x^2}+  \frac{\theta_1^2 - \frac{\hbar^2}{4}}{(x-1)^2} +
\frac{\theta_t^2 - \frac{\hbar^2}{4}}{(x-t)^2} 
+ \frac{\theta_\infty^2 - \theta_0^2 - \theta_1^2 - \theta_t^2+\frac{\hbar^2}{2}}{x(x-1)} 
- \frac{{\mathscr E}}{x(x-1)(x-t)}.
\]

\smallskip
\item  \underline{Confluent Heun equation $H_{\rm V}$.} 
~~ Applying the rescaling 
\[
x \mapsto \hbar^{-1}x, \quad 
t \mapsto \hbar^{-1}t, \quad
{\mathscr E} \mapsto \hbar^{-2} {\mathscr E}, \quad
\theta_0 \mapsto \hbar^{-1} \theta_0, \quad
\theta_t \mapsto \hbar^{-1} \theta_t, \quad
\theta_\infty \mapsto \hbar^{-1} \theta_\infty , 
\]
we have $V_{\rm V}(x) \, dx^2 \mapsto \hbar^{-2} Q_{\rm V}(x) \, dx^2$ with 
\[
Q_{\rm V} = \frac{\theta_0^2 - \frac{\hbar^2}{4}}{x^2}+ \frac{\theta_t^2 - \frac{\hbar^2}{4}}{(x-t)^2} 
+ \frac{1}{4} + \frac{\theta_\infty}{x} - \frac{{\mathscr E}}{x(x-t)}.
\]

\smallskip
\item  \underline{Biconfluent Heun equation $H_{\rm IV}$.} 
~~ Applying the rescaling 
\[
x \mapsto \hbar^{-1/2}x, \quad 
t \mapsto \hbar^{-1/2}t, \quad
{\mathscr E} \mapsto \hbar^{-3/2} {\mathscr E}, \quad
\theta_0 \mapsto \hbar^{-1} \theta_0, \quad
\theta_\infty \mapsto \hbar^{-1} \theta_\infty,
\]
then we have $V_{\rm IV}(x) \, dx^2 \mapsto \hbar^{-2}  Q_{\rm IV}(x) \, dx^2$ with 
\[
Q_{\rm IV} = 
\frac{\theta_0^2 - \frac{\hbar^2}{4}}{x^2} + \frac{{\mathscr E}}{x} + 2 \theta_\infty + (x+t)^2.
\]

\smallskip
\item  \underline{Doubly confluent Heun equation $H_{{\rm III}_1}$.} 
~~  Applying the rescaling 
\[
x \mapsto \hbar^{-1}x, \quad 
t \mapsto \hbar^{-2}t, \quad
{\mathscr E} \mapsto \hbar^{-2} {\mathscr E}, \quad
\theta_0 \mapsto \hbar^{-1} \theta_0, \quad
\theta_\infty \mapsto \hbar^{-1} \theta_\infty,
\]
then we have $V_{{\rm III}_1}(x) \, dx^2 \mapsto \hbar^{-2}  Q_{{\rm III}_1}(x) \, dx^2$ with 
\[
Q_{{\rm III}_1} = \displaystyle \frac{t^2}{4x^4}  + \frac{t \theta_0}{x^3} - \frac{{\mathscr E}}{x^2} + \frac{\theta_\infty}{x} + \frac{1}{4}. 
\]



\smallskip
\item  \underline{Reduced doubly confluent Heun equation $H_{{\rm III}_2}$.} 
~~  Applying the rescaling 
\[
x \mapsto \hbar^{-1}x, \quad 
t \mapsto \hbar^{-3}t, \quad
{\mathscr E} \mapsto \hbar^{-2} {\mathscr E}, \quad
\theta_\infty \mapsto \hbar^{-1} \theta_\infty,
\]
then we have $V_{{\rm III}_2}(x) \, dx^2 \mapsto \hbar^{-2}  Q_{{\rm III}_2}(x) \, dx^2$ with 
\[
Q_{{\rm III}_2} = \displaystyle \frac{t}{x^3} - \frac{{\mathscr E}}{x^2} 
+ \frac{\theta_\infty}{x} + \frac{1}{4}. 
\]

\smallskip
\item  \underline{Doubly reduced doubly confluent Heun equation $H_{{\rm III}_3}$.} 
~~  Applying the rescaling 
\[
x \mapsto \hbar^{-2}x, \quad 
t \mapsto \hbar^{-4}t, \quad
{\mathscr E} \mapsto \hbar^{-2} {\mathscr E}, 
\]
then we have $V_{{\rm III}_3}(x) \, dx^2 \mapsto \hbar^{-2}  Q_{{\rm III}_3}(x) \, dx^2$ with 
\[
Q_{{\rm III}_3} = \displaystyle \frac{t}{x^3} - \frac{{\mathscr E}}{x^2} + \frac{1}{x}. 
\]

\smallskip
\item  \underline{Triconfluent Heun equation $H_{\rm II}$.}
~~  Applying the rescaling 
\[
x \mapsto \hbar^{-1/3}x, \quad 
t \mapsto \hbar^{-2/3}t, \quad
{\mathscr E} \mapsto \hbar^{-4/3} {\mathscr E}, \quad
\theta_\infty \mapsto \hbar^{-1} \theta_\infty,
\]
then we have $V_{\rm II}(x) \, dx^2 \mapsto \hbar^{-2} Q_{\rm II}(x) \, dx^2$ with 
\[
Q_{\rm II} = (x^2+t)^2 + 2 \theta_\infty x + {\mathscr E}.
\]

\smallskip
\item  \underline{Reduced biconfluent Heun equation $H_{{\rm II}'}$.}
~~  Applying the rescaling 
\[
x \mapsto \hbar^{-2/3}x, \quad 
t \mapsto \hbar^{-2/3}t, \quad
{\mathscr E} \mapsto \hbar^{-4/3} {\mathscr E}, \quad
\theta_0 \mapsto \hbar^{-1} \theta_0,
\]
then we have $V_{\rm II'}(x) \, dx^2 \mapsto \hbar^{-2} Q_{\rm II'}(x) \, dx^2$ with 
\[
Q_{\rm II'} = 
\frac{\theta_0^2 - \frac{\hbar^2}{4}}{x^2} + 
\frac{{\mathscr E}}{x} + t + x.
\]

\smallskip
\item  \underline{Reduced triconfluent Heun equation $H_{\rm I}$.}
~~  Applying the rescaling 
\[
x \mapsto \hbar^{-2/5}x, \quad 
t \mapsto \hbar^{-4/5}t, \quad
{\mathscr E} \mapsto \hbar^{-6/5} {\mathscr E},
\]
then we have $V_{\rm I}(x) \, dx^2 \mapsto \hbar^{-2} Q_{\rm I}(x) \, dx^2$ with 
\[
Q_{\rm I} = 4x^3 + 2 t x + {\mathscr E}.
\]

\end{itemize}

\bigskip

\end{document}